\DeclareMathSymbol{\not}{\mathrel}{symbols}{"36}
\newcommand{\point}{\par\noindent$\bullet$ \ }
\newcommand{\dsize}{\textstyle}
\newcommand{\R}{{\mathbb R}}
\newcommand{\Z}{{\mathbb Z}}
\newcommand{\N}{{\mathbb N}}
\newcommand{\C}{{\mathbb C}}
\newcommand{\QV}{{\mathcal Q}}
\newcommand{\re}{{\rm Re}\,}
\newcommand{\im}{{\rm Im}\,}
\newcommand{\Ai}{{\mathrm Ai}\,}
\theoremstyle{plain}
\newtheorem{Th}{Theorem}[section]
\newtheorem{Le}{Lemma}[section]
\newtheorem{Pro}{Proposition}[section]
\newtheorem{Cor}{Corollary}[section]
\theoremstyle{definition}
\newtheorem{Rem}{Remark}[section]
\begin{document}
%
\title{Adiabatic evolution generated by a one-dimensional 
Schr\"odinger operator with decreasing number of eigenvalues} 
\author{Alexander Fedotov}
\address{Saint Petersburg State University, 7/9 Universitetskaya nab., 
St.Petersburg, 199034 Russia}
\email{a.fedotov@spbu.ru}
\begin{abstract}
We study a one-dimensional non-stationary Schr\"o\-dinger equation 
with a potential slowly depending on time. The corresponding 
stationary operator depends on time as on a parameter. It has a 
finite number of negative eigenvalues and absolutely continuous 
spectrum filling $[0,+\infty)$.  The eigenvalues move with time 
to the edge of the continuous spectrum and, having reached it, 
disappear one after another. We describe the asymptotic behavior of a solution 
close at some moment to an eigenfunction of the stationary operator, and,
in particular, the phenomena occurring when the corresponding  eigenvalue 
approaches the absolutely continuous spectrum and disappears.
\end{abstract}
\keywords{Adiabatic evolution, Schr\"odinger equation, eigenvalues,  
absolutely continuous spectra, eigenvalue disappearence}
\thanks{The present work was supported by the Russian foundation of basic 
research under grant  14-01-00760-a and by the Saint Petersburg State University
under grant 11.38.263.2014.}
\maketitle
\section{Introduction}
As $\varepsilon\to 0$, we study solutions to the  Schr\"odinger equation
\begin{equation}\label{non-stat-eq}
i\frac{\partial\psi}{\partial t}=
-\frac{\partial^2\psi}{\partial x^2}+v(x,\varepsilon t) \psi. 
\quad x>0,\qquad \left.\psi\right|_{x=0}=0.
\end{equation}
When  $\varepsilon$ is small, one says that~\eqref{non-stat-eq} 
describes the {\it adiabatic evolution} in $L_2(\R_+)$ generated  by 
the {\it stationary} operator $H(\varepsilon t)=
-\frac{\partial^2}{\partial x^2}+v(x,\varepsilon t)$ with the Dirichlet 
boundary condition at zero. Note that this operator depends on time 
as on a parameter. 

The adiabatic evolution generated by differential operators 
is a classical object  of study in mathematical physics and, 
in particular, in quantum mechanics, see the review in~\cite{Av-El:99}. 
The formulation of one of the standard problems assumes that 
the spectrum of the corresponding stationary operator is discrete. 
For this problem, an important role is played by the  solutions that, 
at some moment in time, are close to the eigenfunctions of the 
stationary operator. In the case of~\eqref{non-stat-eq}, instead 
of such solutions, physicists often use formal asymptotic series 
of the form
\begin{equation}\label{adia} 
e^{-\frac{i}\varepsilon\int\limits_{\tau_0}^{\varepsilon t}E_n(\tau)\,d\tau}\,
\sum_{m=0}^{\infty}\varepsilon^m\,\psi_{n,m}(x,\varepsilon t),\quad \varepsilon\to0,
\end{equation}
where $E_n(\tau)$ is an eigenvalue of  $H(\tau)$, 
$\psi_{n,0}(\cdot,\tau)$ is the corresponding eigenfunction,
and  $\tau_0$ is a fixed number.     

We consider a model problem with an  operator $H(\varepsilon t)$  
having a finite number of negative eigenvalues and absolutely continuous 
spectrum filling $\R_+$. The eigenvalues move with time toward the 
continuous spectrum and, having reached it, disappear one after another. 
We study a solution having the asymptotics of the form~\eqref{adia} as 
long as the $n$-th eigenvalue exists. There is a whole bunch of effects
occurring as a result of ``absorption'' of the eigenvalue by the continuous 
spectrum. In this paper, we begin to analyze them. Some of our results 
were announced in the note~\cite{F-S:2016}. 
\section{Main results}
Here, we describe the potential $v$ that we consider, the solution 
to~\eqref{non-stat-eq} that we study, and the asymptotics of this solutions 
that we get in this paper.
\subsection{The model we consider}
We assume that $0<\varepsilon<1$ and  study the case where
\begin{equation}\label{potential}
v(x,\tau)=\begin{cases} -1&\text{if \ } 0\le x\le 1-\tau,\\
\ 0 & \text{otherwise}.\end{cases}
\end{equation}
This allows to construct solutions to~\eqref{non-stat-eq} using ideas similar
to ideas  of the Sommer\-feld-Malyuzhinets method developed to study scattering 
of waves on wedge-shaped domains~\cite{B-L-G:2008}: by means of a suitable 
integral transform, one expresses solutions of~\eqref{non-stat-eq} in terms 
of solutions of a difference equation on the complex plane. In our case, 
the translation parameter in this equation equals $\varepsilon$. 
The smallness of  $\varepsilon$ enables to carry out an effective analysis.

In this paper, to keep its size reasonable,  
we mostly restrict ourselves to the analysis of solutions inside 
the potential well. We also assume that $\varepsilon t\le 1$. 
\subsection{The solution we study}\label{ss:Ansatz}
In section~\ref{gen-sol}, we construct a solution $\Psi$ having the following
simple wave interpretation. It depends on a parameter $p\in\R$, and
inside the wedge $W=\{(x,t)\in\R^2\,:\,0\le x\le 1-\varepsilon t\}$, \ 
$\Psi$ is a linear combination of the plane wave $e^{-i(p^2-1)t+ipx}$ and all 
the plane waves that can be obtained from it by reflections from the 
boundaries of $W$. Outside the wedge, $\Psi$ can be considered 
as a linear combination of all the refracted waves. 
Describe briefly the construction of $\Psi$. We choose the Ansatz
\begin{equation}\label{sol-form}
\Psi(x,t,p)=\frac1{\sqrt{\pi}}
\begin{cases}\ e^{it}\sum_{k=p+\varepsilon l,\,l\in\Z} 
e^{-ik^2(t-1/\varepsilon)} sin(kx)\,R(k), \ \   
0\le x< 1-\varepsilon t,\\ \\
\ \sum_{k=p+\varepsilon l,\,l\in \Z} 
e^{-ip_1^2(k)(t-1/\varepsilon)+ip_1(k)x}\,T(k)\,R(k), \ \  
1-\varepsilon t\le x.
\end{cases}
\end{equation}
If the series in~\eqref{sol-form} converge sufficiently well,  $\Psi$  
satisfies~\eqref{non-stat-eq} and the Dirichlet boundary condition. The series 
appears to be convergent and $\Psi$ appears to be continuously differentiable 
in $x$ at $x=1-\varepsilon t$ if $R$ satisfies the equation
\begin{equation}\label{eq:R}
R(p+\varepsilon/2)=\rho(p)\,R(p-\varepsilon/2),\quad 
\rho(p)=\frac{Q(p)-p}{Q(p)+p},\quad Q(p)=\sqrt{p^2-1},
\end{equation}
$T$ and $p_1$ are defined  in terms of $Q$ by formulas~\eqref{eq:T} 
and~\eqref{K:formula}, and the branch of the function $Q$ in the definition 
of $\rho$ in~\eqref{eq:R} is chosen in a suitable way. 

We call $\Psi$ a {\it generating solution}. It  is $\varepsilon$-periodic 
in $p$. Its Fourier coefficients $\Psi_n$ with $n\ge1$ are the solutions that 
we study. They  satisfy the Dirichlet condition at $x=0$ and share with 
$\Psi$ all the regularity properties in $x$. In the potential well, 
\begin{equation} 
  \label{eq:Psi-n:up}
 \Psi_n(x,t)=\frac{e^{it}}{\sqrt{\varepsilon\pi}} 
\int_{-\infty}^\infty e^{i(p^2(1-\tau)-2\pi np)/\varepsilon} sin(px)\,R(p)\, dp,\quad
\tau=\varepsilon t.
\end{equation} 

Before analyzing $\Psi_n$,  we study solutions 
to~\eqref{eq:R} in Section~\ref{sec:as-R}.
\subsection{Properties of the operator $H(\tau)$}\label{ss:notations}
When describing the asymptotics of $\Psi_n$, we use the following 
elementary facts.  Consider the  operator $H(\tau)$  with the 
potential~\eqref{potential}. Let 
\begin{equation}
  \label{eq:tau_N}
  \tau_n=1-\pi(n-1/2),\quad n\in\N.
\end{equation}
The number of  negative eigenvalues of 
$H(\tau)$ equals $N\in\N$ if  $\tau_{N+1}<\tau<\tau_N$. If 
$\tau>\tau_1$, there are no eigenvalues. We use the notation  
\begin{equation}\label{eq:cn}
c_n=e^{\frac{i}\varepsilon\,(2\tau_n-3)+\frac{i\pi}4}.
\end{equation}
Let  $E_n(\tau)$ be the $n$th eigenvalue of $H(\tau)$. It can be represented in 
the form $E_n(\tau)=p_n^2(\tau)-1$ with $p_n\in(0,1)$ satisfying the equation 
\begin{equation}\label{eq:pn-up}
(1-\tau)\,p_n(\tau) +\arcsin(p_n(\tau))=\pi n.
\end{equation}
The eigenfunction corresponding to $E_n(\tau)$ is given by the formulas
\begin{equation}\label{eq:psi-n}
\psi_n(x,\tau)=\begin{cases}\sin(p_n(\tau) x) & 
\text{if \ } 0\le x\le 1-\tau,\\  
(-1)^{n+1}\,p_n(\tau)\,e^{-(x-1+\tau)\sqrt{1-p_n^2(\tau)}}& 
\text{if \ } x\ge 1-\tau.
\end{cases}
\end{equation} 
\subsection{Standard asymptotic behavior}
Fix  $n\in N$. In section Section~\ref{sec:standard}, 
we study the solution $\Psi_n(x,t)$ in the case when 
$E_n(\varepsilon t)$ exists and is bounded away from zero. 
In the potential well,  $\Psi_n$ has an asymptotic 
expansion of the form~\eqref{adia}:
\begin{Th} \label{th:norm-wave:up} 
Fix $T_1<T_2<\tau_n$ and $K>0$. For  $T_1\le\varepsilon t\le T_2$ and 
$0\le x\le 1-\varepsilon t$, as $\varepsilon\to 0$,
\begin{equation}
\label{eq:Psi_n-adiabatic:up}
  \Psi_n(x,t)= c_n\sqrt{\frac{d\ln p_n}{d\tau}(\varepsilon t)}\,
e^{-\frac{i}\varepsilon\int\limits_{\tau_n}^{\varepsilon t}E_n(\tau)\,d\tau}\,
\left(\sum_{k=0}^{K-1}\varepsilon^k\psi_{n,k}(x,\varepsilon t)
+O(\varepsilon^K)\right),
\end{equation}
where $\psi_{n,0}=\psi_n$, and $\psi_{n,k}$ are 
bounded. This representation is uniform 
in $x$ and $t$.
\end{Th}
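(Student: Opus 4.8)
The plan is to evaluate the oscillatory integral \eqref{eq:Psi-n:up} by the stationary phase method, using as input the asymptotics of the solution $R$ of the difference equation \eqref{eq:R} obtained in Section~\ref{sec:as-R}. On $(-1,1)$ the pertinent branch gives $Q(p)=i\sqrt{1-p^2}$, hence $\rho(p)=e^{2i\arcsin p}$, and the WKB construction yields, up to an $\varepsilon$-independent unimodular normalization constant,
\[
R(p)=e^{\frac i\varepsilon\sigma(p)}\bigl(1+\varepsilon a_1(p)+\varepsilon^2 a_2(p)+\dots\bigr),\qquad \sigma'(p)=2\arcsin p ,
\]
so $\sigma(p)=2\bigl(p\arcsin p+\sqrt{1-p^2}\bigr)$ modulo a constant, the leading amplitude is constant, and the $a_k$ solve the usual transport recursion. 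Substituting this into \eqref{eq:Psi-n:up} rewrites $\Psi_n$ as $\frac{e^{it}}{\sqrt{\varepsilon\pi}}\int_{\R}e^{\frac i\varepsilon\Phi(p)}\sin(px)\bigl(1+\varepsilon a_1(p)+\dots\bigr)\,dp$ with
\[
\Phi(p)=p^2(1-\tau)-2\pi np+2\bigl(p\arcsin p+\sqrt{1-p^2}\bigr),\qquad \tau=\varepsilon t
\]
(up to an additive constant that only shifts an overall phase).

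Next I would locate the saddle. Since the derivatives of $p\arcsin p$ and $\sqrt{1-p^2}$ cancel, $\Phi'(p)=2(1-\tau)p-2\pi n+2\arcsin p$, so $\Phi'(p)=0$ is precisely equation \eqref{eq:pn-up}; hence the unique real stationary point is $p=p_n(\tau)$, and it is non-degenerate because $\Phi''(p_n)=2(1-\tau)+2/\sqrt{1-p_n^2}>0$ for $\tau<\tau_n$. For $T_1\le\tau\le T_2<\tau_n$ the point $p_n(\tau)$ stays in a fixed compact subset of $(0,1)$ and $\Phi''(p_n)$ is bounded away from $0$ and $\infty$, which provides the uniformity. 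Two elementary identities then yield the structure of \eqref{eq:Psi_n-adiabatic:up}. First, differentiating \eqref{eq:pn-up} gives $d\ln p_n/d\tau=2/\Phi''(p_n)$, so the stationary-phase amplitude equals $\frac1{\sqrt{\varepsilon\pi}}\sqrt{2\pi\varepsilon/\Phi''(p_n)}=\sqrt{d\ln p_n/d\tau}$, while the factor $\sin(p_nx)$ it multiplies is exactly $\psi_n=\psi_{n,0}$ inside the well, cf.~\eqref{eq:psi-n}. Second, $\Phi$ depends on $\tau$ only through the explicit term $p^2(1-\tau)$, so by the envelope identity $\frac d{d\tau}\Phi(p_n(\tau))=-p_n^2(\tau)=-(E_n(\tau)+1)$; integrating from $\tau_n$ and using $p_n(\tau_n)=1$ to evaluate $\Phi(p_n(\tau_n))$, the summand $+1$ in $E_n+1$ contributes $e^{-i(\varepsilon t-\tau_n)/\varepsilon}=e^{-it}e^{i\tau_n/\varepsilon}$, which cancels the prefactor $e^{it}$ and leaves the adiabatic exponential $e^{-\frac i\varepsilon\int_{\tau_n}^{\varepsilon t}E_n(\tau)\,d\tau}$ times an $\varepsilon t$-independent phase. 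Combined with the factor $e^{i\pi/4}$ coming from the positivity of $\Phi''(p_n)$ and with the normalization constant of $R$, this phase is exactly $c_n$ of \eqref{eq:cn}. The remaining terms of the stationary-phase (Watson-type) expansion, built from the Taylor coefficients at $p_n$ of $\sin(px)\bigl(1+\varepsilon a_1(p)+\dots\bigr)$, produce the coefficients $\psi_{n,k}$, which are bounded since $x$ ranges over $[0,1-\varepsilon t]\subset[0,1-T_1]$, a fixed bounded interval.

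To make this rigorous I would split the integral into a piece over a fixed small neighborhood of $p_n(\tau)$ — on which the classical stationary phase asymptotics, with an amplitude admitting an asymptotic expansion in $\varepsilon$ and depending on the parameter $\tau$, gives the series in \eqref{eq:Psi_n-adiabatic:up} to any order $K$ with a genuine $O(\varepsilon^K)$ remainder — and a remainder. On $(-1,1)$ away from $p_n$ the phase is non-stationary, so repeated integration by parts gives $O(\varepsilon^\infty)$; for $|p|$ large $\Phi'\sim2(1-\tau)p$ is large and $R$ decays, so again $O(\varepsilon^\infty)$. The only delicate regions are small neighborhoods of the branch points $p=\pm1$ of $Q$: there $\Phi$ is continuous but $\Phi''$ has an inverse-square-root singularity and the WKB description of $R$ breaks down. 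I would handle them via the substitution $p=\pm\sin\phi$, which makes $\Phi$ a smooth function of $\phi$ whose $\phi$-derivative vanishes at $\phi=\pm\pi/2$ but whose Jacobian $\cos\phi$ also vanishes there, while $\operatorname{Im}\Phi>0$ once $|p|>1$; together these show that each branch point contributes only to the $O(\varepsilon^K)$ remainder. All estimates are uniform because $p_n(\tau)$ and $\Phi''(p_n(\tau))$ vary continuously and remain in fixed compact sets for $T_1\le\tau\le T_2$.

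\textbf{Main obstacle.} The saddle-point algebra is clean and delivers exactly $c_n$, $\sqrt{d\ln p_n/d\tau}$ and $\psi_n$; the real work is analytic. The chief difficulty is to transport the asymptotics of $R$ from Section~\ref{sec:as-R} into the integral uniformly in $(\tau,x)$ while keeping the remainder genuinely $O(\varepsilon^K)$ — in particular, controlling the full line integral near the branch points $p=\pm1$, where the WKB expansion of $R$ is invalid, $\Phi''$ is singular, and a naive iteration of integration by parts fails after the first step. Proving a stationary-phase lemma with an $\varepsilon$- and $\tau$-dependent amplitude that is uniform up to these points, and verifying that the branch-point regions do not generate spurious non-integer powers of $\varepsilon$ or terms carrying the ``wrong'' phase $e^{i\Phi(\pm1)/\varepsilon}$, is the technical heart of the proof.
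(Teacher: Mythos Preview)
Your saddle-point computation is exactly the paper's: the phase $\Phi$ is the paper's action $S$ (Lemma~\ref{le:psi-n:up}), the stationary point is $p_n(\tau)$ (Lemma~\ref{le:saddle-point}), and the identities $1/S_{pp}(p_n,\tau)=\tfrac12\,d\ln p_n/d\tau$ and $\tau+S(p_n,\tau)=\int_\tau^{\tau_n}E_n+2\tau_n-3$ (Lemma~\ref{le:S-at-pn}) yield the prefactor and the phase $c_n$ just as you describe.

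Where you diverge from the paper is in the global analysis. You stay on the real line, split off a neighborhood of $p_n$, and treat the branch points $p=\pm1$ by a local substitution $p=\pm\sin\phi$, correctly flagging this as the technical heart. The paper instead \emph{removes} this obstacle by contour deformation: first it tilts the integration line to $e^{i\theta}\R\subset\C_0$ (Lemma~\ref{le:psi-n:up}, using Corollary~\ref{cor:L0inK1} for decay at infinity), and then it pushes the contour onto the steepest-descent curves $\gamma_0\cup\gamma_2$ through $p_n$ (Lemma~\ref{le:steep-des-cur}), which lie entirely in $K_0(a)\setminus V(b)$, i.e.\ uniformly away from $\pm1$. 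On this contour the amplitude $A$ admits a full asymptotic expansion (Lemma~\ref{L-series}), $\im S\to+\infty$ along both ends, and the steepest-descent method delivers the series~\eqref{eq:Psi_n-adiabatic:up} to all orders without any endpoint or branch-point analysis.

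So the difficulty you isolate is real on the real line---after one integration by parts the new amplitude picks up $\Phi''/(\Phi')^2$, which blows up at $p=1$, and your $\sin\phi$ substitution manufactures an endpoint saddle at $\phi=\pi/2$ carrying the phase $e^{i\Phi(1)/\varepsilon}$, whose cancellation against the $|p|>1$ piece you would still have to prove---but it disappears once you exploit the analyticity of $S$ and $A$ in $\C_0$ and move the contour into the complex plane. That deformation is the missing idea that turns your sketch into a short, clean proof.
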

Outside the potential well, $\Psi$ has two-scale asymptotic expansion.
Describe it. The equation
\begin{equation}
  \label{eq:pn:down}
(1-\tau)\tilde p_n+\arcsin \tilde p_n
-i\tilde p_n\xi/(2\sqrt{1-\tilde p_n^2})=\pi n,
\end{equation}
allows to define a continuous function 
$\tilde p_n: \{\tau<1,\;\xi> 0\}\mapsto{\mathcal Q}_1=\{p\in\C\,:\,
\re p,\im p>0\}$ 
such that $\tilde p_n(\tau,0)=p_n(\tau)$ when $\tau<\tau_n$. 
One has 
\begin{Th} \label{th:norm-wave:down}  Fix  $K\in\N$, $T_1<T_2<\tau_n$ 
and $X>0$.  Let  $T_1\le \varepsilon t\le T_2$ and
$1-\varepsilon t\le x\le X/\varepsilon$. Let
$\xi=\varepsilon (x-(1-\tau))$, $\tau=\varepsilon t$.
As $\varepsilon\to0$, 
\begin{gather}\label{as:Psi:down:1}
\Psi(x,t)=c_n
\sqrt{\frac{\partial\ln \tilde p_n}{\partial\tau}}
e^{-\frac{i}\varepsilon\int\limits_{\tau_n}^{\tau}E_n(\tau)\,d\tau}
\phi_n(\xi,\tau)
\left(\sum_{k=0}^{K-1}\varepsilon^k\alpha_{n,k}(\xi,\tau)+O(\varepsilon^K)\right),\\
\label{eq:Psi-n-adiabatic:down:1}
\phi_{n}(\xi,\tau)=(-1)^{n+1}\tilde p_n
e^{-\frac{1}\varepsilon\int_{0}^{\xi}\sqrt{1-\tilde p_n^2}\,d\xi-\frac{i\xi}2},\quad 
\tilde p_n=\tilde p_n(\tau,\xi),
\end{gather}  
the branch of the square root is such that
$\re \sqrt{1-{\tilde p}^2}>0$ for $\tilde p\in{\mathcal Q}_1$,
\ $\alpha_{n,0}\equiv1$, and  $\alpha_{n,k}(\tau,\xi)$ are bounded.  
This representation is uniform in $x$ and $t$.
\end{Th}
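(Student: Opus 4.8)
The plan is to analyze the integral representation~\eqref{eq:Psi-n:up} for $\Psi_n$, but now evaluated in the region $x \ge 1 - \varepsilon t$, where the representation involves the transmission coefficient $T$ and the refracted-wave exponent $p_1(k)$; that is, one starts from the second branch of~\eqref{sol-form} and Fourier-expands in $p$ as in the derivation of~\eqref{eq:Psi-n:up}. This yields an oscillatory integral $\Psi_n(x,t) = \frac{1}{\sqrt{\varepsilon\pi}} \int_{-\infty}^{\infty} e^{\frac{i}{\varepsilon}\bigl(p_1^2(p)(\tau-1) + p_1(p)\xi - 2\pi n p\bigr)} T(p) R(p)\, dp$, up to elementary prefactors, where $\xi = \varepsilon(x - (1-\tau))$ is held fixed. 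The asymptotics of $R(p)$ on the real axis (and its analytic continuation) is supplied by the analysis of the difference equation~\eqref{eq:R} in Section~\ref{sec:as-R}, which I would invoke as a known input; likewise $T$ and $p_1$ are explicit via~\eqref{eq:T}, \eqref{K:formula}. The first key step is therefore to assemble the complete phase function $S(p,\xi,\tau) = p_1^2(p)(\tau - 1) + p_1(p)\xi - 2\pi n p$ and the full amplitude, using the known form of $R$.

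The second key step is a (complex) stationary-phase / steepest-descent analysis of this integral as $\varepsilon \to 0$, with $\xi, \tau$ as parameters. The critical-point equation $\partial_p S = 0$ is precisely~\eqref{eq:pn:down} after the change of variables relating $p$ to $\tilde p_n = p_1$-type quantity; indeed~\eqref{eq:pn:down} is engineered so that its solution $\tilde p_n(\tau,\xi) \in \QV_1$ is the relevant (complex) saddle, reducing to the eigenvalue parameter $p_n(\tau)$ at $\xi = 0$. So I would: (i) show the saddle $p_*(\xi,\tau)$ exists, is unique in the relevant strip, depends analytically on $(\xi,\tau)$, and that the contour $\R$ can be deformed to a steepest-descent contour through it (here the exponential decay factor $e^{-\frac{1}{\varepsilon}\int_0^\xi \sqrt{1-\tilde p_n^2}\,d\xi}$ in~\eqref{eq:Psi-n-adiabatic:down:1} emerges from $\re(iS)$ at the saddle being negative for $\xi>0$); (ii) apply the standard saddle-point expansion to all orders, producing the asymptotic series $\sum_k \varepsilon^k \alpha_{n,k}$ with $\alpha_{n,0} \equiv 1$ after the prefactor $\sqrt{\partial_\tau \ln \tilde p_n}$ (the second derivative $\partial_p^2 S$ at the saddle) and the constant $c_n$ are factored out; (iii) match the phase: the leading exponential $e^{-\frac{i}{\varepsilon}\int_{\tau_n}^\tau E_n\,d\tau}$ together with $c_n$ should be identified by comparing $\frac{i}{\varepsilon}S(p_*,\xi,\tau)$ with the asserted form, using~\eqref{eq:pn-up}, \eqref{eq:cn}, \eqref{eq:tau_N}, and in particular checking the $\xi = 0$ reduction against Theorem~\ref{th:norm-wave:up} at the well boundary for consistency.

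The third step is to verify the claimed uniformity: that the error term is $O(\varepsilon^K)$ uniformly for $T_1 \le \tau \le T_2$ and $0 \le \xi \le X$, equivalently $1 - \varepsilon t \le x \le X/\varepsilon$. This requires uniform control of the saddle location and of the steepest-descent contour as $(\xi,\tau)$ ranges over the compact parameter set, plus a bound on the tail contributions of the deformed contour (which should be exponentially small, again from the decay of $\re(iS)$), and control of the amplitude $T(p)R(p)$ along the deformed contour. One also needs to check that at $\xi = 0$ the formula connects continuously with the interior expansion — this is where the factor $\phi_n(0,\tau) = (-1)^{n+1} p_n(\tau)$ matches the second line of the eigenfunction~\eqref{eq:psi-n}.

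I expect the main obstacle to be the uniformity in $\xi$ up to $\xi = X$ while simultaneously having $\tau$ bounded away from $\tau_n$: one must ensure the complex saddle $\tilde p_n(\tau,\xi)$ stays in $\QV_1$ and bounded away from the branch points $\pm 1$ of $Q$ (so that $\sqrt{1 - \tilde p_n^2}$ and $\sqrt{\partial_\tau \ln \tilde p_n}$ stay nonsingular) throughout the whole parameter range, and that the steepest-descent contour can be chosen to depend continuously on $(\xi,\tau)$ without colliding with singularities of $R$ — whose location and growth on the relevant part of the complex plane come from Section~\ref{sec:as-R} and must be compatible with the contour deformation. Establishing that geometric picture (existence of a good uniform contour and the requisite a priori estimates on $R$ off the real axis) is the technical heart; once it is in place, the order-by-order expansion and the phase bookkeeping are routine.
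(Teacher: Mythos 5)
Your skeleton coincides with the paper's own outline of this proof: represent $\Psi_n$ for $x\ge 1-\varepsilon t$ by an integral coming from the second line of~\eqref{sol-form}, find the complex saddle $\tilde p_n(\tau,\xi)$ solving~\eqref{eq:pn:down}, deform to the steepest-descent contour, expand to all orders, and identify the exponent and prefactor through identities analogous to~\eqref{eq:legeandre:up}. But the way you assemble the phase is flawed, and it is not cosmetic. You declare the ``complete phase'' to be $p_1^2(p)(\tau-1)+p_1(p)\xi-2\pi np$ and keep $T(p)R(p)$ as the amplitude. First, $R$ cannot be treated as an amplitude: by~\eqref{eq:R-for-Sch} and~\eqref{for:R} it carries the factor $e^{\frac i\varepsilon\int_0^pL_0\,dp}$, which oscillates on the scale $\varepsilon$, so no steepest-descent expansion in powers of $\varepsilon$ with bounded coefficients exists until its leading exponential $e^{\frac i\varepsilon\int_0^p l_0\,dp}$ is moved into the phase. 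Second, and consequently, the stationarity equation of your phase is (to leading order in $\varepsilon$) $(1-\tau)p-ip\xi/(2\sqrt{1-p^2})=\pi n$ --- after also fixing the sign of the first term, since $e^{-ip_1^2(t-1/\varepsilon)}$ contributes $+p_1^2(1-\tau)/\varepsilon$, not $(\tau-1)$ --- and this is \emph{not}~\eqref{eq:pn:down}: the $\arcsin\tilde p_n$ term is missing, and it is produced precisely by the $l_0$-part of $R$ that you left in the amplitude. Your appeal to ``a change of variables relating $p$ to $\tilde p_n$'' does not repair this: $\tilde p_n$ is the saddle of the corrected phase in the original variable $p$ (reducing to $p_n(\tau)$ at $\xi=0$), not a value of $p_1$, and no change of variables is involved.

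The correct assembly is the one the paper makes in~\eqref{eq:psi-n:down-2}--\eqref{eq:tildeS,tildeA}: the phase is $\tilde S(p,\tau,\xi)=S(p,\tau)+Q_0(p)\xi$ with $S$ from~\eqref{eq:S}, which already contains $\int_0^p l_0$, while the slowly varying amplitude is $\tilde A(p)\,p$ with $\tilde A=A\,e^{-\frac i\varepsilon\int_{p-\varepsilon/2}^p(L_0(q)-l_0(p))\,dq}$; the expansion of $A$ in powers of $\varepsilon$ (Lemma~\ref{L-series}, \eqref{as:A}) is what generates the coefficients $\alpha_{n,k}$. With this $\tilde S$, the equation $\tilde S_p=0$ is literally~\eqref{eq:pn:down}, and the exponent and the factor $\sqrt{\partial_\tau\ln\tilde p_n}$ in~\eqref{as:Psi:down:1} come from the two identities for $\tau+\tilde S(\tilde p_n,\tau,\xi)$ and $1/\tilde S_{pp}(\tilde p_n,\tau,\xi)$ recorded in Section~\ref{ss:outside}, which you would still need to prove rather than treat as routine bookkeeping. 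The remaining parts of your plan --- uniform control of the saddle and contour for $T_1\le\tau\le T_2<\tau_n$, $0\le\xi\le X$, exponential smallness of the tails, and the $\xi=0$ match with~\eqref{eq:psi-n} --- do agree with the paper's outline and are sound once the phase is assembled correctly.
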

\noindent In this paper, we only outline the proof this theorem.
Note that (1) as $\re \sqrt{1-{\tilde p}^2}>0$, \  $\phi_n(\xi,\tau)$ decays as 
$\xi$ increases;  (2) if $\varepsilon^{1/2}x\to 0$, \  
$\phi_n(\varepsilon x, \tau)=\psi_n(x,\tau)(1+o(1))$.  

For large $\xi$, $\Psi_n$ is described by
\begin{Le}\label{le:large-xi-est}  
Fix $0<c<1$.  There is a $C>0$ such that, 
for  sufficiently large 
$\xi=\varepsilon (x-(1-\varepsilon t))$, one has
$  |\Psi_n(x,t)|\le Ce^{-c\xi/\varepsilon}$.
\end{Le}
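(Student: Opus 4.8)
The plan is to start from the integral representation of $\Psi_n$ outside the well, which follows by the same Fourier-coefficient computation that produced~\eqref{eq:Psi-n:up}. Applying the second branch of the Ansatz~\eqref{sol-form} and collecting the $l$-th term into a single integral over $\R$, one obtains
\begin{equation*}
\Psi_n(x,t)=\frac{1}{\sqrt{\varepsilon\pi}}\int_{-\infty}^{\infty}
e^{i(p_1^2(p)(1-\tau)\,+\,p_1(p)\xi/\varepsilon\,-\,2\pi np)/\varepsilon}\,
T(p)\,R(p)\,dp,\qquad \xi=\varepsilon(x-(1-\tau)),
\end{equation*}
up to elementary prefactors, where $p_1(p)=\sqrt{1-p^2}$ on the relevant part of the contour and $T$ is given by the formula referenced as~\eqref{eq:T}. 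The point is that for $x\ge 1-\varepsilon t$ the exponent contains the genuinely decaying factor $e^{-\xi\,\re\sqrt{p^2-1}/\varepsilon}$ once the integration contour is moved into the domain where $\re\sqrt{p^2-1}>0$. So the first step is to write this representation carefully and fix the branches, matching the conventions already set in Section~\ref{sec:as-R} for $R$ and in the definitions of $T$, $p_1$.

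Next I would deform the contour of integration. The integrand is analytic in a strip around $\R$ except at $p=\pm 1$ (branch points of $Q$) and at the poles of $R$ coming from the functional equation~\eqref{eq:R}; by the $\varepsilon$-periodicity of $\Psi$ in $p$ it suffices to control one period. Push the contour upward (or into the appropriate sector $\QV_1$) by a fixed amount, say to $\im p=\delta$ for a small fixed $\delta>0$, picking up at most finitely many residues — these residues are exactly the exponentially small bound-state-type contributions, and on them the exponent $ip_1\xi/\varepsilon$ with $\re\sqrt{p^2-1}=\text{const}>0$ gives a bound $e^{-c'\xi/\varepsilon}$ with $c'$ as close to the optimal rate as we like. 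On the shifted contour the modulus of the integrand is bounded by $e^{-\xi\,g(p)/\varepsilon}$ with $g(p)=\re\sqrt{p^2-1}>0$ bounded below by some $c_0>0$ along the whole (compactified) contour, while the remaining $\tau$-dependent oscillatory factor $e^{i p_1^2(1-\tau)/\varepsilon}$ has modulus $\le e^{C/\varepsilon}$ only through the imaginary part of $p_1^2$, which is $O(\delta)$ and hence contributes a factor $e^{O(\delta)/\varepsilon}$ — for $\xi$ large compared to that fixed $O(\delta)$ this is absorbed. Finally one needs integrability at infinity along the contour: here one uses the known asymptotics of $R$ from Section~\ref{sec:as-R} (it is bounded, or at worst has controlled polynomial growth) together with $\re\sqrt{p^2-1}\to\infty$, so the tail converges and contributes a term of the same exponential order.

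Putting these pieces together: choosing $\delta$ small enough that $e^{O(\delta)/\varepsilon}e^{-\xi c_0/\varepsilon}\le e^{-c\xi/\varepsilon}$ once $\xi$ exceeds a fixed threshold (this is where the hypothesis "sufficiently large $\xi$" and the fixed $c<1$ enter, with $c_0$ governed by the geometry near $p=\pm1$), one gets $|\Psi_n(x,t)|\le C e^{-c\xi/\varepsilon}$ with $C$ depending only on $n$, $c$, and the fixed bounds on $\tau$. The main obstacle I expect is the bookkeeping of the branches and of the pole locations of $R$: one must verify that moving the contour by a fixed $\delta$ crosses only finitely many poles and that the residual oscillatory factor $e^{i p_1^2(1-\tau)/\varepsilon}$ on the shifted contour really does grow no faster than $e^{O(\delta)/\varepsilon}$ uniformly in $\tau\in[T_1,T_2]$ and along the whole contour — in other words, that the "bad" direction of the saddle-type exponent is controlled. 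Once the contour-shift geometry is pinned down, the decay estimate itself is immediate from $\re\sqrt{p^2-1}\ge c_0$.
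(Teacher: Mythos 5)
Your overall instinct --- rewrite $\Psi_n$ outside the well as a contour integral whose exponent contains $e^{iQ_0(p)\xi/\varepsilon}$ and then deform the contour into the region where this factor decays --- is the same mechanism the paper uses (it deforms the representation \eqref{eq:psi-n:down-2} to the imaginary axis). But your concrete deformation does not prove the stated estimate. The lemma fixes an \emph{arbitrary} $c\in(0,1)$, so the decay rate must be made arbitrarily close to $1$. The rate you can extract from a contour is $\inf_{p}\im Q_0(p)$ along it (the modulus of the decaying factor is $e^{-\xi\,\im Q_0(p)/\varepsilon}$, i.e.\ it is governed by $\re\sqrt{1-p^2}=\im\sqrt{p^2-1}$, not by $\re\sqrt{p^2-1}$ as you write). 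On the horizontal line $\im p=\delta$ one has $\im Q_0(p)\to\delta$ as $|\re p|\to\infty$ and $\im Q_0\sim\sqrt{\delta}$ just above the branch points, so a small shift yields at best $|\Psi_n|\le Ce^{-O(\delta)\xi/\varepsilon}$; no choice of ``$\xi$ large'' can upgrade a rate of order $\delta$ to a fixed $c$ close to $1$. This is why the paper goes all the way to $i\R$, where $Q_0(it)=i\sqrt{1+t^2}$ gives $\im Q_0\ge 1$ everywhere, the only growing factor is $e^{2\pi nt/\varepsilon}$ (on $i\R$ the term $\int_0^p l_0\,dp$ is real and $p^2$ is real), and the Laplace method yields $e^{-\sqrt{\xi^2-(2\pi n)^2}/\varepsilon}\le e^{-c\xi/\varepsilon}$ once $\xi$ is large --- this is exactly where the hypotheses ``$c<1$'' and ``$\xi$ sufficiently large'' enter.

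Two further points in your outline would fail as written. First, there are no poles to collect: $R_0=\exp\bigl(\tfrac{i}{\varepsilon}\int_0^pL_0\bigr)$ and $T$ are pole-free in the relevant region; the obstruction to deformation is the branch cuts $\{|p|\ge1\}$ together with the piecewise choice \eqref{eq:R-for-Sch}, which places the $p<-1$ portion of the real contour on the \emph{lower} edge of the left cut, so it cannot simply be pushed upward; the paper avoids this by first folding the contour to $e^{i\theta}\R$ using the evenness of $p\mapsto\int_0^pL_0\,dp$ (Lemma~\ref{le:psi-n:up}) and only then rotating to $i\R$. The bound-state-type contributions you expect are saddle-point, not residue, contributions. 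Second, your claim that on the shifted line the factor $e^{ip_1^2(1-\tau)/\varepsilon}$ contributes only $e^{O(\delta)/\varepsilon}$ uniformly is false: for $\re p\to-\infty$ its modulus grows like $e^{2(1-\tau)\delta|\re p|/\varepsilon}$, and boundedness (or polynomial control) of $R$ is not enough to compensate; what saves the integral is the superexponential decay $|R_0(p)|\sim e^{-2\re p\ln|p|/\varepsilon}$ encoded in Corollary~\ref{cor:L0inK1}, which must be invoked explicitly. So the proposal needs both a different contour (essentially $i\R$, or any contour with $\im Q_0\ge c$) and a correct accounting of the cuts and of the growth/decay at infinity before the final estimate goes through.
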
 
\subsection{Destruction of the standard asymptotic behavior, 
$\varepsilon t \le\tau_n$} 
As $\varepsilon t$ grows and approaches $\tau_n$, the eigenvalue 
$E_n(\tau)$ approaches the edge of the absolutely continuous spectrum, and
the asymptotic behavior of $\Psi_n$ changes. Set
\begin{equation}\label{eq:F}
F(z)= \sqrt{\pi}\,e^{-\frac{2z^3}3-\frac{i\pi}{12}}\,
\left(z\,\text{Ai}\,(z^2)-\text{Ai}'\,(z^2)\right),
\end{equation} 
where $\rm Ai$ is the Airy function. 
One has  $F(e^{\frac{i\pi}6}z)=z^{1/2}\,e^{-4iz^3/3}\,(1+o(1))$
as $z\to+\infty$, and $F(e^{\frac{i\pi}6}z)=(-i/8+o(1))\;(-z)^{-5/2}$
as $z\to-\infty$. 
In section~\ref{sec:st-behavior-destruction}, we  prove 
\begin{Th} \label{th:tau-smaller-tau-n}
Fix a sufficiently small $\delta>0$. 
Let $\tau_n-\delta\le \varepsilon t\le \tau_n$ and 
$0\le x\le 1-\varepsilon t$. Then, as $\varepsilon\to 0$,
\begin{equation}
\label{psi:n:tau-less-tau-n}
\Psi_n(x,t)=c_n\sqrt{\frac1{Z_n}\frac{\partial \ln p_n}{\partial\tau}}
\psi_{n}F(e^{\frac{i\pi}6}Z_n)
+O\left(\varepsilon^{\frac23}(1+|Z_n|^{\frac12})\right), 
\end{equation}
where $Z_n=\left(\frac{3}{4\varepsilon}
\int_{\tau_n}^{\varepsilon t}E_n\,d\tau\right)^{\frac13}$,
$p_n=p_n(x,\varepsilon t)$, \ $\psi_{n}=\psi_{n}(x,\varepsilon t)$, 
and $E_n=E_n(\tau)$. The asymptotic representation is uniform 
in $x$ and $t$. 
\end{Th}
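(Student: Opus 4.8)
The plan is to analyze the integral representation \eqref{eq:Psi-n:up} for $\Psi_n$ by the method of steepest descent, tracking what happens to the critical point as $\varepsilon t \to \tau_n$. Writing the exponent as $\phi(p) = p^2(1-\tau) - 2\pi n p$, the stationary phase equation $\phi'(p)=0$ reads $2p(1-\tau) = 2\pi n$, i.e. $p(1-\tau)=\pi n$; comparing with \eqref{eq:pn-up}, the genuine saddle is displaced from $p_n(\tau)$ by the $\arcsin$ term, but more importantly, the relevant feature is the branch point of $R$ at $p=1$ coming from $Q(p)=\sqrt{p^2-1}$ in \eqref{eq:R}. The key geometric fact is that as $\tau\uparrow\tau_n$, the saddle point governing the adiabatic contribution \emph{coalesces with} the branch point $p=1$ (the edge of the continuous spectrum), since $E_n(\tau_n)=0$ means $p_n(\tau_n)=1$. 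A saddle merging with a branch point of square-root type is precisely the situation producing Airy-type (more precisely, the $F$-type) asymptotics, which explains the appearance of \eqref{eq:F}.

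First I would invoke the results of Section~\ref{sec:as-R} on the asymptotics of solutions $R$ to the difference equation \eqref{eq:R}: near $p=1$ the multiplier $\rho(p)=\frac{Q(p)-p}{Q(p)+p}$ degenerates (since $Q(1)=0$, $\rho(1)=-1$), and the WKB-type solution of the difference equation must be replaced by a model solution built from the Airy function, because $\ln\rho(p)$ behaves like a multiple of $\sqrt{p-1}$ near $p=1$ — integrating this against $1/\varepsilon$ produces a $(p-1)^{3/2}/\varepsilon$ phase, the hallmark of an Airy layer. So the first step is: substitute the near-branch-point asymptotics of $R$ into \eqref{eq:Psi-n:up}. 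Second, I would introduce the scaling variable: set $p = 1 + \varepsilon^{2/3}\zeta$ (the natural Airy scale), and expand $\phi(p)$ and $R(p)$ in this variable; the combination $\frac{1}{\varepsilon}\int_{\tau_n}^{\varepsilon t}E_n\,d\tau$ is exactly what, after the cube root, gives $Z_n$, so the rescaled integral should converge to an integral representation of $F$. Third, away from the coalescence region one recovers the standard saddle contribution carrying the factor $\sqrt{\frac{\partial\ln p_n}{\partial\tau}}\,\psi_n$ and the adiabatic phase $e^{-\frac{i}{\varepsilon}\int E_n}$, and the uniform matching of the Airy-region and the saddle-region asymptotics yields the uniform formula \eqref{psi:n:tau-less-tau-n} with the $\sqrt{1/Z_n}$ normalization (this normalization is forced by the requirement that \eqref{psi:n:tau-less-tau-n} reduce to Theorem~\ref{th:norm-wave:up} for $\varepsilon t$ bounded away from $\tau_n$, using the stated large-$z$ asymptotics of $F$).

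The main obstacle will be controlling the error term uniformly in the transition region, i.e. establishing the $O\bigl(\varepsilon^{2/3}(1+|Z_n|^{1/2})\bigr)$ bound across the whole range $\tau_n-\delta\le\varepsilon t\le\tau_n$. This requires: (i) precise control of the remainder in the near-branch-point asymptotics of $R$ from Section~\ref{sec:as-R}, uniform down to $p=1$; (ii) deforming the contour of integration in \eqref{eq:Psi-n:up} onto a steepest-descent path that passes through both the saddle and the branch point in a way that remains valid as they merge — this is the delicate part, since the path must be chosen so that the contributions of the two critical points are captured by a single special-function integral rather than by two separate asymptotic expansions; and (iii) handling the $\sin(px)$ factor and the $x$-dependence so that the estimate is uniform in $x\in[0,1-\varepsilon t]$, which should follow since $\sin(px)$ is bounded and smooth and does not interact with the critical points. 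I expect steps (i) and (ii) to consume most of the work, with the algebra of identifying the rescaled integral with $F$ in \eqref{eq:F} being a matter of comparing with the standard integral representation $\mathrm{Ai}(w)=\frac{1}{2\pi i}\int e^{s^3/3 - ws}\,ds$.
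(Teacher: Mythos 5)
Your geometric picture is the right one (the saddle of the full phase coalesces with the square-root branch point at $p=1$, the scaling exponent $\tfrac1\varepsilon\int_{\tau_n}^{\tau}E_n\,d\tau$ produces $Z_n$, and the $\sqrt{1/Z_n}$ normalization is fixed by matching to Theorem~\ref{th:norm-wave:up}), but the step you yourself flag as the main obstacle is exactly the missing idea, and your proposed route does not supply it. The paper does not work with a fixed Airy-scale substitution $p=1+\varepsilon^{2/3}\zeta$ plus inner/outer matching. Instead it first absorbs the leading exponential of $R$ into the action, $S(p,\tau)=p^2(1-\tau)-2\pi np+\int_0^p l_0$, so that the saddle is exactly $p_n(\tau)$ (no ``displacement by the arcsin term''), and controls the remaining amplitude only roughly, $A=1+O(\varepsilon^{1/2})$ (Lemma~\ref{le:A-rough}); the fine near-edge structure of $R$ from Theorem~\ref{th:as:L0} that you plan to insert is not needed here (it only enters later, for $\varepsilon t\ge\tau_n$). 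Then the crucial device is the unfolding substitution $z=\sqrt{1-p}$, which turns the branch point into a second \emph{saddle} at $z=0$, so that one has two coalescing saddle points $z=0$ and $z=z_n(\tau)$ (Lemmas~\ref{le:S-in-q}, \ref{le:S-in-q:1}), followed by the Chester--Friedman--Ursell change of variables (Theorem~\ref{th:CFU}) reducing the phase exactly to the cubic $-\lambda u^2+\tfrac{2\sqrt2}{3}u^3$ with $\lambda(\tau)^3=3\int_{\tau_n}^{\tau}E_n\,d\tau$; it is this exact normal form, not an asymptotic matching, that yields a single formula with the uniform error $O(\varepsilon^{2/3}(1+|Z_n|^{1/2}))$ over the whole range $\tau_n-\delta\le\varepsilon t\le\tau_n$, via the explicit remainder estimates ($J_3$, $J_4$ in Proposition~\ref{le:leading-term-tau-less-than-tau-n}) and the identity in~\eqref{eq:derivatives} converting the CFU Jacobian into $\sqrt{\lambda^{-1}\,d\ln p_n/d\tau}$.

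A second, smaller omission: the specific function $F$ in~\eqref{eq:F} is not the plain Airy function, and its appearance is tied to the unfolding. After $p=1-z^2$ the measure contributes the Jacobian factor $z\,dz$ (i.e.\ $u\,du$ after CFU), and it is precisely this factor that turns the canonical cubic integral into $s\,{\rm Ai}(s^2)-{\rm Ai}'(s^2)$, i.e.\ into $F$. With your rescaling $p=1+\varepsilon^{2/3}\zeta$ and no unfolding, it is not clear how you would arrive at this combination, nor how the two separate regimes (inner $Z_n=O(1)$ and outer $Z_n\to\infty$) would be glued into one statement with the stated error. So the proposal, as written, has a genuine gap at its core analytic step; it would need either the $z=\sqrt{1-p}$ plus CFU route of the paper or an equivalent uniform method (e.g.\ Bleistein-type) carried out in detail.
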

If $\varepsilon\to 0$ and $\tau_n-\varepsilon t$ stays of the 
order of $1$, then $Z_n\to+\infty$, and the leading term 
in~\eqref{psi:n:tau-less-tau-n} turns into the leading term 
from~\eqref{eq:Psi_n-adiabatic:up}.

In  this paper, we do not study  $\Psi$ outside the potential well 
for $\varepsilon t\sim \tau_n$. We mention only 
that, in the domain where the expression 
$\varepsilon^{-2/3}(\varepsilon t-\tau_n)^2+\varepsilon^{1/3}x$
is of order of one,  $\Psi_n$ is described in terms of Airy functions,
and, for larger  $x$, it exponentially decays  as  $x$ increases.
\subsection{Aftermath}
If
\begin{equation}\label{eq:tau-ge-tau-n}
0\le x\le 1-\tau,\qquad \tau_n\le\tau\le 1,\qquad \tau= \varepsilon t,
\end{equation}
then,  up to some error terms, the solution $\Psi_n$ appears to be 
the sum of three terms ``responsible'' for three different phenomena. 
First, we describe these terms, and then, formulate a theorem. 
\\
{\bf ``Transition'' term.} \ The first term is described by the formula 
\begin{equation}\label{as:T}
{\mathcal T}_0(x,\varepsilon t)=
(4\varepsilon)^{\frac16}c_n\sin x\, F\left(e^{\frac{i\pi}6}z_n(\varepsilon t)\right),
\quad
z_n(\tau)=\frac{\tau_n-\tau}{(4\varepsilon)^{1/3}}.
\end{equation}
One can see  that, when $\varepsilon t-\tau_n$ is of the order of 
$\varepsilon^{1/3}$, the leading term in~\eqref{psi:n:tau-less-tau-n} turns 
into ${\mathcal T}_0$.  On the other hand, as 
$(\varepsilon t-\tau_n)/\varepsilon^{1/3}\to+\infty$, 
the function $F$ in~\eqref{as:T} can be ``replaced'' by its 
asymptotics. This leads to an asymptotics of ${\mathcal T}_0$ with the leading term
$\frac{-ic_n\varepsilon}{2(\tau-\tau_n)^{5/2}}\,\sin x$, 
and ${\mathcal T}_0$ becomes of the order of  $\varepsilon$.
\\
{\bf ``Resonance'' term.} 
Put
\begin{equation}
  \label{eq:a(lambda)}
  a(z)=\int_0^\infty e^{-\frac{u^3}3+iz u^2}u\,du.
\end{equation}
The function $a$ is a close relative of the Airy function.
By means of the method of steepest descents, one checks that
\begin{equation}\label{as:a(lambda)}
    a(z)=\frac{i}{2z}+O\left(z^{-\frac52}\right),\quad 
z\to\pm\infty.
\end{equation}
This representation can be differentiated infinitely many times.
The second term is given by the formula
\begin{equation}
  \label{as:R}
{\mathcal R}_0(x,\varepsilon t)=\frac{c_n\sin x}{\pi^{\frac32}}
\sum_{k=0}^\infty f_k\,\left(\left(\frac\varepsilon2\right)^{\frac23}\,
a(z_{n-k}(\tau))-
\frac{i(1-\tau)\varepsilon}{16}\,a''(z_{n-k}(\tau))\right),
\end{equation}
where $\tau=\varepsilon t$, and 
\begin{equation}
  \label{eq:fk}
  f_k=\frac{(-1)^k}{k^{\frac32}},\quad k>0,\qquad
  f_0=-\sum_{k=1}^\infty f_k.
\end{equation}
Let us discuss the term ${\mathcal R}_0$. 
If $\varepsilon t$ is outside a fixed neighborhood
of $\tau_1,\tau_2\dots\tau_n$, then, in view of~\eqref{as:a(lambda)},
$${\mathcal R}_0(x,\varepsilon t)=\frac{i\varepsilon c_n\sin x}{2\pi^{\frac32}}
\sum_{k=0}^\infty \frac{f_k}{\tau_{n-k}-\varepsilon t}+O(\varepsilon^{\frac32})=
O(\varepsilon).$$
On the other hand,  assume that, for an integer $1\le N\le n-1$,
one has $\tau_N-\tau\asymp \varepsilon^{1/3}$. Then,~\eqref{as:R} turns into 
$${\mathcal R}_0(x,\varepsilon t)=\frac{c_n\sin x}{\pi^{\frac32}}
\left(\frac\varepsilon2\right)^{\frac23}
f_{n-N}\,\,a(z_{N}(\tau))+O(\varepsilon).$$
So, ${\mathcal R}_0$ becomes relatively large near the moments 
$t=\tau_l/\varepsilon$, $l=1,2, \dots n-1$, i.e., the moments of ``death'' 
of the eigenvalues  of the stationary operator $H_\varepsilon$. Note 
that these  can be interpreted as the moments of birth of its 
resonances. 
\\
{\bf Between the moments $\tau_n$, $\tau_{n-1}$ $\dots$} \ 
If $\varepsilon t-\tau_n\le \varepsilon^{1/3}$, we set 
${\mathcal G}_0(x,\varepsilon t)=0$ , and 
otherwise we define ${\mathcal G}_0$ by the formula
\begin{equation}
  \label{as:F:tau-ge-tau-n:2}
  {\mathcal G}_0(x,\varepsilon t)=
i c_n\,\sqrt{\frac2\pi}\ \frac{\varepsilon\sin x}{\varepsilon t-\tau_n}\,
\re \int_0^\infty e^{-2s(\varepsilon t-\tau_n)}
\left(e^{i\pi/4}\zeta(is)+2\sqrt{s}\right)\,ds,
\end{equation}
where $\zeta$ is analytic  in $\C\setminus[1/2,\infty)$ and given 
there by~\eqref{eq:zeta}. 
Thanks to~\eqref{eq:zeta-as}, the integral in~\eqref{as:F:tau-ge-tau-n:2}
converges uniformly in $\varepsilon t\ge \tau_n$. 

The ${\mathcal G}_0$ is the third term that contributes to the leading term 
of the asymptotics  of $\Psi_n$.  If  the distance from $\varepsilon t$ to 
$\tau_1,\tau_2 \dots \tau_n$ is bounded away from zero by a fixed constant, 
then, as $\varepsilon\to 0$, all the three terms ${\mathcal T}_0$, 
${\mathcal R}_0$ and ${\mathcal G}_0$ are of the order of $\varepsilon$.

In Section~\ref{sec:tau-greater-tau-n} we prove 
\begin{Th}\label{th:final} 
Let $x$ and $t$ satisfy~\eqref{eq:tau-ge-tau-n}.
As $\varepsilon\to0$, 
\begin{align*}
  \Psi_n(x,t)={\mathcal T}_0(x,\varepsilon t)&\,(1+O(\varepsilon^{1/2}))+
{\mathcal R}_0(x,\varepsilon t)+{\mathcal G}_0(x,\varepsilon t)+\\
&+O(\varepsilon^{7/6})+O(\varepsilon^{2/3}/
(1+|z_n(\varepsilon t)|)^{5/2}).
\end{align*}
\end{Th}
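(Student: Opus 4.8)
The plan is to start from the exact integral representation~\eqref{eq:Psi-n:up} for $\Psi_n$ inside the well, valid for $\tau_n\le\varepsilon t\le 1$, and insert the asymptotics of the solution $R$ of the difference equation~\eqref{eq:R} that was established in Section~\ref{sec:as-R}. In this regime the $n$th eigenvalue has disappeared, so the classical stationary-phase point of the phase $p^2(1-\tau)-2\pi n p$ has moved off the real axis (or onto the segment where $Q$ is purely imaginary), and the dominant contributions now come from a small neighbourhood of the point $p=1$, i.e. the edge of the continuous spectrum, together with the images $p=1$ shifted by the lattice step, which after the $\varepsilon$-periodicity of $R$ correspond to the thresholds $\tau_{n},\tau_{n-1},\dots$. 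Concretely, I would substitute $p=1+\varepsilon^{2/3}w$ (and, for the resonance term, $p=1+(\varepsilon/2)^{2/3}w$ near the shifted thresholds) and carry out a Taylor expansion of the phase: near $p=1$ the phase behaves cubically, $p^2(1-\tau)-2\pi n p\approx \text{const}+\frac43 w^3\,(\text{stuff}) + z_n w^2\,(\text{stuff})$ after absorbing $\tau_n-\tau$ into the natural Airy-type variable $z_n(\tau)=(\tau_n-\tau)/(4\varepsilon)^{1/3}$. This is exactly the scaling that produces the function $F$ in~\eqref{eq:F} and its relative $a$ in~\eqref{eq:a(lambda)}.

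The next step is to organise the three contributions. First, the \emph{transition} term ${\mathcal T}_0$: it is the contribution of the single lattice point nearest to $p=1$ under the $\varepsilon$-periodic continuation of $R$, using the known behaviour of $R$ near $p=1$ (the reflection coefficient $\rho$ degenerates there since $Q(1)=0$, $\rho(1)=-1$); the cubic stationary-phase integral over the rescaled variable gives precisely $(4\varepsilon)^{1/6}c_n\sin x\,F(e^{i\pi/6}z_n)$, with the $\sin x$ coming from $\sin(px)|_{p=1}$ and corrections of relative size $\varepsilon^{1/2}$ from the next terms in the expansion of the amplitude, the phase, and $\sin(px)$. Second, the \emph{resonance} term ${\mathcal R}_0$: one collects the contributions of the lattice points clustering near the \emph{other} thresholds $\tau_{n-k}$, $k\ge1$. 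Here the relevant scale is $(\varepsilon/2)^{2/3}$ rather than $(4\varepsilon)^{1/3}$-normalised, the integrals produce $a(z_{n-k})$ and, at the next order, $a''(z_{n-k})$ with the explicit coefficient $-i(1-\tau)/16$, and the combinatorial weights $f_k$ in~\eqref{eq:fk} arise from summing the geometric-type tail of the product $R(p+\varepsilon l)/R(p)$ over $l$, i.e. from the structure of the solution of~\eqref{eq:R} away from $p=1$. The condition $f_0=-\sum_{k\ge1}f_k$ should come out automatically as a cancellation ensuring convergence of the sum over $k$ together with the $O(\varepsilon)$ bookkeeping. Third, the \emph{"between the moments"} term ${\mathcal G}_0$: this is the genuinely new contribution of the part of the contour where $Q(p)=\sqrt{p^2-1}$ is purely imaginary, i.e. $p\in(-1,1)$, which only "switches on" once $\varepsilon t$ has passed $\tau_n$; the change of variable $p=\cos\theta$ or $p=1-2s^2\varepsilon$-type scaling, together with the expansion~\eqref{eq:zeta-as} of $\zeta$, turns the corresponding integral into the Laplace-type integral $\re\int_0^\infty e^{-2s(\varepsilon t-\tau_n)}(e^{i\pi/4}\zeta(is)+2\sqrt s)\,ds$ in~\eqref{as:F:tau-ge-tau-n:2}, with the prefactor $i c_n\sqrt{2/\pi}\,\varepsilon\sin x/(\varepsilon t-\tau_n)$.

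After these three pieces are extracted, what remains is to bound the leftover. The contour away from the triple cluster around $p=1$ and the thresholds, and away from the branch point segment, carries no stationary points in the relevant regime, and integration by parts (using the smoothness and $\varepsilon$-periodicity of $R$ from Section~\ref{sec:as-R}) gives a contribution of size $O(\varepsilon^{7/6})$ — one order of $\varepsilon$ beyond the $\varepsilon^{1/6}$ scale of ${\mathcal T}_0$. The error inside the cluster is controlled by the next term in the cubic-phase expansion, which gives the $O(\varepsilon^{2/3}/(1+|z_n|)^{5/2})$ term; the decay in $|z_n|$ reflects that when $\tau$ is far below $\tau_n$ this error is dominated by the $\varepsilon$-size of ${\mathcal T}_0$ itself, matching smoothly with Theorem~\ref{th:tau-smaller-tau-n}. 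The relative error $O(\varepsilon^{1/2})$ attached to ${\mathcal T}_0$ is of the same nature as in Theorem~\ref{th:norm-wave:up}. Finally one checks uniformity: the only delicate points are the thresholds $\tau=\tau_{n-k}$, where $z_{n-k}=0$ and the split between "resonance" and "generic" behaviour is made; uniformity there is guaranteed because $a$ and $F$ are entire and the representations hold uniformly in the rescaled variables.

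\medskip
\noindent\emph{Main obstacle.} The hardest part, I expect, is the careful bookkeeping of the amplitude $R$ near $p=1$ together with the summation over the lattice $\{p+\varepsilon l\}$ in~\eqref{eq:Psi-n:up}: one must simultaneously (i) use the fine asymptotics of $R$ from Section~\ref{sec:as-R} to identify the coefficients $f_k$ and the precise constants ($(4\varepsilon)^{1/6}$, $c_n$, $-i(1-\tau)/16$, $\sqrt{2/\pi}$), (ii) keep track of the three different local scalings ($\varepsilon^{1/3}$-Airy near $\tau_n$, $(\varepsilon/2)^{2/3}$ near the other thresholds, and the $\varepsilon$-Laplace scale for ${\mathcal G}_0$) without double-counting, and (iii) show that all the cross-errors really collapse to the clean two-term remainder $O(\varepsilon^{7/6})+O(\varepsilon^{2/3}/(1+|z_n|)^{5/2})$. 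The matching with Theorem~\ref{th:tau-smaller-tau-n} at $\varepsilon t=\tau_n$ is a useful consistency check throughout.
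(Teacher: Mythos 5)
Your overall geometry is right -- for $\varepsilon t\ge\tau_n$ everything is extracted from a shrinking neighborhood of the branch point $p=1$ of the action, and the cubic scaling there produces $F$ and $a$ -- but the mechanisms you assign to ${\mathcal R}_0$ and ${\mathcal G}_0$ are not the ones that actually work, and without them the plan cannot be completed. First, ${\mathcal G}_0$ does \emph{not} come from ``the part of the contour where $Q$ is purely imaginary, $p\in(-1,1)$'': for $\tau\ge\tau_n$ the phase $S_p=2p(1-\tau)+l_0(p)-2\pi n$ has no zero on $(-1,1)$, so that portion of the contour away from $p=1$ is exponentially small. In the paper ${\mathcal G}$ is the contribution, \emph{inside} the $\rho(\varepsilon)$-neighborhood of $p=1$, of the corrections to the amplitude $A$ in~\eqref{eq:A}: by Theorem~\ref{th:as:L0} and~\eqref{eq:L1near1} one has $L_0-l_0\approx\sqrt{2\varepsilon}\,\zeta((p-1)/\varepsilon)$ on $[1-i\rho,1]$ and the analogous formula for $L_1$ on $[1,1+\rho]$, and it is this $\sqrt{\varepsilon}$-size local correction, integrated against $e^{iS/\varepsilon}$, that yields the Laplace integral in~\eqref{as:F:tau-ge-tau-n:2}. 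Second, ${\mathcal R}_0$ does not arise from ``lattice points clustering near the other thresholds'' nor from summing $R(p+\varepsilon l)/R(p)$ over $l$: in the representation~\eqref{eq:Psi-n:up-2} there is no lattice sum left, and $R$ is not $\varepsilon$-periodic (it satisfies~\eqref{eq:R}). The resonance term comes from the same neighborhood of $p=1$, via the splitting $L_0=L_1+P$ of Section~\ref{sec:L1}: the $\varepsilon$-periodic discrepancy $P$ has Fourier coefficients $P_k=2e^{i\pi/4}(\varepsilon/k)^{1/2}+O((\varepsilon/k)^{3/2})$ (Lemma~\ref{le:P}), each mode $e^{2\pi ik(p-1)/\varepsilon}$ shifts the linear part of the phase from $2(\tau_n-\tau)(p-1)$ to $2(\tau_{n-k}-\tau)(p-1)$, and integrating $\int_1^pP$ produces exactly the weights $f_k\propto(-1)^k k^{-3/2}$ of~\eqref{eq:fk}; in particular $f_0=-\sum_{k\ge1}f_k$ comes from the lower limit $p=1$ of that integral, not from a convergence cancellation. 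Nothing in your plan supplies a substitute for Lemma~\ref{le:P}, so the coefficients $f_k$ and the $a(z_{n-k})$, $a''(z_{n-k})$ structure cannot be derived as described.

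The error bookkeeping is also misattributed. The part of the contour away from the $\rho(\varepsilon)$-neighborhood of $p=1$ is $O(e^{-C\varepsilon^{-3\alpha/2}})$, not $O(\varepsilon^{7/6})$; the $O(\varepsilon^{7/6})$ in the theorem is the product of the $O(\varepsilon^{1/2})$ residual in the expansion of $A$ (Lemma~\ref{le:A-rough}) with the $O(\varepsilon^{2/3})$ length scale enforced by $\im S\ge C|p-1|^{3/2}$ on $\Gamma$ (Lemma~\ref{le:action}). Likewise $O(\varepsilon^{2/3}/(1+|z_n|)^{5/2})$ is not ``the next term in the cubic-phase expansion'': it is the error in replacing ${\mathcal G}$ by ${\mathcal G}_0$ (and absorbs ${\mathcal G}=O(\varepsilon^{2/3})$ itself when $\varepsilon t-\tau_n\lesssim\varepsilon^{1/3}$, where ${\mathcal G}_0$ is set to zero), while the expansion errors of ${\mathcal T}$ and ${\mathcal R}$ are the smaller $O(\varepsilon/(1+|z_n|^3))$ and $O(\varepsilon^{4/3})$. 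So while your decomposition matches the statement and your treatment of ${\mathcal T}_0$ is essentially the paper's, the proposal as written has genuine gaps at the two places where the real work lies: Proposition~\ref{pro:wedge:main-contrib} (splitting $A$ via $L_0-l_0$, $L_1-l_0$ and $P$ on the two sides of $p=1$) and the Fourier analysis of $P$.
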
 
\subsection{Final remarks}
In the underwater acoustics, in  the course of investigation of the 
sound propagation in a narrow water wedge near a sea shore, problems similar 
to ours were  studied (non-rigorously). For example, in~\cite{Pi:83}, the author 
got an elegant partial differential equation for the sound field in the case 
similar 
to the case where  $\varepsilon t\sim\tau_n$.  This enabled him to describe the 
the leading order approximation of the field in terms of Airy functions. 
It looks like that the physicists have not discovered the effects described 
by the term ${\mathcal R}_0$, neither have they found the  function $\zeta$ 
determining the term ${\mathcal G}_0$.

We hope that our results can be generalized to the case where the potential
$v$ in equation~\eqref{non-stat-eq} is non-positive, quickly vanishes as  
$x\to\infty$, and the eigenvalues  $E(\tau)$ behave for $\tau\sim\tau_n$ as 
in the model problem.  In particular, we expect that the leading 
terms of the asymptotics of $\Psi_n$ for $\varepsilon t\ge\tau_n$,
can be obtained from ${\mathcal T}_0$, ${\mathcal R}_0$ and ${\mathcal G}_0$
by replacing $\sin x$ with a solution of the equation 
$-\psi_{xx}''+v(x,\tau)\psi=0$ satisfying the Dirichlet condition at $x=0$.
  
The results we got admit the following physical interpretation.
For $t<\tau_n/\varepsilon$, the quantum particle described by the wave 
function  $\Psi_n$ is in the state with the energy  $E_n(\varepsilon t)$ 
and is localized in the potential well  $0\le x\le \varepsilon t$. 
When $t$ approaches $\tau_n/\varepsilon$, the moment of ``death'' of
the energy level $E_n(\varepsilon t)$, the probability to find the quantum 
particle in the potential well decreases and becomes small when 
$t>\tau_n/\varepsilon$. The energy of the particle moves into the continuous 
spectrum. After that, when  $t$ is close to $\tau_{n-1}/\varepsilon$, 
$\tau_{n-2}/\varepsilon$, $\dots$, i.e.,  to the moments of birth
of the resonances, thanks to tunneling effects, the probability to find the 
particle in the potential well again becomes noticeable.
\section{Generating solution}\label{gen-sol}
Here, we construct the  generating solution as described in 
Section~\ref{ss:Ansatz}. 
\subsection{Relations for the functions $R$, $T$ and $p_1$}
\label{ss:formal-construction-of-Psi}
Assume that the series in formulas~\eqref{sol-form}  converge sufficiently 
well.  The function $\Psi$ defined by these formulas is continuous in $x$ 
at $x=1-\varepsilon t$ for all $t< 1/\varepsilon$, if, for all $p\in\R$ and  
for all $x\in\R$,
\begin{equation}
\label{eq:continuity}
\begin{split}
e^{i(1-x)/\varepsilon} \left(e^{ip^2 x/\varepsilon+ipx}\,R(p)\right.-&\left. 
e^{i(p+\varepsilon)^2 x/\varepsilon-i(p+\varepsilon)x}
\,R(p+\varepsilon)\right)=\\
&=2i T(p)R(p)e^{ip_1(p)^2 x/\varepsilon+ip_1(p) x}.
\end{split}
\end{equation}
As $ p^2+\varepsilon p= 
(p+\varepsilon)^2-\varepsilon (p+\varepsilon)$,~\eqref{eq:continuity} implies that
\begin{gather}
\label{eq:K}
-1+p^2+\varepsilon p=p_1(p)^2 +\varepsilon p_1(p), \\
\label{eq:R-T-1}
R(p)-R(p+\varepsilon)=2i e^{-i/\varepsilon}T(p)R(p).
\end{gather}
Furthermore, $\partial\Psi/\partial x$ becomes continuous in $x$
at $x=1-\varepsilon t$ for all $t< 1/\varepsilon$, if, for all 
$p\in\R$ and  for all $x\in\R$,
\begin{equation}
\label{eq:continuity1}
\begin{split}
e^{i(1-x)/\varepsilon} \left(e^{ip^2 x/\varepsilon+ipx}\,pR(p)\right.+&\left. 
e^{i(p+\varepsilon)^2 x/\varepsilon-i(p+\varepsilon)x}
\,(p+\varepsilon)R(p+\varepsilon)\right)=\\
&=2i e^{ip_1(p)^2 x/\varepsilon+ip_1(p) x}p_1(p)T(p)R(p)
\end{split}
\end{equation}
which leads to the relation
\begin{equation}
\label{eq:R-T-2}
pR(p)+(p+\varepsilon)R(p+\varepsilon)=2i e^{-i/\varepsilon}\,p_1(p)\,T(p)\,R(p).
\end{equation}

Let us discuss~\eqref{eq:K},~\eqref{eq:R-T-1} and~\eqref{eq:R-T-2}.
Equation~\eqref{eq:K} implies that
\begin{equation}\label{K:formula}
p_1(p)=-\frac{\varepsilon}2+Q\left(p+\frac\varepsilon2\right),\qquad 
Q(p)=\sqrt{p^2-1}.
\end{equation}
From~\eqref{eq:R-T-1} and~\eqref{eq:R-T-2}, we deduce the equation
for $R$  from~\eqref{eq:R} and the formula
\begin{equation}
\label{eq:T}
T(p)=-i\left.\frac{pe^{i/\varepsilon}}{Q(p)+p}\right|_{p:=p+\varepsilon/2}.
\end{equation}
\subsection{Function $R$}\label{ss:R-function}
Here, we  construct a solution to equation~\eqref{eq:R} for complex $p$.
\subsubsection{A branch of the $Q$}\label{sssec:Q}
The branch points of $Q$ are the points $\pm1$. 
Let $\C_0=\C\setminus\{p\in\R\,:\, |p|\ge 1\}$. In $\C_0$, we fix 
the single-valued analytic branch $Q_0$ of $Q$ so that $Q_0(0)=i$. 
Note that 
\par\vspace{2mm}\noindent
\begin{equation}
  \label{eq:Q_0}
  Q_0(p)\in i\R_+\quad\text{if}\quad  -1<p<1,\quad\text{and}\quad
  Q_0(p+i0)\in \R_\pm\quad\text{if}\quad  \pm p>1.
\end{equation}
\par\vspace{2mm}\noindent
The function $Q_0$ is even and $\rho_0$,  the coefficient $\rho$ 
defined by~\eqref{eq:R} for $Q=Q_0$, satisfies the relation 
$\rho_0(-p)=1/\rho_0(p)$. 
One has 
\begin{equation}\label{est:rho}
 \rho_0(p+ i0)\asymp 1/p^2 \quad \text{as}\quad |p|\to\infty,\quad p\in\R.
\end{equation}
\subsubsection{Constructing a solution to~\eqref{eq:R}}
To describe a solution to~\eqref{eq:R}, we need some notations.
In $\C_0$, we define an analytic function by the formulas
\begin{equation}\label{eq:l-def}
l_0(p)=-i\ln \rho_0(p)=-i\ln\frac{Q_0(p)-p}{Q_0(p)+p},\quad l_0(0)=0.
\end{equation} 
The definition of $l_0$ implies that
\begin{equation}
  \label{eq:l0-arsin}
  l_0=2\arcsin (p),\quad -1\le p\le 1,
\end{equation}
and that
\begin{equation}\label{eq:l0-sym}
  l_0(-p)=-l_0(p),\quad l_0(\bar p)=\overline{l_0(p)}.
\end{equation}
We call a curve $\gamma\subset \C$ vertical if, along $\gamma$,  
$p$ is a piecewise $C^1$-function of  $\im p\in\R$, and $d p/d\im p$ 
is uniformly bounded.
For $p_0\in\C$, we denote by $\gamma(p_0)\subset \C$ a vertical curve 
containing $p_0$.
One has 
\begin{Pro}\label{pro:R-func}  For $p\in\C_0$, let 
\begin{equation}\label{for:R}
R_0(p)=\exp\left(\frac i\varepsilon \int_0^p L_0(p)\,dp\right),\qquad
L_0(p)=\frac{\pi}{2i\varepsilon}\int\limits_{\gamma(p)}
\frac{l_0(\zeta)\,d\zeta} { \cos^2 
\left(\frac {\pi(p-\zeta)}\varepsilon\right)}, 
\end{equation}
and $\gamma(p)\subset \C_0$. The functions $L_0$ and $R_0$  
are analytic in $\C_0$, and $R_0$ is continuous up to its boundary.
The $R_0$ is  a solution to the equation in~\eqref{eq:R}, and $L_0$ 
satisfies the relations
\begin{gather} \label{eq:L}
L_0(p+\varepsilon/2)-L_0(p-\varepsilon/2)=\varepsilon l_0'(p), \quad 
p\pm\varepsilon/2\in \C_0,\\
 \label{eq:L-prop}
  L_0(\bar p)=\overline{L_0(p)},\quad L_0(-p)=-L_0(p),\quad p\in\C_0.
\end{gather}
\end{Pro}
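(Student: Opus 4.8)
\emph{Proof proposal.} Everything rests on the difference relation \eqref{eq:L}; the equation \eqref{eq:R} for $R_0$ and the symmetries then follow by soft arguments. \emph{Step 1 (analyticity of $L_0$ and continuity of $R_0$ up to $\partial\C_0$).} Since $\cos^{-2}(\pi(p-\zeta)/\varepsilon)$ decays like $e^{-2\pi|\im\zeta|/\varepsilon}$ as $|\im\zeta|\to\infty$, while by \eqref{est:rho} and \eqref{eq:l-def} the function $l_0(\zeta)$ grows only logarithmically, the integral in \eqref{for:R} converges absolutely and locally uniformly in $p$ and may be differentiated under the integral sign, so $L_0$ is analytic in $\C_0$. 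One must also check that its value does not depend on the admissible choice of the vertical curve $\gamma(p)\subset\C_0$: deforming one such curve into another changes the integral by $2\pi i$ times a sum of residues at the double poles $\zeta=p+(k+\tfrac12)\varepsilon$ of the kernel, and with $\gamma(p)$ kept close to the vertical line through $p$ (with a local detour into $\C_0$ near $\R$ when $|\re p|\ge1$) none of these poles is crossed. As $\C_0$ is simply connected and $L_0$ is analytic there, $\mathcal L(p):=\int_0^p L_0(p')\,dp'$ is well defined and $R_0=e^{i\mathcal L/\varepsilon}$ is analytic; and since the only non-integrable behaviour of $L_0$ near $\partial\C_0$ would stem from $l_0'\sim(1-\zeta^2)^{-1/2}$ at the branch points $\pm1$, the primitive $\mathcal L$, hence $R_0$, extends continuously to $\overline{\C_0}$.

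\emph{Step 2 (proof of \eqref{eq:L} and \eqref{eq:L-prop}).} Because $\cos^2(\pi(p\pm\tfrac\varepsilon2-\zeta)/\varepsilon)=\sin^2(\pi(p-\zeta)/\varepsilon)$, the quantities $L_0(p+\tfrac\varepsilon2)$ and $L_0(p-\tfrac\varepsilon2)$ are integrals of the \emph{same} function $g(\zeta)=l_0(\zeta)\,\sin^{-2}(\pi(p-\zeta)/\varepsilon)$ over $\gamma(p+\tfrac\varepsilon2)$ and $\gamma(p-\tfrac\varepsilon2)$ respectively. Taking these curves straight vertical near $p\pm\tfrac\varepsilon2$, the strip between them contains exactly one pole of $g$, the double pole at $\zeta=p$ (the next ones sit at $p\pm\varepsilon$); closing the curves at $\im\zeta=\pm\infty$, where $g$ decays exponentially, the residue theorem gives $\int_{\gamma(p+\varepsilon/2)}g-\int_{\gamma(p-\varepsilon/2)}g=2\pi i\,\operatorname{Res}_{\zeta=p}g$. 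From $\sin^2(\pi(p-\zeta)/\varepsilon)=\tfrac{\pi^2}{\varepsilon^2}(\zeta-p)^2(1+O((\zeta-p)^2))$ and $l_0(\zeta)=l_0(p)+l_0'(p)(\zeta-p)+\cdots$, and noting that the correction to $\sin^2$ is even and so produces no simple pole, one gets $\operatorname{Res}_{\zeta=p}g=\tfrac{\varepsilon^2}{\pi^2}l_0'(p)$, hence $L_0(p+\tfrac\varepsilon2)-L_0(p-\tfrac\varepsilon2)=\tfrac{\pi}{2i\varepsilon}\cdot2\pi i\cdot\tfrac{\varepsilon^2}{\pi^2}l_0'(p)=\varepsilon l_0'(p)$, which is \eqref{eq:L}. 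The relations \eqref{eq:L-prop} follow at once from \eqref{eq:l0-sym}: choosing $\gamma(\bar p)=\overline{\gamma(p)}$ and using that the kernel is real when $\varepsilon$ is real yields $L_0(\bar p)=\overline{L_0(p)}$, while choosing $\gamma(-p)=-\gamma(p)$, substituting $\zeta\mapsto-\zeta$ (an orientation reversal that cancels the sign of $d\zeta$), and using $l_0(-\zeta)=-l_0(\zeta)$ yields $L_0(-p)=-L_0(p)$.

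\emph{Step 3 (\eqref{eq:L} $\Rightarrow$ \eqref{eq:R}).} Put $M(p)=\int_{p-\varepsilon/2}^{p+\varepsilon/2}L_0(p')\,dp'$. By \eqref{eq:L}, $M'(p)=L_0(p+\tfrac\varepsilon2)-L_0(p-\tfrac\varepsilon2)=\varepsilon l_0'(p)$, so $M(p)-\varepsilon l_0(p)$ is constant on the relevant (connected) domain; at $p=0$ one has $M(0)=0$, since $L_0$ is odd and $[-\tfrac\varepsilon2,\tfrac\varepsilon2]\subset(-1,1)\subset\C_0$, and $l_0(0)=0$, so $M(p)=\varepsilon l_0(p)$. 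Consequently $R_0(p+\tfrac\varepsilon2)/R_0(p-\tfrac\varepsilon2)=\exp\!\bigl(\tfrac i\varepsilon\int_{p-\varepsilon/2}^{p+\varepsilon/2}L_0\bigr)=e^{il_0(p)}=\rho_0(p)$ by the definition \eqref{eq:l-def} of $l_0$, which is exactly the equation in \eqref{eq:R}.

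\emph{Main obstacle.} The only genuine computation is the residue evaluation in Step 2, and it is short. The delicate points are the two ``soft'' ones in Step 1: the independence of $L_0$ of the admissible vertical curve $\gamma(p)\subset\C_0$, which requires a case analysis near the slits $\{|\re\zeta|\ge1\}$ (one must keep every pole $p+(k+\tfrac12)\varepsilon$ with $k\ge0$ on one side of $\gamma(p)$ and those with $k\le-1$ on the other); and the continuity of $R_0$ up to $\partial\C_0$ at the branch points $\pm1$, where $L_0$ is unbounded but only with an integrable singularity. These are precisely where the hypotheses $\gamma(p)\subset\C_0$ and the estimate \eqref{est:rho} are used.
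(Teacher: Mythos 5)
Most of your proposal is sound and follows the paper's own route: the analyticity of $L_0$ from the exponential decay of the kernel and the logarithmic bound on $l_0$, the residue computation giving \eqref{eq:L} (the paper only says ``using the residue theorem'' — your computation, including the well-definedness remark that all poles $p+(k+\tfrac12)\varepsilon$ sit at the height $\im p$ where any two admissible vertical curves through $p$ coincide, is exactly what is meant), the symmetries \eqref{eq:L-prop} by contour substitution (the paper instead checks $L_0(i\R)\subset i\R$ and oddness on $i\R$ and continues analytically — a cosmetic difference), and the passage \eqref{eq:L} $\Rightarrow$ \eqref{eq:R} via $\int_{p-\varepsilon/2}^{p+\varepsilon/2}L_0=\varepsilon l_0(p)$, which is the paper's computation in different clothing.

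The genuine gap is your justification of the continuity of $R_0$ up to $\partial\C_0$. Your sentence about ``the only non-integrable behaviour of $L_0$ near $\partial\C_0$'' stemming from $l_0'\sim(1-\zeta^2)^{-1/2}$ at $\pm1$ is off target: $l_0'$ does not occur in the integrand defining $L_0$ (and $l_0$ itself is bounded near $\pm1$), and, more importantly, you assert without proof that $L_0$ stays continuous, or at worst integrably singular, along the cuts. That is not obvious from \eqref{for:R}: as $p$ approaches the cut $[1,\infty)$ the contour $\gamma(p)$ is pinched between the cut and the kernel's poles $p+(k+\tfrac12)\varepsilon$, and in fact $L_0$ has square-root singularities at every point $1+\varepsilon(l+\tfrac12)$, $l=0,1,2,\dots$, of the cut (this is stated in the paper before Theorem~\ref{th:as:L0}); controlling $L_0$ near the boundary uniformly (including for large $\re p$) is precisely the nontrivial content of the decomposition $L_0=L_1+P$ and Lemma~\ref{le:P}, which are not available at this point. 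The paper sidesteps all of this with a soft argument you should adopt: since $(Q_0(p)+p)(Q_0(p)-p)=-1$, the coefficient $\rho_0=-1/(Q_0+p)^2$ is continuous in $\C_0$ and up to its boundary; the functional equation \eqref{eq:R}, which you have already proved, then transports the continuity of $R_0$ from compact subsets of $\C_0$ (where it is analytic) to boundary points, by iterating $R_0(p)=\rho_0(p-\varepsilon/2)R_0(p-\varepsilon)$ finitely many times until the argument lands inside $\C_0$ (the left cut is handled symmetrically, e.g.\ using that $R_0$ is even because $L_0$ is odd). Note also that the proposition claims boundary continuity only for $R_0$, not for $L_0$, and with your argument as written that distinction is lost.
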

\begin{proof}
The analyticity of $L_0$ (and $R_0$) follows from the choice of 
$\gamma(p)$ and the estimate
\begin{equation*}
 |l_0(p)|\le C\ln (2+|p|),\quad p\in\C_0,
\end{equation*}
$C>0$ being a constant. Using the residue theorem, one checks 
that $L_0$ solves~\eqref{eq:L}. To prove~\eqref{eq:L-prop}, one checks 
by means of~\eqref{eq:l0-sym} that, for $p\in i\R$, \ $L_0(p)\in i\R$ and 
$L_0(-p)=-L_0(p)$.

Let us prove that $R_0$ satisfies~\eqref{eq:R}.
To simplify the notations,  we write below $l$ and $L$ instead of $l_0$ and  
$L_0$. For $p\pm \varepsilon/2\in\C_0$, one has
\begin{gather*}
\int\limits_0^{p+\frac\varepsilon{2}}L(p)\,dp-
\int\limits_0^{p-\frac\varepsilon{2}}L(p)\,dp=
  \int\limits_{-\frac\varepsilon{2}}^p L\left(p+\frac\varepsilon{2}\right)\,dp-
  \int\limits_{\frac\varepsilon{2}}^p L\left(p-\frac\varepsilon{2}\right)\,dp=\\
  \int\limits_{-\frac\varepsilon{2}}^0 L\left(p+\frac\varepsilon{2}\right)dp-
  \int\limits_{\frac\varepsilon{2}}^0 L\left(p-\frac\varepsilon{2}\right)dp+
\varepsilon\int_0^pl'(p)dp=
  \int\limits_{-\frac\varepsilon{2}}^{\frac\varepsilon{2}} L(p)dp+\varepsilon(l(p)-l(0)).
\end{gather*}
As $L_0$ is odd and $l_0(0)=0$, this implies the needed.  
 
Finally, let us check that $L_0$ is continuous  up to the boundary of $\C_0$. 
As $(Q_0(p)+p)(Q_0(p)-p)=-1$,  the factor $\rho_0$ is continuous
in $\C_0$ and along its boundary. This and~\eqref{eq:R} imply that, 
being analytic in $\C_0$, $R_0$ is continuous up to its boundary.  
\end{proof}
\subsection{Completing the construction of the solution to the
Schr\"odinger equation}\label{ss:sol-construction}
Let $R_0$ be the solution to~\eqref{eq:R} described in 
Proposition~\ref{pro:R-func}. For $p\in\R$, we define
\begin{equation}
  \label{eq:R-for-Sch}
  R(p)=\begin{cases} R_0(p-i0),& p\le 0,\\ R_0(p+i0), & p>0.\end{cases}
\end{equation}
The function $R$ is continuous and satisfies~\eqref{eq:R} (we assumed that
$0<\varepsilon<1$). Let $p_1$ and $T$ 
be defined by~\eqref{K:formula}  and~\eqref{eq:T} with $Q=Q_0$.
One has 
\begin{Th} The constructed function $\Psi$  is continuous in $(t,x,p)\in 
(-\infty,1/\varepsilon]\times [0,\infty)\times \R$. Both for 
$0<x<1-\varepsilon t$ and for  $x>1-\varepsilon t$, it is 
infinitely differentiable in $x$ and $t$ and satisfies the 
Schr\"odinger equation~\eqref{non-stat-eq}.  At $x=0$, \  $\Psi$ satisfies 
the Dirichlet boundary condition.  At $x=1-\varepsilon t$, 
it  is continuously differentiable  in $x$.  
As a function of $p$, $\Psi$ is $\varepsilon$-periodic.
\end{Th}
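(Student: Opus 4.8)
The plan is to establish each claimed property of $\Psi$ in turn, reducing everything to the analytic properties of $R_0$ (hence $R$) and to the algebraic relations \eqref{eq:K}, \eqref{eq:R-T-1}, \eqref{eq:R-T-2} already derived. First I would verify convergence of the series in \eqref{sol-form}. For the well part, the summand at $k=p+\varepsilon l$ is $e^{it}e^{-ik^2(t-1/\varepsilon)}\sin(kx)R(k)$; since $\operatorname{Re}(-ik^2(t-1/\varepsilon))=0$ for real $k$ and $|\sin(kx)|\le e^{|x||\operatorname{Im}k|}=1$ for real $k$, the only decay must come from $R$. From \eqref{eq:R} iterated, together with $\rho_0(p+i0)\asymp 1/p^2$ of \eqref{est:rho}, one gets $|R(k)|\le C(1+|k|)^{-2}$ as $|k|\to\infty$, $k\in\R$; this makes the well series absolutely and locally uniformly convergent, and termwise $x,t$-differentiation to any order only inserts polynomial factors $k^m$, still dominated by a convergent series for the relevant range (note $x$ is bounded in the well). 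For the part $x\ge 1-\varepsilon t$, one uses \eqref{K:formula}: for large real $k$, $p_1(k)=k-\varepsilon/2+Q_0(k+\varepsilon/2)$ is real, $|T(k)|\asymp |k|^{0}$ more precisely $|T(k)|$ stays bounded by \eqref{eq:T}, and $|e^{ip_1^2(t-1/\varepsilon)+ip_1 x}|$ is bounded once $\operatorname{Im}p_1$ has the right sign; combined with the $|R(k)|\lesssim |k|^{-2}$ decay this again gives local uniform convergence and legitimizes termwise differentiation.

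Second, granting convergence, continuity in $(t,x,p)$ on $(-\infty,1/\varepsilon]\times[0,\infty)\times\R$ is immediate from uniform convergence of the two series and continuity of each term, plus the matching at $x=1-\varepsilon t$: the continuity relation \eqref{eq:continuity} was precisely engineered so the two pieces agree there, and it holds because it reduces, via $p^2+\varepsilon p=(p+\varepsilon)^2-\varepsilon(p+\varepsilon)$, to \eqref{eq:R-T-1} and \eqref{eq:K}, both of which $R$, $T$, $p_1$ satisfy by construction. Smoothness in $x,t$ separately on each open region $\{0<x<1-\varepsilon t\}$ and $\{x>1-\varepsilon t\}$ follows from termwise differentiation justified in the first step. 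That $\Psi$ solves \eqref{non-stat-eq} on each region is a direct check: each summand $e^{it}e^{-ik^2(t-1/\varepsilon)}\sin(kx)$ satisfies $i\partial_t=-\partial_x^2-\,$ (the ``$-1$'' coming from $v=-1$ and the prefactor $e^{it}$) in the well, and each $e^{-ip_1^2(t-1/\varepsilon)+ip_1 x}$ satisfies $i\partial_t=-\partial_x^2$ outside, because $i(-ip_1^2)=p_1^2=-(ip_1)^2$; the dispersion bookkeeping is exactly relation \eqref{eq:K}. The Dirichlet condition at $x=0$ holds termwise since $\sin(kx)|_{x=0}=0$.

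Third, $C^1$-matching of $\partial_x\Psi$ at $x=1-\varepsilon t$: differentiating both pieces in $x$ and evaluating at the interface produces \eqref{eq:continuity1}, which (using the same substitution in the exponents) is equivalent to \eqref{eq:R-T-2}; since $R$, $T$, $p_1$ satisfy \eqref{eq:R-T-2} by the derivation in Section~\ref{ss:formal-construction-of-Psi}, the $x$-derivatives match. Finally, $\varepsilon$-periodicity in $p$ is structural: replacing $p$ by $p+\varepsilon$ merely re-indexes the lattice $\{p+\varepsilon l:l\in\Z\}$, and $R$ satisfies $R(p+\varepsilon)=\rho(p+\varepsilon/2)R(p)$ — but one must be careful, since a single factor $\rho$ would spoil periodicity unless it is absorbed. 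The correct statement is that $\Psi$ as written is literally periodic because the sum runs over the full shifted lattice and the factors $R(k)$ at the shifted arguments reproduce the same set of terms; this is where I would be most careful, tracking the branch choice \eqref{eq:R-for-Sch} of $R$ on the real axis (the $\pm i0$ prescription for $p\le 0$ versus $p>0$) to make sure the re-indexing is exact and no spurious $\rho$ remains.

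The main obstacle I anticipate is the convergence-and-differentiation step, specifically getting the decay $|R(k)|\lesssim|k|^{-2}$ uniformly along the real lattice and controlling the outside series where $p_1(k)$ is real for large $|k|$ but complex for $|k|$ near $1$ — one must check $\operatorname{Im}p_1(k)\ge 0$ (or the correct sign) on the whole real axis so that $e^{ip_1 x}$ does not blow up as $x\to\infty$, and one must handle the branch points $k=\pm1$ of $Q_0$ where $p_1$ is only continuous, not smooth; near those points the terms are bounded but the termwise-differentiation argument needs the lattice to avoid $\pm1$, which holds for generic $p$ but requires a limiting argument in $p$ otherwise. Everything else is routine verification of identities already present in the excerpt.
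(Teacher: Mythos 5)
Your overall strategy is the paper's: everything except convergence is already built into the construction (the interface relations \eqref{eq:continuity}, \eqref{eq:continuity1} are equivalent to \eqref{eq:K}, \eqref{eq:R-T-1}, \eqref{eq:R-T-2}, the Dirichlet condition and the $\varepsilon$-periodicity are structural), and the paper's proof indeed reduces to showing that the series in \eqref{sol-form} ``converge sufficiently well.'' But your quantitative step is too weak, and this is a genuine gap. From the iterated relation $R(p+N\varepsilon)=\prod_{l=1}^{N}\rho\bigl(p+\varepsilon(l-1/2)\bigr)R(p)$ together with \eqref{est:rho} you extract only $|R(k)|\le C(1+|k|)^{-2}$ along the real lattice. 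With that decay the series converges and gives continuity, but it does \emph{not} justify termwise differentiation: one $t$-derivative inserts a factor $k^{2}$ and already destroys convergence, and your assertion that polynomial factors $k^{m}$ are ``still dominated by a convergent series'' is unsupported at the $k^{-2}$ level. Since you need infinite differentiability in $x$ and $t$ on each open region (and hence that $\Psi$ solves \eqref{non-stat-eq} termwise), the argument as written fails exactly at the step you yourself flag as the main obstacle.

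The missing idea is that the iteration gives enormously more than one factor's worth of decay: for $k=p+N\varepsilon\to+\infty$ the product contains $N\asymp|k|/\varepsilon$ factors, each $\asymp1/(p+\varepsilon(l-1/2))^{2}$, so (using that $R$, by Proposition~\ref{pro:R-func} and \eqref{eq:R-for-Sch}, is bounded on $[-1,1]$) one gets $R(p+i0)=O(p^{-\infty})$ as $p\to+\infty$, and by evenness of $R_0$ also $R_0(p-i0)=O(|p|^{-\infty})$ as $p\to-\infty$. This super-polynomial decay absorbs every polynomial factor produced by differentiation and legitimizes all the termwise manipulations in both regions (outside the well the extra factors $T(k)$ are bounded and $\operatorname{Im}p_1(k)\ge0$ on the real lattice by \eqref{eq:Q_0}, so the exponentials stay bounded). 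With that replacement your proof closes; the remaining points you raise (re-indexing for periodicity, the behaviour near $k=\pm1$) are genuine but minor verifications consistent with the paper's treatment.
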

\begin{proof} 
We need only to check that both the series in~\eqref{sol-form}
converge sufficiently well. By~(\ref{eq:R}), for $N\in\N$, 
\begin{equation}\label{R-and-series}
R(p+N\varepsilon)=\prod_{l=1}^{N} \rho(p+\varepsilon (l-1/2)) \  R(p).
\end{equation}
In view of Proposition~\ref{pro:R-func} and~\eqref{eq:R-for-Sch}, \ 
$R$ is bounded on the interval $[-1,1]$. This,~\eqref{R-and-series} 
and~\eqref{est:rho} imply that  $R(p+i0)=O(p^{-\infty})$ 
as $p\to+\infty$. As $R_0$ is even,  $R_0(p-i0)=O(|p|^{-\infty})$ 
as $p\to-\infty$. These estimates imply the needed.
\end{proof}
\section{Asymptotics of the function $L_0$ as $\varepsilon\to0$}
\label{sec:as-R}
The results of this section are used to  get the asymptotics of the 
solutions $\Psi_n$.
 
Below, $C$ denotes positive constants independent of $\varepsilon$. 
For any given function $z\to f(z)$, \ $O(f(z))$ denotes a function such that 
$|O(f(z))|\le C |f(z)|$. 
\subsection{The asymptotics ``between'' the branching points of $Q$}
For $a>0$, we define $K_0(a)=\{p\in\C\,:\, |\im p|>a(|\re p|-1)\}$.
One has 
\begin{Th}\label{th:L0inK0}
 Fix $a>0$. For $p\in K_0(a)$, 
\begin{equation}\label{as:L-in-K} 
  L_0(p)=l_0(p)+
O\left(\frac{\varepsilon^2(|p|+1)}{(p^2-1)^{3/2}}\right).
\end{equation}
\end{Th}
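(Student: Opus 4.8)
The plan is to analyze the integral representation
\[
L_0(p)=\frac{\pi}{2i\varepsilon}\int_{\gamma(p)}
\frac{l_0(\zeta)\,d\zeta}{\cos^2\!\left(\frac{\pi(p-\zeta)}{\varepsilon}\right)}
\]
by exploiting the fact that the kernel $\frac{\pi}{2i\varepsilon\cos^2(\pi(p-\zeta)/\varepsilon)}$, viewed as a function of $\zeta$, is $\varepsilon$-periodic with a double pole at $\zeta=p$ and behaves like an approximate identity concentrated near $\zeta=p$ as $\varepsilon\to0$. First I would record the elementary identity
\[
\frac{\pi}{2i\varepsilon}\int_{\gamma(p)}\frac{d\zeta}{\cos^2(\pi(p-\zeta)/\varepsilon)}=1,
\qquad
\frac{\pi}{2i\varepsilon}\int_{\gamma(p)}\frac{(\zeta-p)\,d\zeta}{\cos^2(\pi(p-\zeta)/\varepsilon)}=0,
\]
(the first by a residue/periodicity computation, the second by oddness of the integrand in $\zeta-p$ along a suitable vertical contour). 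Using the first identity I write $L_0(p)-l_0(p)=\frac{\pi}{2i\varepsilon}\int_{\gamma(p)}\frac{l_0(\zeta)-l_0(p)}{\cos^2(\pi(p-\zeta)/\varepsilon)}\,d\zeta$, and using the second I subtract also the linear term, so that
\[
L_0(p)-l_0(p)=\frac{\pi}{2i\varepsilon}\int_{\gamma(p)}
\frac{l_0(\zeta)-l_0(p)-l_0'(p)(\zeta-p)}{\cos^2(\pi(p-\zeta)/\varepsilon)}\,d\zeta.
\]
This is the quantity to be estimated by $O\!\big(\varepsilon^2(|p|+1)(p^2-1)^{-3/2}\big)$.

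Next I would choose the contour $\gamma(p)$ carefully. Since $p\in K_0(a)$ (a fixed cone around the imaginary axis, staying away from the branch cuts $|\re\zeta|\ge1$, $\im\zeta=0$ of $l_0$), I take $\gamma(p)$ to be the vertical line $\re\zeta=\re p$, which lies entirely in $\C_0$ and at distance $\gtrsim |\im p|+\mathrm{dist}(\re p,[-1,1])$ from the branch points. On this line the kernel decays like $e^{-2\pi|\im(\zeta-p)|/\varepsilon}$, so the integral is effectively confined to $|\zeta-p|\lesssim\varepsilon$ (plus exponentially small tails). In that disc I Taylor-expand the numerator: $l_0(\zeta)-l_0(p)-l_0'(p)(\zeta-p)=O\!\big(\sup_{[\,p,\zeta]}|l_0''|\cdot|\zeta-p|^2\big)$. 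The key analytic input is a bound on $l_0''$: from $l_0(\zeta)=-i\ln\frac{Q_0-\zeta}{Q_0+\zeta}$ one computes $l_0'(\zeta)=\dfrac{2}{Q_0(\zeta)}$ (on $(-1,1)$ this reproduces $l_0=2\arcsin$, consistent with \eqref{eq:l0-arsin}), hence $l_0''(\zeta)=-\dfrac{2\zeta}{(Q_0(\zeta))^3}=-\dfrac{2\zeta}{(\zeta^2-1)^{3/2}}$. So on the relevant contour $|l_0''(\zeta)|\le C\,\dfrac{|p|+1}{|p^2-1|^{3/2}}$, uniformly for $\zeta$ within $O(\varepsilon)$ of $p$ in the cone $K_0(a)$ — here one uses that moving a bounded-in-$\varepsilon$ distance does not change the order of $|\zeta^2-1|$ or $|\zeta|+1$ (for $|p|$ large this is clear; for $|p|$ bounded away from $\pm1$ it is clear; and $K_0(a)$ keeps $|\zeta^2-1|$ bounded below by a positive multiple of $|\im\zeta|^2$, which dominates when $\re\zeta$ is near $\pm1$).

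Putting these together, $|L_0(p)-l_0(p)|\le C\,\dfrac{|p|+1}{|p^2-1|^{3/2}}\cdot\dfrac{\pi}{2\varepsilon}\displaystyle\int_{\gamma(p)}\dfrac{|\zeta-p|^2}{|\cos^2(\pi(p-\zeta)/\varepsilon)|}\,|d\zeta|$, and the remaining scalar integral is evaluated by the substitution $\zeta-p=\varepsilon s$: it equals $\varepsilon^2$ times $\pi\int_{\R}\dfrac{s^2\,ds}{2|\cos^2(\pi i s)|}=\varepsilon^2\cdot C_0$ with $C_0<\infty$ (the hyperbolic cosine gives fast convergence). This yields exactly the claimed $O\!\big(\varepsilon^2(|p|+1)(p^2-1)^{-3/2}\big)$. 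The one point requiring care — and the main obstacle — is uniformity of the $l_0''$ bound over $K_0(a)$ near the branch points $\pm1$: there $|p^2-1|^{3/2}$ can be small, so one must verify that the Taylor remainder estimate with $\zeta$ ranging over an $O(\varepsilon)$-neighborhood still only costs a constant factor, i.e. that $\sup_{|\zeta-p|\le C\varepsilon,\ \zeta\in\C_0}|l_0''(\zeta)|\le C'|l_0''(p)|$ plus a harmless additive term; this follows because inside the cone $K_0(a)$ one has $|\im p|\ge a(|\re p|-1)$, hence $|p^2-1|\gtrsim_a(\im p)^2+(|\re p|-1)^2\gtrsim_a|p-1||p+1|$ with the ratio stable under $O(\varepsilon)$ perturbations once $\varepsilon$ is small, and the exponentially small contribution of the part of $\gamma(p)$ with $|\zeta-p|\gg\varepsilon$ is absorbed into the error. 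The symmetry relations \eqref{eq:l0-sym} and \eqref{eq:L-prop} can be used to reduce to, say, $\re p\ge0$, slightly shortening the bookkeeping.
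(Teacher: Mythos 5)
Your overall plan coincides with the paper's: use the two kernel identities to write $L_0(p)-l_0(p)$ as the kernel applied to the second--order Taylor remainder of $l_0$, control the remainder through $|l_0''(\zeta)|\asymp(|\zeta|+1)\,|\zeta^2-1|^{-3/2}$, and gain $\varepsilon^2$ from the rescaling $\zeta-p=\varepsilon s$. (A harmless slip: $l_0'=2i/Q_0$, not $2/Q_0$; the size of $l_0''$ you use is correct.) The genuine gap is in the choice and handling of the contour. $K_0(a)=\{|\im p|>a(|\re p|-1)\}$ is not a cone around the imaginary axis staying away from the cuts: it contains points with $|\re p|>1$, and near the branch points it contains points arbitrarily close to the cuts. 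For $\re p>1$ the straight line $\re\zeta=\re p$ meets the cut $[1,+\infty)$, so it is not an admissible $\gamma(p)\subset\C_0$, and the integral over it is \emph{not} $L_0(p)$: deforming from an admissible vertical curve (which must cross the real axis inside $(-1,1)$) to your line picks up the jump of $l_0$ across the piece of cut between $1$ and $\re p$ integrated against the kernel --- a term of the same nature as $P=L_0-L_1$ in Lemma~\ref{le:P}. That term is exponentially small only when $\im p\gg\varepsilon$; in the regime $|p-1|=O(\varepsilon)$, which you yourself single out as the delicate one, it is of size comparable to the allowed error and is not covered by your Taylor argument (indeed, for $\zeta$ on the far side of the cut the expansion of $l_0$ about $p$ is simply false because of the jump).

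A second failure occurs even when the line is admissible ($|\re p|\le 1$): the asserted comparability $\sup_{|\zeta-p|\le C\varepsilon}|l_0''(\zeta)|\le C'|l_0''(p)|$ is wrong near the branch points. For $p=1+i\delta$ with $\delta\ll\varepsilon$ the line $\re\zeta=1$ passes through $\zeta=1$, where $l_0''$ blows up, so the "stability under $O(\varepsilon)$ perturbations" you invoke breaks down exactly where care is needed, and the estimate must then be run with the Taylor remainder in integral form along a contour kept away from $\pm1$. This is precisely what the paper does: for $0\le\im p\le1$ it chooses a vertical curve in $\C_0$ passing at distance of order $|p\mp1|$ from the branch points (crossing the real axis inside $(-1,1)$), which gives $M(t,p)\le C\,|p^2-1|^{-3/2}$ along the whole curve and hence the bound $C\varepsilon^2|p^2-1|^{-3/2}$; for $\im p\ge1$ it takes a curve at fixed distance from $\pm1$ and splits the integral at $\im t=\im p/2$, using the decay $e^{-2\pi\im(p-t)/\varepsilon}$ for the far part and $M\le C|p|^{-2}$ for the near part. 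With such a $p$-dependent contour your computation goes through; with the straight line $\re\zeta=\re p$ it does not, and the uniformity of \eqref{as:L-in-K} near $p=\pm1$ (and its very meaning for $\re p>1$) is left unproved.
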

This theorem describes the asymptotic behavior of  $L_0$ 
both for $\varepsilon\to 0$ and for $p\to\infty$.
\begin{proof} 
Below, we consider only $p\in K_0(a)$.  Remind that
$\gamma(p)$, the integration contour in~(\ref{for:R}), is vertical.
Assume that $\gamma(p)\subset K_0(a)$.

Note that 
$l_0''(p)=O\left(\frac{|p|+1}{(p^2-1)^{3/2}}\right)$. This implies that, 
for $t\in\gamma(p)$, 
\begin{equation}
  \label{eq:l-on-gamma}
  l_0(t)=l_0(p)+l_0'(p)(t-p)+O(|t-p|^2M(t,p)),\ \ 
M(t,p)=\max_{\zeta\in\gamma_{p,t}(p)}\left|\frac{|\zeta|+1}{(\zeta^2-1)^{3/2}}\right|, 
\end{equation}
where $\gamma_{p,t}(p)$ is the segment of $\gamma(p)$ between $p$ and $t$.
Substituting this representation in the formula for $L_0$ 
in~(\ref{for:R}), we get
\begin{equation}\label{eq:L-on-gamma}
L_0(p)=l_0(p)+I(p),\quad I(p)=O\left(\frac1\varepsilon 
\int_{\gamma(p)}\frac{M(t,p) |t-p|^2 d\,\im t}
{\left|\cos^2\frac\pi\varepsilon(t-p)\right|}\right). 
\end{equation}
First, we consider the case where $\im p\ge 1$. Then, we can 
assume that the distance between $\gamma(p)$ and the points $\pm 1$
is bounded away from zero by a constant independent of $p$.  
We obtain 
\begin{equation*}
\begin{split}
  |I(p)|&\le \frac{C}\varepsilon\int\limits_{\gamma(p),\, \im t\le \frac{\im p}2}
\frac{|t-p|^2 d\,\im t}
{\left|\cos^2\frac\pi\varepsilon(t-p)\right|}+
\frac{C}{\varepsilon|p|^2}\int\limits_{\gamma(p), \, \im t\ge \frac{\im p}2}
\frac{|t-p|^2 d\,\im t}
{\left|\cos^2\frac\pi\varepsilon(t-p)\right|}\\
&\le \frac{C}\varepsilon\int\limits_{\gamma(p), \, \im t\le \frac{\im p}2}
e^{-\frac{2\pi}\varepsilon \im (p-t)} |t-p|^2 d\,\im t+
\frac{C\varepsilon^2}{|p|^2}\le \frac{C\varepsilon^2}{|p|^2}.
\end{split}
\end{equation*}
Now, consider the case of  $0\le \im p\le 1$.
We can choose the contour $\gamma(p)$ so that, for $t\in\gamma(p)$, \ 
$M(t,p)\le  C\,|p^2-1|^{-3/2}$. Then, 
\begin{equation*}
  |I(p)|\le \frac{C}{\varepsilon|p^2-1|^{3/2}}\int_{\gamma(p)}\frac{ |t-p|^2 d\,\im p}
{\left|\cos^2\frac\pi\varepsilon(t-p)\right|}=\frac{C\varepsilon^2}{|p^2-1|^{3/2}}.
\end{equation*}
The estimates for $I(p)$ 
imply~\eqref{as:L-in-K} in the case where $\im p\ge 0$. The complementary
case is treated similarly. 
\end{proof}
Replacing~(\ref{eq:l-on-gamma}) with a similar formula containing  more terms 
of the Taylor series for $l_0$, one proves
\begin{Le} \label{L-series}
Fix $a,\,\delta>0$ and $J\in\N$. In $K_0(a)$ without the 
$\delta$-neighborhood of $p=\pm 1$,
\begin{equation}
  \label{eq:L0-as}
  L_0(p)=\sum_{j=0}^{J-1} \varepsilon^{2j} l_j(p)+
O\left(\varepsilon^{2J}(1+|p|)^{-2J}\right),
\end{equation}
$l_j$ being analytic and satisfying the estimate  
$l_j(p)=O((1+|p|)^{-2j})$.
\end{Le}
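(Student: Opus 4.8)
The plan is to follow the pattern of the proof of Theorem~\ref{th:L0inK0}, but to carry the Taylor expansion of $l_0$ along the vertical contour $\gamma(p)$ to order $2J$ rather than just to second order. Concretely, for $p$ in $K_0(a)$ outside the $\delta$-neighborhood of $\pm1$, and for $t\in\gamma(p)$ chosen so that the whole segment $\gamma_{p,t}(p)$ stays in $K_0(a)$ away from $\pm1$, I would write
\begin{equation*}
l_0(t)=\sum_{m=0}^{2J-1}\frac{l_0^{(m)}(p)}{m!}\,(t-p)^m
+O\bigl(|t-p|^{2J}\widetilde M(t,p)\bigr),
\end{equation*}
where $\widetilde M(t,p)=\max_{\zeta\in\gamma_{p,t}(p)}|l_0^{(2J)}(\zeta)|$. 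The bound $l_0''(p)=O((|p|+1)/(p^2-1)^{3/2})$ used in the previous proof generalizes: differentiating $l_0'=2/\sqrt{1-p^2}$ (equivalently, using the algebraic nature of $l_0$) gives $l_0^{(m)}(p)=O\bigl((1+|p|)^{1-2(m-1)}\bigr)=O\bigl((1+|p|)^{3-2m}\bigr)$ away from $\pm1$; in particular $l_0^{(2J)}(p)=O((1+|p|)^{3-4J})$, and on the contour $\widetilde M(t,p)\le C(1+|p|)^{3-4J}$ with a suitable choice of $\gamma(p)$.

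Next I would substitute this expansion into the integral formula for $L_0$ in~\eqref{for:R}. The term $m=0$ contributes $l_0(p)$ exactly, by the same residue/normalization computation as before (the contour integral of $\cos^{-2}(\pi(p-\zeta)/\varepsilon)$ over a full vertical line equals $2\varepsilon/\pi$, up to the constant in front). For $1\le m\le 2J-1$, the integral
\begin{equation*}
\frac{\pi}{2i\varepsilon}\int_{\gamma(p)}\frac{(\zeta-p)^m\,d\zeta}{\cos^2(\pi(p-\zeta)/\varepsilon)}
\end{equation*}
is $O(\varepsilon^{m})$ — one scales $\zeta-p=\varepsilon w$ and uses that $\int w^m/\cosh^2(\pi w)\,dw$ (along the appropriate line) converges; odd powers vanish by symmetry, so effectively only even $m=2j$ survive, each contributing $\varepsilon^{2j}$ times a constant multiple of $l_0^{(2j)}(p)$. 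Collecting these defines $l_j(p)$ as a fixed universal constant times $l_0^{(2j)}(p)$ (plus the $j=0$ term $l_j = l_0$), which is analytic in $\C_0$ and satisfies $l_j(p)=O((1+|p|)^{3-4j})=O((1+|p|)^{-2j})$ for $j\ge1$ (and for $j=0$ one uses the known $l_0(p)=O(\ln(2+|p|))$, which is dominated trivially). The remainder term, after the substitution, is bounded exactly as in the previous proof: $\frac{C}{\varepsilon}\int_{\gamma(p)}\widetilde M(t,p)\,|t-p|^{2J}\,|\cos^{-2}(\pi(t-p)/\varepsilon)|\,d\im t\le C\varepsilon^{2J}(1+|p|)^{3-4J}\le C\varepsilon^{2J}(1+|p|)^{-2J}$, using $4J-3\ge 2J$ for $J\ge1$.

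The main obstacle is bookkeeping rather than anything deep: one must (i) justify that the contour $\gamma(p)$ can be chosen, uniformly in $p\in K_0(a)$ outside the $\delta$-ball, so that along it all the derivative bounds $|l_0^{(m)}(\zeta)|\le C(1+|\re\zeta|)^{3-2m}$ hold — this is where the cone condition $|\im p|>a(|\re p|-1)$ and the exclusion of the $\delta$-neighborhood of $\pm1$ are used, exactly as in Theorem~\ref{th:L0inK0}, and one has to handle separately $\im p$ large versus $\im p$ bounded; (ii) check convergence and the precise size of the scalar integrals $\int w^m\cos^{-2}(\pi w)\,dw$ to confirm the $\varepsilon^{2j}$ scaling and the vanishing of odd terms; and (iii) verify that the error from the $\im t$ large part of the contour, where $|\cos^{-2}|$ is replaced by $e^{-(2\pi/\varepsilon)\im(p-t)}$ and $\widetilde M$ is replaced by its value at large argument, is still $O(\varepsilon^{2J}(1+|p|)^{-2J})$ — this split is identical in spirit to the two-integral estimate in the proof of Theorem~\ref{th:L0inK0}. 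Once these are in place, $l_j(p) = l_0(p)$ for $j=0$ and $l_j(p)=c_j\,l_0^{(2j)}(p)$ for an explicit numerical constant $c_j$ when $j\ge1$, and the analyticity and decay claims for $l_j$ follow from those of $l_0^{(2j)}$ on $K_0(a)\setminus(\{|p\mp1|<\delta\})$.
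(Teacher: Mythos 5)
Your overall strategy is exactly the paper's: the paper's proof of this lemma is literally the one line ``replace~\eqref{eq:l-on-gamma} by a Taylor expansion of $l_0$ with more terms,'' and your scheme (expand to order $2J$ along $\gamma(p)$, note that the monomial integrals $\int_{\gamma(p)}(\zeta-p)^m\cos^{-2}(\pi(p-\zeta)/\varepsilon)\,d\zeta$ can be evaluated on the straight vertical line, odd $m$ vanish, even $m=2j$ scale like $\varepsilon^{2j}$, so $l_j=c_j\,l_0^{(2j)}$, and the remainder is estimated as in Theorem~\ref{th:L0inK0}) is the intended argument.

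However, your quantitative bookkeeping is wrong at the one point where the lemma actually requires care, namely the decay rates. The claimed bound $l_0^{(m)}(p)=O\bigl((1+|p|)^{3-2m}\bigr)$ is false for $m\ge 4$: since $l_0(p)=2i\ln(2p)+\pi+O(|p|^{-2})$ (see~\eqref{eq:l0:as}), equivalently $l_0'(p)=2/\sqrt{1-p^2}$, each differentiation gains only one power of $|p|$ at infinity, so the true rate away from $\pm1$ is $l_0^{(m)}(p)=O\bigl((1+|p|)^{-m}\bigr)$, which is \emph{slower} decay than $(1+|p|)^{3-2m}$ once $m\ge4$; your stronger claim cannot be proved. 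Moreover, even granting your bound, the final chain ``$O((1+|p|)^{3-4j})=O((1+|p|)^{-2j})$ for $j\ge1$'' and ``$4J-3\ge 2J$ for $J\ge1$'' fails precisely at $j=1$ and $J=1$ (one would need $j,J\ge2$), so as written the proof does not deliver the stated estimates even in the first nontrivial case. The repair is straightforward and makes the lossy detour unnecessary: establish $l_0^{(m)}(p)=O\bigl((1+|p|)^{-m}\bigr)$ uniformly in $K_0(a)$ minus the $\delta$-neighborhoods of $\pm1$ (e.g.\ by Cauchy estimates in a slightly larger cone applied to $l_0-2i\ln(2p)-\pi$, or by direct differentiation of $2/\sqrt{1-p^2}$), and then your own computation gives $l_j=c_j\,l_0^{(2j)}=O((1+|p|)^{-2j})$ and a Taylor remainder of size $\varepsilon^{2J}\max|l_0^{(2J)}|=O\bigl(\varepsilon^{2J}(1+|p|)^{-2J}\bigr)$, which is exactly~\eqref{eq:L0-as}. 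One should also say a word (as the paper implicitly does) that the straight vertical line through $p$ may be used for the polynomial terms even when it leaves $\C_0$, since $(\zeta-p)^m\cos^{-2}(\pi(p-\zeta)/\varepsilon)$ is analytic between $\gamma(p)$ and that line and decays at $\pm i\infty$; the cut of $l_0$ only affects the remainder term, which stays on $\gamma(p)$.
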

\subsection{Analytic properties of $l_0$}\label{sec:l0}
In the sequel, we use some properties of $l_0$, the leading term
in the asymptotics of $L_0$. Let us discuss them here.  
 
Formula~(\ref{eq:l-def}) implies that
\point  for  $p\in\C_0$,
\begin{equation}
  \label{eq:l0near1}
  l_0(p)=\pi-2^{3/2}z+O(z^{3}),\quad z\to 0,
\end{equation} 
where $z=\sqrt{1-p}$, the function $p\mapsto z$  is analytic, 
and  $z>0$ if  $p<1$; 
\point $l_0$ is analytic in $z=\sqrt{1-p}$ in a neighborhood of $z=0$;
\point in $\C_+\cup(\R+i0)$, one has
\begin{equation}
  \label{eq:l0:as}
  l_0(p)=2i\ln(2p)+\pi+O(1/|p|^2),\quad |p|\to\infty,
\end{equation}
where the function $p\mapsto\ln p$ is analytic and defined so that $\ln i=i\pi/2$.
 
One also has
\begin{Le}\label{le:conf-prop}
The function $l_0$ conformally maps the first quadrant 
${\mathcal Q}_1=\{p\in\C:\, \im p,\,\re p>0\}$ onto the half-strip
$\Pi=\{z\in \C\,:\, 0<\re z<\pi,\,\im z>0\}$. The boundary of 
${\mathcal Q}_1$  is bijectively mapped onto the boundary of $\Pi$ 
in the following way:  $i\R_+$ is mapped onto itself, the interval 
$[0,1]$ is mapped onto the interval $[0,\pi]$, and the half-line 
$[1,+\infty)$ is mapped onto $\pi+i\R_+$.
\end{Le}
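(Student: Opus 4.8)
\textbf{Proof proposal for Lemma~\ref{le:conf-prop}.}

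The plan is to use the formula $l_0 = 2\arcsin p$ on $[0,1]$ together with analytic continuation and the argument principle, exploiting the fact that $l_0$ is already given as an explicit composition of elementary functions via~\eqref{eq:l-def}. First I would identify the images of the three boundary pieces of ${\mathcal Q}_1$. On $[0,1]$, by~\eqref{eq:l0-arsin}, $l_0(p) = 2\arcsin p$ increases monotonically from $0$ to $\pi$, so this segment maps bijectively onto $[0,\pi]$. On $i\R_+$, the symmetry relations~\eqref{eq:l0-sym} give $l_0(\overline p) = \overline{l_0(p)}$ and $l_0(-p) = -l_0(p)$; combining these for $p = it$, $t>0$, yields $\overline{l_0(it)} = l_0(-it) = -l_0(it)$, so $l_0(it) \in i\R$, and since $Q_0(it) \in i\R_+$ one checks from~\eqref{eq:l-def} that $\im l_0(it) > 0$ and that $l_0$ is monotone along $i\R_+$ with $l_0(0)=0$ and $l_0(it)\to +i\infty$ (using $l_0(p) = 2i\ln(2p) + \pi + o(1)$, which on the positive imaginary axis behaves like $2i\ln(2t)$ after accounting for the $\pi$; more carefully, on the ray just to the right one uses~\eqref{eq:l0:as} which shows $\re l_0 \to 0^+$ on $i\R_+ + 0$ — I would double-check the exact limiting value of $\re l_0$ on $i\R_+$ is $0$, reconciling the ``$+\pi$'' in~\eqref{eq:l0:as} with the endpoint value $l_0(i0)=0$ via the branch of the logarithm). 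On $[1,+\infty)$, from~\eqref{eq:l0near1} we have $l_0(1) = \pi$, and by~\eqref{eq:Q_0} $Q_0(p+i0) \in \R_+$ for $p>1$, so $\rho_0(p+i0)$ is a positive real number less than $1$ for $p>1$, hence $l_0 = -i\ln\rho_0 \in i\R_+ + \pi$; as $p\to+\infty$, \eqref{eq:l0:as} gives $l_0 \sim \pi + 2i\ln(2p) \to \pi + i\infty$. Thus $\partial{\mathcal Q}_1$ is mapped continuously and injectively onto $\partial\Pi$, traversed once.

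With the boundary correspondence established, I would invoke the argument principle (boundary-correspondence principle for conformal maps): since $l_0$ is analytic in ${\mathcal Q}_1$, continuous up to the boundary, and maps $\partial{\mathcal Q}_1$ homeomorphically onto the Jordan curve $\partial\Pi$ (suitably closed up at infinity, or working on a large truncation and passing to the limit), the winding number argument shows $l_0$ takes every value in the interior $\Pi$ exactly once and no value outside $\overline\Pi$. This gives that $l_0\colon {\mathcal Q}_1 \to \Pi$ is a bijection, and being analytic with nonvanishing derivative (which follows once injectivity is known), it is conformal.

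The main obstacle I anticipate is the behavior at the two ``corners'' of ${\mathcal Q}_1$ that are genuine singular points of the map — namely $p=1$ (a square-root branch point of $Q_0$, so $l_0$ is analytic in $z=\sqrt{1-p}$ by~\eqref{eq:l0near1}, meaning the right angle at $p=1$ opens to a straight angle at $z=0$, consistent with $[0,1]$ and $[1,\infty)$ meeting at angle $\pi$ in the $z$-plane and mapping to the corner of $\Pi$ at $\pi$) and the point at infinity (where $l_0$ has a logarithmic behavior, so the two infinite boundary rays $i\R_+$ and $[1,+\infty)$ both map to ``$\infty$'' along the two sides of the strip). Handling these requires either a careful truncation-and-limit argument or a local analysis: near $p=1$ one uses~\eqref{eq:l0near1} directly to see local injectivity, and near $\infty$ one uses~\eqref{eq:l0:as} to see that $l_0$ maps a neighborhood of $\infty$ in $\overline{{\mathcal Q}_1}$ injectively to a neighborhood of $\infty$ in $\overline\Pi$ (the map $p\mapsto 2i\ln(2p)+\pi$ sends the first quadrant to the strip $\{0<\re z<\pi,\ \im z$ large$\}$ bijectively). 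Once local injectivity at these exceptional points and on the boundary arcs is in hand, the global statement follows from the argument principle as above; the interior nonvanishing of $l_0'$ is then automatic.
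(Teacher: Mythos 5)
Your overall strategy is sound and, in fact, supplies precisely the details the paper declines to give: the paper's ``proof'' is the single remark that the statements ``easily follow from~\eqref{eq:l-def}'', the intended shortcut presumably being that $l_0$ is the analytic continuation of $2\arcsin p$ (cf.~\eqref{eq:l0-arsin}), so the lemma reduces to the classical fact that $w\mapsto\sin(w/2)$ maps the half-strip $\Pi$ conformally onto the first quadrant. Your route --- computing the image of each boundary arc, checking monotonicity there, and then invoking the argument principle on a truncated contour, with separate local analyses at the corner $p=1$ (where $l_0$ is analytic in $z=\sqrt{1-p}$ by~\eqref{eq:l0near1}) and at infinity (where~\eqref{eq:l0:as} provides the model map $p\mapsto 2i\ln(2p)+\pi$) --- is the standard direct verification and is correct; it is longer than the $\arcsin$ shortcut but self-contained and uses only formulas already established in the paper.

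One concrete slip to fix: for $p>1$ you assert that $\rho_0(p+i0)$ is a \emph{positive} real number less than $1$. In fact $Q_0(p+i0)=\sqrt{p^2-1}<p$ there, so $\rho_0(p+i0)=\frac{Q_0-p}{Q_0+p}$ is \emph{negative}, with modulus $(p-\sqrt{p^2-1})^2<1$; if it were positive, $-i\ln\rho_0$ would land on $i\R_+$, not on $\pi+i\R_+$, contradicting your own conclusion. The correct bookkeeping is that the branch of the logarithm continued from $[0,1]$ (where $\rho_0(1)=-1$ and $l_0(1)=\pi$) has $\arg\rho_0(p+i0)=\pi$, so that $l_0(p+i0)=\pi+2i\ln\bigl(p+\sqrt{p^2-1}\bigr)$; this formula also gives at once the strict monotonicity along $[1,+\infty)$ that you need (but did not state) for the bijectivity of the boundary map on that arc, exactly as your explicit formula $l_0(it)=2i\ln\bigl(t+\sqrt{1+t^2}\bigr)$ does on $i\R_+$. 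With that correction the proof goes through as you outline it.
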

The statements of the lemma easily following from~(\ref{eq:l-def}), 
we omit the details. 

Finally, discuss the integral $\int_0^p l_0(p)\,dp$. 
Integrating by parts, we check that 
\begin{equation}\label{eq:int:l0}
\int_0^p l_0(p)\,dp=pl_0(p)-2iQ_0(p)-2,\quad p\in\C_0,
\end{equation}
where the integral is taken along a smooth curve in $\C_0$.
This formula implies, in particular, that
\begin{equation}\label{eq:int:l0:1}
\int_0^1 l_0(p)\,dp=\pi-2,
\end{equation}
and that
\begin{equation}
  \label{eq:int:l0-pi}
\int_1^p (l_0(p)-\pi)\,dp=2ip (\ln(2p)-1)+O(1/|p|),\quad
p\in \C_+\cup (\R+i0),\quad |p|\to\infty.
\end{equation}
\subsection{Asymptotic behavior of  $L_0$ near the point $p=1$}
The function $L_0$ is analytic in $\C_0$ and, in particular, between the points
$p=\pm1$.  We have assumed that $0<\varepsilon<1$. This allows 
to use~(\ref{eq:L}) to study $L_0$ to the right of $p=1$. The function $l_0'$ 
(staying in the right hand side in~(\ref{eq:L})) has a square root singularity 
at $p=1$: representation~(\ref{eq:l0near1}) shows that, in a neighborhood of 
$p=1$, $l_0'(p)=(2/(1-p))^{1/2}+O((1-p)^{1/2})$. Therefore, $L_0$ has  square root 
singularities  at $p=1+\varepsilon (1/2+l)$, $l=0,1,2,3\dots$. So, the asymptotic 
behavior of $L_0$ near the point $p=1$ is quite non-trivial. Here, we discuss
$L_0$ to left of $p=1$. And in the next section, we discuss its behavior to the 
right of $p=1$. 
\begin{Th} \label{th:as:L0}
Fix $a>0$ and sufficiently small $\delta>0$. 
In $K_0(a)$, in the $\delta$-neighborhood of $p=1$,
\begin{equation}
  \label{eq:L0near1}
  L_0(p)=\pi+\sqrt{2\varepsilon}\, \zeta\,(\,(p-1)/\varepsilon\,)+
O(\varepsilon^{3/2}+|p-1|^{3/2}),
\end{equation} 
where $\zeta$ is analytic  in $\C\setminus[1/2,\infty)$ and
defined there by the formula
\begin{equation}
  \label{eq:zeta}
  \zeta(t)=\lim_{L\to+\infty}
\left(\sum_{l=0}^{L-1}\frac1{\sqrt{l+1/2-t}}-2\sqrt{L}\right),
\end{equation}
the branch of the square root being fixed so that $\sqrt{\R_+}=\R_+$.
In  $C_a=\{t\in\C\,:\, |\im t|\ge a \re t\}$, the function $\zeta$ admits 
the asymptotic representation
\begin{equation}
  \label{eq:zeta-as}
  \zeta(t)=-2\sqrt{-t}+O(t^{-\frac32}),\quad |t|\to\infty.
\end{equation}
\end{Th}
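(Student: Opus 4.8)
The plan is to start from the defining formula for $L_0$ in~\eqref{for:R} and localize the analysis near $p=1$ by exploiting the singularity structure of $l_0'$ rather than $l_0$ itself. The key observation is the one already made in the text: $l_0'(p)=(2/(1-p))^{1/2}+O((1-p)^{1/2})$ near $p=1$, so that the recursion~\eqref{eq:L} forces $L_0$ to accumulate square-root singularities at the points $p=1+\varepsilon(l+1/2)$, $l=0,1,2,\dots$. This suggests writing $L_0(p)=\pi+\tilde L(p)$ and comparing $\tilde L$ with a model function built by summing these singular contributions. Concretely, I would telescope~\eqref{eq:L} backwards from a point $p_0$ with $\re p_0$ moderately negative (where Theorem~\ref{th:L0inK0} gives $L_0(p_0)=l_0(p_0)+O(\varepsilon^2)$), obtaining
\[
L_0(p)=L_0(p_0)+\varepsilon\sum_{p_0\le p-\varepsilon(l+1/2)} l_0'\bigl(p-\tfrac\varepsilon2(2l+1)\bigr)\quad\text{(shifts by }\varepsilon\text{)},
\]
or rather the version of this identity appropriate to the half-integer grid through $p$; then I would split $l_0'$ into its singular part $s(q)=(2/(1-q))^{1/2}$ and a remainder that is bounded and Hölder-$1/2$ near $p=1$. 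The singular part produces $\varepsilon\sum_l s(p-\varepsilon(l+1/2))$, which after the substitution $t=(p-1)/\varepsilon$ becomes $\sqrt{2\varepsilon}\sum_l (l+1/2-t)^{-1/2}$; this is exactly the (divergent) series that, after subtracting $2\sqrt L$, converges to $\zeta(t)$ as in~\eqref{eq:zeta}. The counterterm $2\sqrt L$ is precisely what matches the growth of the partial sums against the smooth, already-controlled value $L_0(p_0)$, which is where the $\pi$ and the $O(\varepsilon^{3/2})$ get booked.

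The main steps, in order, are: (1) establish the convergence and analyticity of the limit in~\eqref{eq:zeta} in $\C\setminus[1/2,\infty)$, by Abel/Euler–Maclaurin summation — comparing $\sum_{l=0}^{L-1}(l+1/2-t)^{-1/2}$ with $\int_0^{L}(x+1/2-t)^{-1/2}dx=2\sqrt{L+1/2-t}-2\sqrt{1/2-t}$, so that $\zeta(t)=-2\sqrt{1/2-t}-\tfrac12\sum_{l\ge0}[\text{second difference corrections}]$, giving both analyticity and, by tracking the branch, the large-$t$ asymptotics~\eqref{eq:zeta-as}; (2) derive the finite telescoped identity for $L_0$ from~\eqref{eq:L}, carefully choosing $p_0$ in $K_0(a)$ at fixed negative real part so that Theorem~\ref{th:L0inK0} applies and $l_0(p_0)=\pi+O(1)$ with explicit control; (3) split $l_0'=s+(l_0'-s)$ near $p=1$, bound the Riemann-sum error for the Hölder-$1/2$ remainder by $O(\varepsilon^{1/2}\cdot\varepsilon\cdot\varepsilon^{-1/2})\cdot\text{(number of terms)}$ — more precisely, the contribution of the regular part over a $\delta$-window is $\int l_0'\,dp+O(\varepsilon^{1/2})\cdot O(\delta^{1/2})$, which combines with~\eqref{eq:int:l0} to reconstruct $l_0(p)-\pi$ plus the advertised $O(\varepsilon^{3/2}+|p-1|^{3/2})$; (4) identify the singular Riemann sum with $\sqrt{2\varepsilon}\,\zeta((p-1)/\varepsilon)$ up to the subtracted $2\sqrt L$ term, and check that this subtraction is absorbed exactly into $L_0(p_0)-\pi-\bigl(l_0(p_0)-\pi\bigr)=O(\varepsilon^2)$ together with the reconstructed smooth part.

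The hardest part will be step (3) combined with the bookkeeping in step (4): one must make the matching between the subtracted divergence $2\sqrt L$ in the definition of $\zeta$ and the "smooth background'' $L_0(p_0)$ completely quantitative and $\varepsilon$-uniform, so that nothing worse than $O(\varepsilon^{3/2})$ leaks through. This requires choosing the cutoff index $L$ (equivalently, the point $p_0$) so that $\varepsilon L$ is bounded away from $0$ and from $1$, and then showing that the difference between the honest partial sum $\varepsilon\sum_{l<L} l_0'(p-\varepsilon(l+1/2))$ and its ``singular $+$ integral'' approximation is $O(\varepsilon^{3/2})$ uniformly — which needs the full $l_0'=s+\text{(analytic in }\sqrt{1-p})$ expansion near $1$ and plain boundedness of $l_0'$ away from $1$. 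A secondary technical point is verifying analyticity of $L_0(p)-\pi-\sqrt{2\varepsilon}\,\zeta((p-1)/\varepsilon)$ across the grid points $p=1+\varepsilon(l+1/2)$: both terms individually have square-root branch points there, but Proposition~\ref{pro:R-func} guarantees $L_0$ is analytic in $\C_0$, and one checks that $\zeta$'s branch points at $t=l+1/2$ reproduce exactly those of $L_0$ via~\eqref{eq:L}, so the difference extends analytically. Finally, the asymptotic estimate~\eqref{eq:zeta-as} in the cone $C_a$ follows from step (1) by comparing with $-2\sqrt{-t}$ and controlling the Euler–Maclaurin tail, uniformly for $\arg(-t)$ bounded away from $0$.
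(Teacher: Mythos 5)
Your route is genuinely different from the paper's. The paper never touches the difference equation~\eqref{eq:L} here: it substitutes the local expansion~\eqref{eq:l0near1} of $l_0$ directly into the integral representation~\eqref{for:R}, evaluates the resulting model integral $i\pi\int_{\gamma(t)}\sqrt{-s}\,\cos^{-2}(\pi(t-s))\,ds$ by the residue theorem (this is where the series~\eqref{eq:zeta} and the subtracted $2\sqrt L$ come from), and bounds the remainder $\Delta$ by contour estimates, getting $O(\varepsilon^{3/2})$ for $|p-1|\le\varepsilon$ and $O(|p-1|^{3/2})$ otherwise. Your plan instead telescopes~\eqref{eq:L} from a reference point $p_0$ controlled by Theorem~\ref{th:L0inK0} and splits $l_0'=s+(l_0'-s)$ with $s(q)=\sqrt2\,(1-q)^{-1/2}$; the singular Riemann sum is indeed exactly $\sqrt{2\varepsilon}\sum_{l'\ge0}(l'+1/2-t)^{-1/2}$ on the grid, the regular part reconstructs $l_0(p)-l_0(p_0)-\int_{p_0}^p s\,dq$ with midpoint-rule error $O(\varepsilon^{3/2})$ (cells at distance $\varepsilon k$ from $1$ contribute $O(\varepsilon^{3/2}k^{-3/2})$), and the mismatch between the subtracted $2\sqrt N$ (with $N\varepsilon=p-p_0$) and the boundary term $2\sqrt{2\varepsilon}\sqrt{N-t}$ produced by $\int_{p_0}^p s\,dq$ is $\sqrt{2\varepsilon}\,O((N-|t|)^{-3/2})=O(\varepsilon^2)$, so the bookkeeping you worry about does close. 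Also, your ``secondary technical point'' about branch points at $p=1+\varepsilon(l+1/2)$ is unnecessary: the statement is only claimed in $K_0(a)$, which avoids $[1,\infty)$, and there both $L_0$ and $\zeta((p-1)/\varepsilon)$ are separately analytic.

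The one genuine gap is in your step (1), and it propagates into step (4). Comparing $\sum_{l=0}^{L-1}(l+1/2-t)^{-1/2}$ with $\int_0^L(x+1/2-t)^{-1/2}dx$ is only a first-order (rectangle-rule) comparison; it yields the leading term $-2\sqrt{1/2-t}$, which differs from $-2\sqrt{-t}$ by $-\tfrac12(-t)^{-1/2}+O(|t|^{-3/2})$, i.e.\ by a term of order $|t|^{-1/2}$, far larger than the error claimed in~\eqref{eq:zeta-as}. So ``controlling the Euler--Maclaurin tail'' is not enough: the first correction must be \emph{evaluated}, not just bounded --- it equals $\tfrac12(1/2-t)^{-1/2}$ in the limit $L\to\infty$ and cancels the $|t|^{-1/2}$ discrepancy --- or, more simply, you should use the symmetric midpoint comparison $(l+1/2-t)^{-1/2}$ versus $\int_l^{l+1}(x-t)^{-1/2}dx=2\sqrt{l+1-t}-2\sqrt{l-t}$, whose per-cell error is $O((l-t)^{-5/2})$; that is exactly the computation the paper uses to prove~\eqref{eq:zeta-as}. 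The same second-order accuracy is mandatory in your matching step: a first-order tail comparison leaves an error $\sqrt{\varepsilon}\cdot O((N-|t|)^{-1/2})=O(\varepsilon)$, which exceeds the allowed $O(\varepsilon^{3/2}+|p-1|^{3/2})$ when $|p-1|\lesssim\varepsilon^{2/3}$. With the midpoint comparison used consistently in both places, your argument goes through.
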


\begin{Rem}
In the case of Theorem~\ref{th:as:L0},
the second term in~(\ref{eq:L0near1}) satisfies the estimate
\begin{equation}
  \label{eq:zeta-est}
  \sqrt{2\varepsilon}\,\zeta\,((p-1)/\varepsilon)=
O(\varepsilon^{1/2}+|p-1|^{1/2}).
\end{equation}
\end{Rem}
\begin{proof} Assume that $\delta$ is sufficiently small, and 
that $p\in K_0(a)$ is in the $\delta$-neighbor\-hood of $1$. 
In view of~\eqref{eq:l0:as}, one can use~(\ref{eq:l0near1}) 
as a rough approximation for $l_0$ on the whole curve $\gamma(p)$. 
Substituting~(\ref{eq:l0near1}) into the second formula 
in~(\ref{for:R}), we get
\begin{gather}\nonumber
  L_0(p)=\pi+\sqrt{2\varepsilon}\,\zeta((p-1)/\varepsilon)+\Delta,\\
\label{def:zeta:tmp}
  \zeta(t)=i\pi\,\int_{\gamma(t)}\frac{\sqrt{-s}\,ds}{\cos^2(\pi(t-s))},\qquad
\Delta=\frac1\varepsilon\int_{\gamma(p)}\frac{O((1-q)^{3/2})\,dq}
{\cos^2(\pi(p-q)/\varepsilon)}\,,
\end{gather}
where $\gamma(t)\subset C_a$, $\gamma(p)\subset K_0(a)$, and $\sqrt{-s}>0$ when 
$s<0$.
Using the Residue theorem, we obtain
\begin{equation*}
  \zeta(t)=\lim_{L\to\infty}\left(\sum_{l=0}^{L-1}\frac1{\sqrt{l+1/2-t}}-I_L\right),
\quad I_L=-i\pi\int_{\gamma(t-L)}\frac{\sqrt{-s}\,ds}{\cos^2(\pi(t-s))}.
\end{equation*}
In this formula, for $L$ sufficiently large, we can choose $\gamma(t-L)=i\R+t-L$. 
So, as $L\to\infty$,
\begin{align*}
   I_L&=-i\pi\int_{i\R}\frac{\sqrt{L-t-\tau}\,d\tau}{\cos^2(\pi\tau)}=\\
&=-i\pi\int_{-iL}^{iL}\frac{(\sqrt{L}+O(\tau/\sqrt{L}))\,d\tau}{\cos^2(\pi\tau)}+
O\,\left(\int_{|\im \tau|\ge L}\frac{\sqrt{|\tau|}\,d\im \tau}
{{\rm ch}^2\,(\pi\im\tau)}\right)=\\
&=2\sqrt{L}+O(1/\sqrt{L}).
\end{align*}
This implies~(\ref{eq:zeta}).
 
Let us estimate $\Delta$. Set $t=p-1$. First, we check that 
\begin{equation}
\label{est:Delta}
|\Delta|\le \frac{C}\varepsilon\int_{-\infty}^\infty e^{-2\pi|\im(t-\tau)|/\varepsilon}
(|t|+|\im \tau|)^{3/2}\,d\,\im\tau.
\end{equation}
One has $\Delta=\frac1\varepsilon
\int_{\gamma(t)}\frac{O(\tau^{3/2})\,d\tau}{\cos^2(\pi(t-\tau)/\varepsilon)}$.
If $\re t<0$, we choose $\gamma(t)=\{\tau\in\C\,:\,\re \tau=\re t\}$,
then, $|\tau|\le |\re t|+|\im \tau|\le |t|+|\im\tau|$. This leads 
to~\eqref{est:Delta}. If $\re t\ge 0$, we pick 
$\gamma(t)=\{\tau\in\C\,:\,\im\tau=k(t)\,\re \tau\}$ with $k(t)=\im t/\re t$. 
Clearly, $|k(t)|\ge a$. Now, we have  $|\tau|\le 
|\re \tau|+|\im \tau|\le (1+1/a)|\im\tau|$. This again implies~\eqref{est:Delta}. 

If $|t|\le\varepsilon$, then, after the change of the variable 
$\tau:=\tau/\varepsilon$ in~\eqref{est:Delta}, we see that  
$\Delta=O(\varepsilon^{3/2})$. Otherwise, we use the estimates
\begin{align*}
|\Delta|&\le \frac{C}\varepsilon
\left(\int_{-2|t|}^{2|t|}+\int_{-\infty}^{-2|t|}+\int_{2|t|}^{\infty}\right) 
e^{-2\pi|\im t-y|/\varepsilon}
(|t|+|y|)^{3/2}\,d\,y\\
&\le\frac{C}\varepsilon \int_{-\infty}^{\infty} 
e^{-2\pi|\im t-y|/\varepsilon}|t|^{3/2}dy+\frac{C}\varepsilon
\left(\int_{-\infty}^{-2|t|}+\int_{2|t|}^{\infty}\right)
e^{-2\pi |\im t-y|/\varepsilon}|y|^{3/2}\,dy.
\end{align*}
They imply that $|\Delta|\le C|t|^{3/2}$. The obtained estimates 
for $\Delta$ justify the error term estimate in~(\ref{eq:L0near1}).

Finally, let us study the function $\zeta$. Its analyticity 
follows from~\eqref{def:zeta:tmp}. For $t\in C_a$, 
sufficiently large $|t|$ and $l\ge 0$,
\begin{equation*}
\frac12\,(l+1/2-t)^{-1/2}=(l+1-t)^{1/2}-(l-t)^{1/2}+O((l-t)^{-5/2})
\end{equation*}
uniformly in $t$ and $l$. Substituting this representation into~(\ref{eq:zeta}), 
we easily arrive at~\eqref{eq:zeta-as}. This competes the proof.  
\end{proof}
\subsection{The asymptotics of $L_0$ to the right of $p=1$} 
\label{sec:L1}
Fix $a>0$. Here, we describe the asymptotics for $L_0$ in the domain 
$K_1(a)=\{p\in\C_+\,:\,\im p\ge a\,\re (1-p)\}$.

Let $\C_1=\C\setminus\{z=x\in\R\,:\,x\le1\}$. We continue analytically 
the function $l_0$ from $\C_0$ to $\C_1$ across  $\C_+$ and denote 
the obtained function by $l_1$.

Let  $L_1$ be a solution to the equation 
\begin{equation}
  \label{eq:L1}
L_1(p+\varepsilon/2)-L_1(p-\varepsilon/2)=\varepsilon l_1'(p), \quad 
p\pm\varepsilon/2\in \C_1.
\end{equation}
Assume that  $L_1$ is analytic in $\C_1$. Then, in $\C_+$, both the functions 
$L_0$ and $L_1$ are analytic and  satisfy one and the same difference equation. 
This implies that,  in $\C_+$,  the function $P=L_0-L_1$ is an analytic and  
$\varepsilon$- periodic. To get the asymptotics of $L_0$, we
construct $L_1$ and to analyze   $L_1$ and $P$. 

The function $L_1$ is constructed similarly to $L_0$.
It equals the right hand side 
of the second formula from~(\ref{for:R}) 
with $\gamma(p)\subset \C_1$. One has
\begin{equation}\label{as:L1-in-D} 
  L_1(p)=l_1(p)+O\left(\frac{\varepsilon^2p}{(p^2-1)^{3/2}}\right),
\quad p\in K_1(a), 
\end{equation}
This representation is proved in the same way as~(\ref{as:L-in-K}).

Fix $\delta>0$. Reasoning as in the proof of Theorem~\ref{th:as:L0}, 
one checks that,  in the $\delta$-neighborhood of $p=1$ in $K_1(a)$, 
\begin{equation}
  \label{eq:L1near1}
  L_1(p)=\pi-i\sqrt{2\varepsilon}\, \zeta\,(\,(1-p)/\varepsilon\,)+
O(\varepsilon^{3/2}+|p-1|^{3/2}).
\end{equation} 

Finally, we discuss the function $P$. Its Fourier series is described by
\begin{Le} \label{le:P} For $p\in \C_+$,  
\begin{gather}
  \label{for:P}
P(p)=\sum_{k=1}^\infty e^{2\pi i k (p-1-\varepsilon/2)/\varepsilon}P_k,\\
\label{for:Pk}
P_k=2e^{i\pi /4} \left(\frac{\varepsilon}{k}\right)^{1/2}+
O\left( \left(\frac{\varepsilon}{k}\right)^{3/2}\right).
\end{gather}
\end{Le}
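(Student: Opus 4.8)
The plan is to exploit the fact that $P=L_0-L_1$ is analytic and $\varepsilon$-periodic in $\C_+$, hence it has a Fourier expansion in $e^{2\pi i(p-1-\varepsilon/2)/\varepsilon}$; the normalization $(p-1-\varepsilon/2)$ in the exponent is chosen so as to match the location of the nearest branch points of $L_0$ and $L_1$ (at $1+\varepsilon/2$ and $1-\varepsilon/2$, respectively) and thus simplify the contour computation below. First I would argue that only positive frequencies $k\ge1$ occur. For this, observe that as $\im p\to+\infty$ both $L_0$ and $L_1$ tend to $l_0(p)=l_1(p)=2i\ln(2p)+\pi+O(1/|p|^2)$ by \eqref{eq:l0:as}, \eqref{as:L-in-K} and \eqref{as:L1-in-D}, so $P(p)\to0$ as $\im p\to+\infty$; combined with periodicity this kills the constant term and all terms $e^{-2\pi i k(\cdot)/\varepsilon}$ with $k\ge1$, leaving only \eqref{for:P}.

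Next I would compute the Fourier coefficients
\[
P_k=\frac1\varepsilon\int_{\gamma}e^{-2\pi i k(p-1-\varepsilon/2)/\varepsilon}P(p)\,dp,
\]
where $\gamma$ is a horizontal segment of length $\varepsilon$ inside $\C_+$ (any such $\gamma$ gives the same value by periodicity). The idea is to push $\gamma$ downward toward the real axis: $P=L_0-L_1$ is analytic in $\C_+$, but $L_0$ continues across $(-\infty,1)$ while $L_1$ continues across $(1,+\infty)$, so the obstruction to pushing $\gamma$ all the way to $\R$ is precisely the square-root branch points — those of $L_0$ at $p=1+\varepsilon(1/2+l)$, $l\ge0$, and those of $L_1$ at $p=1-\varepsilon(1/2+l)$, $l\ge0$ (cf. the discussion preceding Theorem \ref{th:as:L0}). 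Collapsing $\gamma$ onto the real line turns $P_k$ into a sum of contributions wrapping around these branch cuts. Because $e^{-2\pi i k(p-1-\varepsilon/2)/\varepsilon}$ is itself $\varepsilon$-periodic, all the $L_0$-cuts contribute the same integral (a loop around a single $(p-1-\varepsilon/2)^{1/2}$-type singularity against the oscillating exponential), and similarly for the $L_1$-cuts; one then sums the resulting geometric-type series in $l$.

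The main quantitative input is the local behavior of $L_0$ near its first branch point. Near $p=1+\varepsilon/2$ one has, from \eqref{eq:L} and $l_0'(p)=(2/(1-p))^{1/2}+O((1-p)^{1/2})$, that $L_0(p)=L_0(p-\varepsilon)+\varepsilon l_0'(p-\varepsilon/2)$, and since $L_0(p-\varepsilon)$ is analytic near $p=1+\varepsilon/2$, the singular part of $L_0$ there is $\varepsilon(2/( \varepsilon/2+\varepsilon/2-p+1))^{1/2}=\varepsilon\cdot\sqrt{2}\,( \varepsilon - (p-1-\varepsilon/2) - \varepsilon/2 )^{-1/2}$; more cleanly, the jump of $L_0$ across the cut emanating from $1+\varepsilon/2$ is $2i\sqrt{2\varepsilon}\,(\,\cdot\,)^{-1/2}+O((\cdot)^{1/2})$ in the local variable. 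Evaluating $\frac1\varepsilon\int$ of $e^{-2\pi ik(p-1-\varepsilon/2)/\varepsilon}$ times this jump — a standard loop integral giving a Fresnel/Gamma-type constant — produces the factor $2e^{i\pi/4}(\varepsilon/k)^{1/2}$, with the $O((\cdot)^{1/2})$ correction in the jump yielding $O((\varepsilon/k)^{3/2})$ after the same scaling. The $L_1$-cuts, by the symmetry $l_1'(p)=(2/(1-p))^{1/2}$-type behavior and \eqref{eq:L1near1}, contribute terms of the same or smaller order that either cancel or are absorbed into the error; checking this cancellation/absorption is the delicate bookkeeping step.

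The step I expect to be the main obstacle is controlling the sum over $l\ge1$ of the subleading branch points together with the uniformity of the local expansion of $L_0$ near $1+\varepsilon(1/2+l)$: one needs that the singular parts at the farther branch points, after being multiplied by the oscillatory exponential and integrated, form a convergent series whose total is still $O((\varepsilon/k)^{3/2})$, which requires the local analysis of $L_0$ (via iterated use of \eqref{eq:L}) to be uniform in $l$ and to produce enough decay in $l$. Once this uniformity is in hand, \eqref{for:Pk} follows by summing the geometric series and collecting errors.
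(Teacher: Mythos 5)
Your first step (only nonnegative\,--\,in fact only positive\,--\,frequencies occur, because $P=L_0-L_1$ is $\varepsilon$-periodic in $\C_+$ and tends to $0$ as $\im p\to+\infty$ by \eqref{as:L-in-K}, \eqref{as:L1-in-D} and $l_0=l_1$ in $\C_+$) is correct and is only a cosmetic variant of the paper's derivation of \eqref{for:P}. The gap is in your computation of $P_k$. The coefficient $P_k$ is an integral over a \emph{single} period; collapsing that contour onto the real axis gives the boundary value of $L_0-L_1$ on one interval of length $\varepsilon$, not ``a sum of contributions wrapping around'' all the cuts at $1+\varepsilon(1/2+l)$ and $1-\varepsilon(1/2+l)$. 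If instead you push the period contour below the axis, its two vertical sides no longer cancel, because the continuation of $P$ across $\R$ is not periodic: in $\C_-$ one has $P(p+\varepsilon/2)-P(p-\varepsilon/2)=\varepsilon\,(l_0'-l_1')(p)\neq 0$, and these side terms are of the same order as your main term. Moreover the decay in $l$ that you yourself identify as the main obstacle does not exist: iterating \eqref{eq:L} shows that the singular part of (the continuation of) $L_0$ at \emph{every} point $1+\varepsilon(1/2+l)$ comes from a single translate $\varepsilon\,l_0'\bigl(\,\cdot\,-\varepsilon(1/2+l)\bigr)$ and hence has the same strength $\propto\varepsilon\sqrt2$ for all $l$; a series of ``the same integral'' over all cuts would simply diverge, which signals that the decomposition itself is miscounted rather than a uniformity technicality. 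Finally, the $L_1$ contribution, which you propose to show ``cancels or is absorbed into the error'', enters at the same order $(\varepsilon/k)^{1/2}$ and is precisely what fixes the constant $2e^{i\pi/4}$ in \eqref{for:Pk}; deferring it leaves the leading term unproved.

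For comparison, the paper bypasses all of this with an exact identity: on the ray $1-i\R_+$ one has $l_1-\pi=-(l_0-\pi)$, so subtracting the two integral representations (the second formula in \eqref{for:R} for $L_0$ and its analogue for $L_1$, with both contours deformed to that ray) gives the single formula $P(p)=\frac{\pi}{i\varepsilon}\int_{1-i\infty}^{1}(l_0(\zeta)-l_0(1))\,\cos^{-2}\bigl(\pi(p-\zeta)/\varepsilon\bigr)\,d\zeta$ for $\im p>0$. Integration by parts and the geometric expansion of $\bigl(e^{-2\pi i(p-\zeta)/\varepsilon}+1\bigr)^{-1}$ (legitimate since $\im(p-\zeta)>0$) then yield \eqref{for:P} together with the closed expression $P_k=4i\int_{-i\infty}^{0}e^{-2\pi ikt/\varepsilon}\,dt/Q_0(t+1)$, and \eqref{for:Pk} follows from $Q_0(t+1)\sim\sqrt{2t}$ at $t=0$, the $k$-dependence of the error coming out automatically. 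If you wish to salvage your contour picture, you would have to fix one period (say over $(1,1+\varepsilon)$), work with the boundary values there, and control $L_0(p+i0)-L_1(p)$ via the difference equations \eqref{eq:L}, \eqref{eq:L1} -- which in effect reconstructs the paper's ray integral.
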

\begin{proof}
Note that  
\begin{equation*}
 l_1(1)=l_0(1)=\pi,\quad\text{and}\quad 
l_1(1-it+0)-l_1(1) =-(l_0(1-it-0)-l_0(1)),\quad t>0.
\end{equation*}
This and the integral representations for $L_0$ and $L_1$ imply that
\begin{equation*}
 P(p)=\frac{\pi}{i\varepsilon}\int_{1-i\infty}^{1}\frac{(l_0(\zeta)-l_0(1))\,d\zeta} 
{\dsize \cos^2 \frac {\pi(p-\zeta)}\varepsilon},\quad \im p>0. 
\end{equation*}
Integrating by parts, we get
\begin{equation*}
  P(p)=-i\int\limits_{1-i\infty}^{1} 
\left(\tan\frac{\pi(p-\zeta)}\varepsilon-i\right)\,l_0'(\zeta)\,d\zeta
=\int\limits_{1-i\infty}^{1} \frac{-4i\,d\zeta}
{ \left(e^{-2\pi i(p-\zeta)/\varepsilon}+1\right)\,Q_0(\zeta)},
\end{equation*}
where $Q_0$ is the branch of the function $\zeta\mapsto\sqrt{\zeta^2-1}$ 
from the definition of $l_0$. As $p\in\C_+$, this implies~\eqref{for:P} with
%
$P_k=4i\int\limits_{-i\infty}^{0} \frac{e^{-2\pi i k t/\varepsilon}\,dt}{Q_0(t+1)}$.
%
Clearly,
\begin{equation*}
P_k=4i\int\limits_{-i}^{0} e^{-2\pi i k t/\varepsilon}\,
(1+O(t))\,\frac{dt}{\sqrt{2t}}+\int_{-i\infty}^{-i}
e^{-2\pi i k t/\varepsilon}\,O(1/t)\,dt
\end{equation*}
with $\sqrt{t}=e^{3i\pi/4} |t|$ for $t\in i\R_-$.
This implies~\eqref{for:Pk}. 
\end{proof}
We finish this section with
\begin{Cor}\label{cor:L0inK1} For $p$ in the first quadrant 
$\mathcal Q_1$, 
  \begin{equation*}
i\int_{p_0}^pL_0(p)\,dp=-2p\ln|p| +O(|p|),\quad \quad |p|\to\infty.
  \end{equation*}
\end{Cor}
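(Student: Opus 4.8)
The plan is to split the integral $i\int_{p_0}^p L_0(q)\,dq$ into the "main" part $i\int_{p_0}^p l_0(q)\,dq$ plus an error coming from $L_0-l_0$, and to show the error is $O(|p|)$ as $|p|\to\infty$ with $p\in\mathcal Q_1$. For the main part, formula~\eqref{eq:int:l0} gives the exact antiderivative $\int_0^q l_0(q)\,dq = q\,l_0(q)-2iQ_0(q)-2$, valid along any smooth curve in $\C_0$; since we are integrating across $\C_+$ (where $l_1$ agrees with the continuation of $l_0$), the same computation applies with $Q_0$ replaced by the appropriate branch. Inserting the asymptotics $l_0(q)=2i\ln(2q)+\pi+O(1/|q|^2)$ from~\eqref{eq:l0:as} and $Q_0(q)=q+O(1/|q|)$, we get $i\int_0^p l_0(q)\,dq = i\bigl(p(2i\ln(2p)+\pi)-2ip\bigr)+O(|p|^{-1}\cdot|p|) + \text{const} = -2p\ln(2p)+O(|p|)$. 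Writing $\ln(2p)=\ln 2 + \ln|p| + i\arg p$ with $\arg p$ bounded on $\mathcal Q_1$, all the extra pieces are $O(|p|)$, so $i\int_0^p l_0(q)\,dq = -2p\ln|p| + O(|p|)$; subtracting the fixed value at $p_0$ changes nothing at this order.

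For the error term, I would write $L_0 - l_0$ using the region-by-region estimates already established. In the bulk (the region $K_0(a)$ away from $p=\pm1$), Theorem~\ref{th:L0inK0} — more precisely Lemma~\ref{L-series} — gives $L_0(q)-l_0(q)=O(\varepsilon^2(1+|q|)^{-2})$, which is integrable: $\int^p O((1+|q|)^{-2})\,dq = O(1)$. Near $p=1$ (in the $\delta$-neighborhood), Theorem~\ref{th:as:L0} together with estimate~\eqref{eq:zeta-est} gives $L_0(q)-\pi = O(\varepsilon^{1/2}+|q-1|^{1/2})$, so $L_0-l_0$ is bounded there and contributes $O(1)$ over a bounded region. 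Finally, to the right of $p=1$ one has $L_0=L_1+P$; estimate~\eqref{as:L1-in-D} controls $L_1-l_1=O(\varepsilon^2 q/(q^2-1)^{3/2})$ (again integrable at infinity, contributing $O(1)$), while $P$ is $\varepsilon$-periodic and bounded in $\C_+$ by Lemma~\ref{le:P} (the Fourier series~\eqref{for:P}–\eqref{for:Pk} converges absolutely and is exponentially decaying in $\im p$), so $i\int_{p_0}^p P\,dq = O(|p|)$ trivially. Collecting the three regions, $i\int_{p_0}^p(L_0-l_0)\,dq = O(|p|)$, and in fact $O(1)$ plus the $O(|p|)$ from $P$; combined with the main-part computation this yields the claim.

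The step I expect to need the most care is bookkeeping of the branch and the path of integration: $L_0$ is only analytic in $\C_0$, so to reach a point $p$ in the first quadrant with $|p|$ large we must be sure the contour stays in the domain of analyticity, crossing into $\C_+$ where $l_0$ continues to $l_1$ and $L_0$ continues to $L_1+P$. Once the contour is fixed to run first along $i\R_+$ (where $L_0$ is purely imaginary and odd by~\eqref{eq:L-prop}) and then out into $\mathcal Q_1$, the pieces glue consistently and the antiderivative formula~\eqref{eq:int:l0} applies on the $l_0$ part with the correct branch of $Q$. I would state explicitly that $\arg p$ is bounded on $\mathcal Q_1$ so that the distinction between $\ln|p|$ and $\ln(2p)$ is absorbed into the $O(|p|)$ error, and note that this is exactly why the corollary only claims $-2p\ln|p|+O(|p|)$ rather than a sharper constant.
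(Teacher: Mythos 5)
Your route is essentially the paper's: in $\C_+$ you write $L_0=L_1+P$, control $L_1-l_0$ by~\eqref{as:L1-in-D} (and the bulk/near-$1$ estimates), and extract the main term from the explicit antiderivative~\eqref{eq:int:l0} together with~\eqref{eq:l0:as}, which is just a rederivation of~\eqref{eq:int:l0-pi}; the worry about branches and the contour is unnecessary, since $\mathcal Q_1\subset\C_0\cap\C_+$ and all integrands are analytic there, so the decomposition and the antiderivative apply without any continuation of $L_0$ or $l_0$.

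The one step whose justification is wrong as written is the treatment of $P$. You claim $P$ is bounded in $\C_+$ because its Fourier series~\eqref{for:P} ``converges absolutely''. By~\eqref{for:Pk} one has $P_k\asymp\sqrt{\varepsilon/k}$, so $\sum_k|P_k|$ diverges: the series converges absolutely only for $\im p>0$ fixed, not uniformly up to the real axis, and in fact $P=L_0-L_1$ inherits the square-root singularities of $L_0$ at the boundary points $p=1+\varepsilon(l+1/2)$, so $P$ is \emph{not} bounded in $\C_+$. The conclusion you need survives, but for a different reason, and it is exactly the paper's: integrating the series termwise gives coefficients $\varepsilon P_k/(2\pi i k)=O(\varepsilon^{3/2}k^{-3/2})$, which are absolutely summable, so $\int_{p_0}^p P\,dq$ is bounded (hence certainly $O(|p|)$) uniformly in $\C_+$ by Lemma~\ref{le:P}. (Alternatively, one may note that the boundary singularities of $P$ are of integrable type $|p-1-\varepsilon(l+1/2)|^{-1/2}$.) With that repair, your argument coincides with the paper's proof.
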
 
\begin{proof} In $\C_+$, one has
$\int_{p_0}^pL_0\,dp=\int_{p_0}^pP\,dp+\int_{p_0}^pL_1\,dp$.  
In view of Lemma~\ref{le:P}, $\int_{p_0}^pP\,dp$ is bounded  in 
$\C_+$. So, representations~\eqref{as:L1-in-D} 
and~\eqref{eq:int:l0-pi} imply the needed.
\end{proof}
\section{Standard asymptotic behavior of  $\Psi_n$}
\label{sec:standard}
Here, we prove Theorem~\ref{th:norm-wave:up}, 
Lemma~\ref{le:large-xi-est} and outline
the proof of Theorem~\ref{th:norm-wave:down}.
We fix $n>0$ and use the notations introduced in 
Section~\ref{ss:notations}. 
\subsection{A convenient integral representation for $\Psi$}
\label{sec:wedge:saddle point}
First, we check                                             
\begin{Le}\label{le:psi-n:up} Pick $0<\theta<\pi/2$. One has
\begin{gather}
\label{eq:Psi-n:up-2}
\Psi_n(x,t)=\frac{e^{it}}{\sqrt{\varepsilon\pi}}
\int_{e^{i\theta}\R} A(p)\sin(px)
e^{\frac{i}\varepsilon S(p,\tau)}\,dp,\quad \tau=\varepsilon t, \\
\label{eq:S}
S(p,\tau)=p^2(1-\tau)-2\pi np+\int_0^{p} l_0(p)\,dp,\\
\label{eq:A}
A(p)=e^{\frac{i}\varepsilon\int_0^p (L_0(p)-l_0(p))\,dp}.
\end{gather}
\end{Le}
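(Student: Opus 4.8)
The claim is an integral representation for $\Psi_n$ in the potential well, and the natural starting point is formula~\eqref{eq:Psi-n:up} already recorded in Section~\ref{ss:Ansatz}:
\begin{equation*}
\Psi_n(x,t)=\frac{e^{it}}{\sqrt{\varepsilon\pi}}
\int_{-\infty}^\infty e^{\frac{i}{\varepsilon}(p^2(1-\tau)-2\pi np)}\,\sin(px)\,R(p)\,dp .
\end{equation*}
The plan is, first, to rewrite $R(p)$ in the exponential form dictated by Proposition~\ref{pro:R-func}: for $p>0$ we have $R(p)=R_0(p+i0)=\exp\bigl(\tfrac{i}{\varepsilon}\int_0^p L_0\,dp\bigr)$, and for $p<0$ we use $R(p)=R_0(p-i0)$ together with the fact that $R_0$ is even (equivalently $L_0$ is odd, \eqref{eq:L-prop}), so that in both half-lines $R(p)$ is the boundary value of the same analytic function $\exp\bigl(\tfrac{i}{\varepsilon}\int_0^p L_0\,dp\bigr)$ on $\C_0$. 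Splitting $\int_0^p L_0\,dp=\int_0^p l_0\,dp+\int_0^p(L_0-l_0)\,dp$ then produces exactly the phase $S$ of \eqref{eq:S} and the amplitude $A$ of \eqref{eq:A}, so that
\begin{equation*}
\Psi_n(x,t)=\frac{e^{it}}{\sqrt{\varepsilon\pi}}\int_{\R}A(p)\sin(px)e^{\frac{i}{\varepsilon}S(p,\tau)}\,dp ,
\end{equation*}
the integral being over the real axis with the appropriate $\pm i0$ prescription at $|p|\ge 1$.

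The remaining work is to justify the deformation of the contour from $\R$ to the ray $e^{i\theta}\R$ with $0<\theta<\pi/2$. For this I would invoke Cauchy's theorem in the sector between $\R_+$ and $e^{i\theta}\R_+$ (and symmetrically in the third quadrant between $\R_-$ and $e^{i\theta}\R_-$), checking three things: (i) the integrand $A(p)\sin(px)e^{iS(p,\tau)/\varepsilon}$ is analytic inside that sector — here one uses that $A$ and $S$ are built from $L_0$, $l_0$, which are analytic in $\C_0\supset\{\re p>0,\ \im p>0\}$, so no branch cut of $Q_0$ is crossed, and that $\sin(px)$ is entire; (ii) the contribution of the circular arc at infinity vanishes; and (iii) the resulting integral over $e^{i\theta}\R$ converges. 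For (ii) and (iii) the decisive ingredient is the behaviour of the phase: $\re\bigl(iS(p,\tau)/\varepsilon\bigr)$. Since $\tau=\varepsilon t\le 1$, the term $p^2(1-\tau)$ has a non-negative coefficient, and $\im(ip^2)=\re(p^2)$ on the ray $e^{i\theta}\R$ behaves like $|p|^2\cos 2\theta$; the term $\int_0^p l_0\,dp$ contributes only $O(|p|\ln|p|)$ by \eqref{eq:int:l0} and \eqref{eq:l0:as} (equivalently Corollary~\ref{cor:L0inK1} controls the full $\int_0^p L_0\,dp$), and $A(p)$ is bounded on the ray because $L_0-l_0=O(\varepsilon^2/(p^2-1)^{3/2})$ is integrable by Theorem~\ref{th:L0inK0}; finally $|\sin(px)|\le e^{|x||\im p|}\le e^{X|\im p|}$ grows only linearly-exponentially in $\im p$ while $0\le x\le 1-\tau$ keeps $|x|$ bounded. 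Thus for $\tfrac\pi4<\theta<\tfrac\pi2$ the Gaussian-type decay $e^{-|p|^2\cos(2\theta)/\varepsilon}$ (which is actual decay once $\cos 2\theta<0$, i.e.\ $\theta>\pi/4$; for $\theta\le\pi/4$ one first rotates to $\theta'$ slightly above $\pi/4$ and then back, or one notes that $1-\tau>0$ may fail and handles $\tau=1$ separately — see below) dominates the polynomial-in-$|p|$ and mild-exponential-in-$\im p$ factors, killing the arc and making the rotated integral absolutely convergent. By symmetry of $A$, $S$ (both built from odd/even pieces, cf.\ \eqref{eq:l0-sym}, \eqref{eq:L-prop}) the two quadrants combine into the single ray $e^{i\theta}\R$.

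The main obstacle is the uniform control of the phase and amplitude along the rotated ray out to infinity: one must make sure that the $O(|p|\ln|p|)$ coming from $\int_0^p l_0\,dp$ and the $e^{X|\im p|}$ from $\sin(px)$ cannot overwhelm the quadratic term, and that this holds uniformly down to $\tau=1$, where the quadratic coefficient $1-\tau$ degenerates. At $\tau=1$ the potential well has collapsed ($0\le x\le 1-\tau$ forces $x=0$, so $\Psi_n=0$ there trivially) and for $\tau$ slightly below $1$ the coefficient $1-\tau$ is small but positive; there one either keeps $\theta$ fixed with $\cos 2\theta<0$ so that the $(1-\tau)|p|^2\cos 2\theta$ term still wins for $|p|$ large once $(1-\tau)|p|^2$ exceeds the $|p|\ln|p|+X|\im p|$ terms, or, if $(1-\tau)$ is allowed to be comparable to $\varepsilon$, one absorbs the small region $|p|\lesssim(1-\tau)^{-1}\ln(\dots)$ into a bounded error and deforms only the tails. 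Since the only use of this representation downstream is as the input to a steepest-descent analysis, it suffices to establish it for $\theta$ in a fixed subinterval of $(\pi/4,\pi/2)$ and for $\tau$ bounded away from $1$ or with the degenerate case treated separately; I would therefore state and prove it in that generality and postpone any borderline bookkeeping to the sections where the saddle points are actually located.
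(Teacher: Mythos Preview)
Your overall strategy matches the paper's: start from~\eqref{eq:Psi-n:up}, use Proposition~\ref{pro:R-func} and~\eqref{eq:R-for-Sch} to write the integrand as $A(p)\sin(px)e^{iS(p,\tau)/\varepsilon}$ along the real axis with the $\pm i0$ prescription, then deform to $e^{i\theta}\R$ by Cauchy, using the evenness of $p\mapsto\int_0^p L_0\,dp$ (from~\eqref{eq:L-prop}) to treat the two half-lines symmetrically.

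There is one computational slip in your decay analysis. You correctly identify $\re\bigl(iS/\varepsilon\bigr)$ as the decisive quantity, but then compute $\im(ip^2)=\re(p^2)=|p|^2\cos 2\theta$ instead of $\re(ip^2)=-\im(p^2)=-|p|^2\sin 2\theta$. The actual Gaussian factor from the quadratic term is $\exp\bigl(-(1-\tau)|p|^2\sin 2\theta/\varepsilon\bigr)$, which decays for \emph{every} $\theta\in(0,\pi/2)$; your restriction to $\theta>\pi/4$ and the attendant special-casing are spurious.

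The paper's route is shorter and sidesteps the $\tau\to1$ worry entirely: it does not rely on the quadratic term for the decay but simply invokes Corollary~\ref{cor:L0inK1}, which gives $i\int_0^p L_0\,dp=-2p\ln|p|+O(|p|)$ in the first quadrant. Hence $\bigl|A(p)e^{iS(p,\tau)/\varepsilon}\bigr|=\bigl|e^{\frac{i}{\varepsilon}(p^2(1-\tau)-2\pi np+\int_0^p L_0)}\bigr|\le e^{-\frac{2}{\varepsilon}\re p\,\ln|p|+O(|p|/\varepsilon)}$, which already dominates both $|\sin(px)|\le e^{x|\im p|}$ and the linear term, \emph{uniformly} in $\tau\le 1$. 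Parity then transfers this to the third quadrant. You cite Corollary~\ref{cor:L0inK1} only parenthetically; promoting it to the primary decay estimate gives the lemma in the full range $0<\theta<\pi/2$ in two lines and eliminates all the borderline bookkeeping you flag at the end.
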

\begin{proof}
In view of formulas~\eqref{eq:R-for-Sch} and~\eqref{for:R},  
the integral in~\eqref{eq:Psi-n:up} equals the integral of 
$A(p)\sin(px)e^{\frac{i}\varepsilon S(p,\tau)}$ taken along the 
path going in $\C_0$, first, along $\R-i0$ from $-\infty$ to $-1$, next, 
along $\R$  to $1$, and then, along $\R+i0$ to $+\infty$. 
Thanks to Corollary~\ref{cor:L0inK1}, one can deform 
the part of the integration path going along $\R_+$ to $e^{i\theta}\R_+$.  
As $p\to\int_0^pL_0(p)\,dp$  is even, see~\eqref{eq:L-prop}, one can deform 
the whole contour to $e^{i\theta}\R$. 
\end{proof}
\begin{Rem}Fix  sufficiently small $a,b>0$. Let $V(b)$ be the 
$b$-neighborhood of the points $\pm 1$. Fix $M\in\N$.  In view 
of Lemma~\ref{L-series}, the factor $A$ admits the uniform  asymptotic 
representation 
\begin{equation}\label{as:A}
A(p,\varepsilon)=\sum_{m=0}^{M-1}\varepsilon^mA_m(p)+O(\varepsilon^{M}),\quad 
p\in K_0(a)\setminus V(b),\quad
\varepsilon\to 0,
\end{equation} 
with $A_0\equiv1$ and $A_l$ independent of $\varepsilon$, analytic and bounded 
uniformly in  $p$.
\end{Rem}
In the next sections, we study the asymptotic behaviour of the integral 
in~(\ref{eq:Psi-n:up-2}) as $\varepsilon\to0$ by means of the method of steepest 
descents, see, e.g.~\cite{Wong}.
\subsection{Saddle point and lines of steepest descent}
\label{sec:saddle-and-steepest:1}
Discuss the {\it saddle points}, i.e., the zeros of the function 
$p\mapsto S_p(p,\tau)$. One has 
\begin{Le}\label{le:saddle-point}
Fix $\tau<\tau_n$, $\tau_n$ being defined in~\eqref{eq:tau_N}. 
Then, in $\C_0$, there is only one zero of $S_p$. Denote it by $p_n(\tau)$.
The $p_n$ is simple, is located on $[0,1]$ and satisfies~\eqref{eq:pn-up}. 
One has $S_{pp}(p,\tau)>0$ if $-1<p<1$.
\end{Le}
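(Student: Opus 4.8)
The plan is to compute $S_p$ explicitly using~\eqref{eq:S} and the identity for $l_0$. Differentiating~\eqref{eq:S} gives $S_p(p,\tau)=2p(1-\tau)-2\pi n+l_0(p)$, so a saddle point is a solution of $l_0(p)=2\pi n-2p(1-\tau)$. For $p\in(-1,1)$ we have $l_0(p)=2\arcsin p$ by~\eqref{eq:l0-arsin}, and the equation becomes $(1-\tau)p+\arcsin p=\pi n$, which is precisely~\eqref{eq:pn-up}. So I first verify that this real equation has exactly one root $p_n(\tau)\in(0,1)$: the left-hand side $g(p)=(1-\tau)p+\arcsin p$ is strictly increasing on $[-1,1]$ (derivative $(1-\tau)+1/\sqrt{1-p^2}>0$ since $\tau<1$), with $g(1)=(1-\tau)+\pi/2$. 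The condition $\tau<\tau_n=1-\pi(n-1/2)$ rearranges to $(1-\tau)+\pi/2>\pi n$, i.e. $g(1)>\pi n$, while $g(0)=0<\pi n$; hence there is a unique root in $(0,1)$, and it is simple because $g'>0$ means $S_{pp}=2(1-\tau)+l_0'(p)=2(1-\tau)+2/\sqrt{1-p^2}>0$ there, which also gives the last assertion $S_{pp}>0$ on $(-1,1)$.

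Next I must rule out any other zero of $S_p$ in all of $\C_0$, not just on $(-1,1)$. For this I use the conformal mapping property of $l_0$ from Lemma~\ref{le:conf-prop} together with the symmetry $l_0(-p)=-l_0(p)$ and $l_0(\bar p)=\overline{l_0(p)}$ from~\eqref{eq:l0-sym}. The equation $S_p=0$ reads $l_0(p)=2\pi n-2(1-\tau)p$. On the first quadrant $\mathcal Q_1$, Lemma~\ref{le:conf-prop} says $l_0$ maps $\mathcal Q_1$ conformally onto the half-strip $\Pi=\{0<\re z<\pi,\ \im z>0\}$, so $\re l_0(p)\in(0,\pi)$ there; but for $p\in\mathcal Q_1$ with $\re p>0$, $\re(2\pi n-2(1-\tau)p)=2\pi n-2(1-\tau)\re p<2\pi n$ and we need it to land in $(0,\pi)$, which forces $\re p$ close to $(2\pi n)/(2(1-\tau))>\pi(n-1/2)/(1-\tau)\cdot$(something)$\ >1$ roughly — more carefully, one shows the affine image of $\mathcal Q_1$ under $p\mapsto 2\pi n-2(1-\tau)p$ is a quadrant whose intersection with $\overline\Pi$ forces $p$ onto a neighborhood of the real axis; combined with $\im l_0>0$ on $\mathcal Q_1$ versus $\im(2\pi n - 2(1-\tau)p)=-2(1-\tau)\im p<0$ for $\im p>0$, we get a contradiction, so there is no root in $\mathcal Q_1$. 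By the symmetries the same holds in the other open quadrants, and on $i\R$ the function $l_0$ is purely imaginary while the right-hand side has nonzero real part $2\pi n$, so no root there either; on the segment $(-1,0)$, $g(p)<0<\pi n$. Hence $p_n(\tau)$ is the only zero in $\C_0$.

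The main obstacle is the global uniqueness argument across $\C_0$: the explicit real computation on $(-1,1)$ is routine, but excluding complex zeros requires a clean argument, and the cleanest one is exactly the image-counting via the conformal map $l_0:\mathcal Q_1\to\Pi$ of Lemma~\ref{le:conf-prop}. I would phrase it as: $p$ is a saddle point iff $l_0(p)\in\Pi$ (if $p\in\mathcal Q_1$) equals the point $w(p):=2\pi n-2(1-\tau)p$; since $l_0$ is a bijection $\mathcal Q_1\to\Pi$ and $p\mapsto w(p)$ maps $\mathcal Q_1$ into the lower half-plane shifted, which meets $\overline\Pi$ only along $\R$, any solution must have $\im p\to$ boundary, i.e. lie on $\partial\mathcal Q_1$, reducing to the already-handled real case; the bound $\tau<\tau_n$ is what makes the relevant real root fall inside $(0,1)$ rather than past the branch point $p=1$. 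The evenness of $\int_0^p l_0$ (used to symmetrize) and the reflection symmetries then transfer the conclusion to the whole of $\C_0$. Finally $S_{pp}(p_n,\tau)=2(1-\tau)+2/\sqrt{1-p_n^2}\neq0$ confirms simplicity, and the displayed positivity on $(-1,1)$ is immediate.
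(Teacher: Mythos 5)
Your proposal is correct and follows essentially the same route as the paper: differentiate $S$ to get $S_p=2p(1-\tau)+l_0(p)-2\pi n$, use the conformal-map Lemma~\ref{le:conf-prop} together with the symmetries~\eqref{eq:l0-sym} to see that the imaginary parts cannot match off the real segment (i.e.\ $\im S_p\ne 0$ outside $[-1,1]$), and then conclude by monotonicity of $(1-\tau)p+\arcsin p$ on $(-1,1)$ with the sign change guaranteed by $\tau<\tau_n$, plus $S_{pp}=2(1-\tau)+2/\sqrt{1-p^2}>0$. The brief real-part digression for $\mathcal Q_1$ is unnecessary, since your imaginary-part contradiction already settles that quadrant, exactly as in the paper's argument.
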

\begin{proof}
Differentiating~\eqref{eq:S}, we get 
\begin{equation}
  \label{eq:S-prime}
S_p(p,\tau)=2p(1-\tau)+l_0(p)-2\pi n.
\end{equation}
Lemma~\ref{le:conf-prop} and~\eqref{eq:l0-sym} imply that, in $\C_0$, \ 
$\im S_p(p,\tau)$ vanishes only on $[-1,1]$. As $\tau<\tau_n<1$, 
Lemma~\ref{le:conf-prop} and formula~\eqref{eq:S-prime} imply 
that $S_{p}(.,\tau)$ is monotonously 
increasing on $[-1,1]$.  As $S_p(0,\tau)= 
l_0(0)-2\pi n=-2\pi n<0$ and as $S_p(1,\tau)=2(1-\tau)+\pi -2\pi n=
2(\tau_n-\tau)>0$, we conclude that, in $\C_0$, \ $S_p$ has a unique 
zero $p_n$, that it is simple and that $0<p_n<1$. Formulas~\eqref{eq:S-prime} 
and~\eqref{eq:l0-arsin} imply~\eqref{eq:pn-up}. Finally, using the inequality
$\tau<1$ and~\eqref{eq:l0-arsin},  one checks that
$S_{pp}(p,\tau)> 0$ for $-1<p<1$. 
\end{proof}
Now, we discuss  the paths of steepest descents ``beginning'' at $p_n(\tau)$.
Remind that they are described by the equation 
$\re S(p,\tau)=\re S(p_0,\tau)$. As $S_{pp}(p_n,\tau)>0$, there are four 
of them, and, at $p=p_n$, the angles between $\R_+$ and these curves  
are equal to $\pi/4+\pi l/2$, $l=0,1,2,3$. We denote the paths of steepest 
decent by $\gamma_l$, $l=0,1,2,3$, respectively. Along $\gamma_0$ and 
$\gamma_2$, $\im S(p,\tau)$ is monotonously increasing 
as $p$ moves away from $p_n$.  One has  
\begin{Le}\label{le:steep-des-cur} If $\tau<\tau_n$, then
$\gamma_0,\gamma_2\subset\C_0$, $\gamma_0$ goes to infinity inside $\C_+$, and 
$\gamma_2$ goes  to infinity inside $\C_-$ . Along these curves,
\begin{equation}\label{eq:as-st-des-line}
\im p=\re p+O(\ln\re p), \quad |p|\to \infty.
\end{equation} 
\end{Le}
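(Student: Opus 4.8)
The plan is to analyze the steepest-descent curves $\gamma_0,\gamma_2$ by combining the saddle-point structure established in Lemma~\ref{le:saddle-point} with the large-$p$ asymptotics of $S$ provided by Corollary~\ref{cor:L0inK1} and formula~\eqref{eq:l0:as}. First I would fix $\tau<\tau_n$ and recall that the four steepest-descent/ascent rays emanate from $p_n\in(0,1)$ at angles $\pi/4+\pi l/2$ to $\R_+$, since $S_{pp}(p_n,\tau)>0$; the two ``descent'' curves $\gamma_0,\gamma_2$ leave $p_n$ into the upper and lower half-planes respectively (at angles $\pi/4$ and $5\pi/4$, say), and along them $\re S$ is constant while $\im S$ increases monotonically away from $p_n$. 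Because $\im S_p$ vanishes in $\C_0$ only on $[-1,1]$ (Lemma~\ref{le:conf-prop}), the curve $\gamma_0$, once it has left $[-1,1]$, can never return to the real segment $[-1,1]$ nor cross the branch cut $\{p\in\R:\,|p|\ge1\}$: on such a cut $\re S$ would have to equal its value at $p_n$, but one checks from \eqref{eq:S}, \eqref{eq:l0:as} (where $\re l_0\to0$ along $\R+i0$ and $\im l_0\to 2\ln|2p|$) that $\re S(p,\tau)=p^2(1-\tau)-2\pi np+\re\int_0^p l_0\,dp \to +\infty$ on $[1,\infty)+i0$, so $\gamma_0$ stays in $\C_0$. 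The same reasoning applies to $\gamma_2$ in $\C_-$. Since $\gamma_0$ is a non-self-intersecting level curve of $\re S$ along which $\im S\to+\infty$, it must be unbounded, and by the half-plane confinement it goes to infinity inside $\C_+$; symmetrically $\gamma_2$ goes to infinity inside $\C_-$.

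The second part — the asymptotic shape \eqref{eq:as-st-des-line} — is where I would do the real computation. Use Corollary~\ref{cor:L0inK1}: for $p$ in the first quadrant $\mathcal Q_1$ with $|p|\to\infty$, $i\int_{p_0}^p L_0\,dp = -2p\ln|p|+O(|p|)$, hence $\int_0^p L_0(p)\,dp = 2ip\ln|p| + O(|p|)$; together with $S(p,\tau)=p^2(1-\tau)-2\pi n p+\int_0^p L_0\,dp$ (note $\int_0^p l_0$ in \eqref{eq:S} should, on the level of the actual integrand $A e^{iS/\varepsilon}$, be read as $\int_0^p L_0$ — more precisely I use that $A e^{iS/\varepsilon}=e^{\frac i\varepsilon(S+\int_0^p(L_0-l_0))}$, so the genuine phase is $\tilde S(p,\tau)=p^2(1-\tau)-2\pi n p+\int_0^p L_0\,dp$). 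Writing $p=re^{i\varphi}$ we get $\re\tilde S = r^2\cos2\varphi\,(1-\tau) - 2r(\ln r)\sin\varphi + O(r)$ for $p\in\mathcal Q_1$. The equation $\re\tilde S=\re\tilde S(p_n,\tau)=O(1)$ then forces $\cos2\varphi\to 0$ as $r\to\infty$, i.e. $\varphi\to\pi/4$, which is precisely $\im p = \re p + o(\re p)$; refining this by balancing the $r^2\cos2\varphi$ term against the $r\ln r$ correction gives $\cos 2\varphi = O((\ln r)/r)$, hence $\im p-\re p = O(\ln\re p)$, yielding \eqref{eq:as-st-des-line}. For $\gamma_2$ one applies the conjugated statement in $\mathcal Q_4$.

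I would carry the steps out in this order: (1) recall the local picture at $p_n$ and identify $\gamma_0,\gamma_2$; (2) show $\gamma_0\subset\C_0$ and is confined to $\C_+$ by combining the non-vanishing of $\im S_p$ off $[-1,1]$ with the growth of $\re\tilde S$ along $[1,\infty)+i0$; (3) show $\gamma_0$ is unbounded because $\im\tilde S$ is unbounded along it; (4) extract the asymptotic direction $\varphi\to\pi/4$ from $\re\tilde S=O(1)$ using Corollary~\ref{cor:L0inK1}; (5) sharpen to the $O(\ln\re p)$ error; (6) transfer to $\gamma_2$ by the symmetry $L_0(\bar p)=\overline{L_0(p)}$ from \eqref{eq:L-prop}. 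The main obstacle I anticipate is step (4)--(5): one must be sure the error term $O(|p|)$ in Corollary~\ref{cor:L0inK1} is genuinely uniform in $\arg p$ as $\arg p\to\pi/4$ (it is, since the quadrant $\mathcal Q_1$ is the full range there), and one must verify that the leading balance $r^2(1-\tau)\cos2\varphi \sim 2r\ln r\,\sin\varphi$ is not spoiled by the $O(r)$ remainder — which it is not, because $r\ln r\gg r$. A secondary care point is the bookkeeping between $S$ (with $\int l_0$) and the true phase (with $\int L_0$): the difference $\frac1\varepsilon\int_0^p(L_0-l_0)$ is $O(1)$ uniformly on $K_0(a)\setminus V(b)$ by \eqref{as:A}, so it does not affect the level-set geometry at infinity and the statement as phrased in terms of $S$ is consistent.
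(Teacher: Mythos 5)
Your asymptotic computation at infinity (balancing $r^2(1-\tau)\cos2\varphi$ against $2r\ln r\sin\varphi$ to get $\cos2\varphi=O(\ln r/r)$ and hence $\im p-\re p=O(\ln\re p)$) is essentially the paper's argument, but the proposal has two genuine gaps. First, the confinement step is not actually proved. The fact that $\im S_p$ vanishes in $\C_0$ only on $[-1,1]$ says nothing about whether the \emph{level curve of $\re S$} through $p_n$ can return to the real axis or hit the cuts; it was used in Lemma~\ref{le:saddle-point} only to locate zeros of $S_p$. What is needed (and what the paper proves) is that $\re S(a,\tau)\neq\re S(p_n,\tau)$ for \emph{every} real $a\neq p_n$ on $\R+i0$; this follows from the strict monotonicity of $\re S_p$ along $\R+i0$ (using $\tau<1$, $S_{pp}>0$ on $(-1,1)$, and $\re l_0(p+i0)=\pm\pi$ for $\pm p>1$ from Lemma~\ref{le:conf-prop}), which makes $p_n$ the unique point of $\R+i0$ on the level $\re S(p_n,\tau)$. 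Your substitute, ``$\re S\to+\infty$ on $[1,\infty)+i0$'', only controls large $p$ (and its stated input ``$\re l_0\to0$'' is wrong: by \eqref{eq:l0:as}, $\re l_0\to\pi$); it does not exclude the curve meeting the cut at moderate points, meeting $(-\infty,-1]+i0$, or re-crossing $(-1,1)$. Also ``$\im S\to+\infty$ along $\gamma_0$, hence unbounded'' is circular: you only know $\im S$ increases; unboundedness needs the absence of other zeros of $S_p$ in $\C_0$ plus the maximum principle to exclude a closed loop.

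Second, and more importantly for \eqref{eq:as-st-des-line}: in $\C_+$ the condition $\re S=O(1)$ forces $\cos2\varphi\to0$, which allows \emph{both} $\varphi\to\pi/4$ and $\varphi\to3\pi/4$. You silently restrict to the first quadrant by invoking Corollary~\ref{cor:L0inK1}, but nothing in your argument shows $\gamma_0$ escapes to infinity inside $\mathcal Q_1$ rather than in the second quadrant. The paper closes this by using that $\im S$ increases along $\gamma_0$, whereas in the direction $\varphi\approx3\pi/4$ one has $\im S\approx r^2(1-\tau)\sin2\varphi\to-\infty$; this is the missing step you need. A minor further point: the curves $\gamma_0,\gamma_2$ are level curves of $\re S$ with the phase $\int_0^p l_0$, so the detour through $\tilde S$ with $\int_0^p L_0$ and Corollary~\ref{cor:L0inK1} is unnecessary (the paper uses \eqref{eq:int:l0-pi} directly), and justifying it via \eqref{as:A} is mildly circular, since \eqref{as:A} holds on $K_0(a)\setminus V(b)$ and you do not yet know the curve stays in such a region.
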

\begin{proof}
We prove the statement concerning  $\gamma_0$. 
The analysis of $\gamma_2$ is similar.\\     
{\bf 1.}  Prove that $\gamma_0$ goes to infinity inside $\C_+$.
Consider the  values of $S$ on $\R+i0$. 
Pick a real  $a\ne p_n$. Using~\eqref{eq:S-prime}, the inequality $\tau<1$ 
and Lemma~\ref{le:conf-prop}, we check that $(\re S)_p$ monotonously increases 
along $\R+i0$. Therefore, $\re (S(a,\tau)-S(p_n,\tau))\ne 0$. So,  
$\gamma_0$ can not connect $p_n$ to the point $a$ inside $\C_+$.
Furthermore, the maximum principle for the harmonic functions implies that
 $\gamma$ can not come back to $p_n$.
These observations imply the needed.\\ 
{\bf 2.} \ Let us study $\gamma_0$ for large $|p|$.   
Formula~\eqref{eq:int:l0-pi} implies that
\begin{equation}\label{as:intl0}
  S(p,\tau)=p^2(1-\tau)+O(p\ln(2p)),\quad p\in \C_+\cap(\R+i0),\quad 
|p|\to\infty.
\end{equation}
As, along $\gamma_0$,  $\re S(p,\tau)=\re S(p_n,\tau)$ and 
$\im S(p,\tau)$ increases as $p\to\infty$,~\eqref{as:intl0}
leads to~\eqref{eq:as-st-des-line}.
\end{proof}
\subsection{Proof of Theorem~\ref{th:norm-wave:up}}
\label{sec:proof:th:norm-wave:up}
Let, as before, $V(b)$ be the $b$-neighborhood of the points $\pm 1$.
Lemma~\ref{le:steep-des-cur} implies that there exist sufficiently 
small positive numbers $a$ and $b$  such that  
$\gamma_{0},\,\gamma_2\,\subset K_0(a)\setminus V(b)$.  The paths of steepest 
decent continuously depend on the parameter $\tau$. So, there are 
$a, b>0$ such that $\gamma_{0},\,\gamma_2\,\subset K_0(a)\setminus V(b)$
for all $T_1<\tau<T_2$, i.e., for all $\tau=\varepsilon t$ 
considered in Theorem~\ref{th:norm-wave:up}. 

We deform the integration path in integral in~(\ref{eq:Psi-n:up-2}) 
to $\gamma=\gamma_0\cup\gamma_2$ and replace the factor  $A$ in the 
integrand by the expression in the right hand side in~\eqref{as:A}. 
As a result, the integral in the right hand side in~(\ref{eq:Psi-n:up-2}) 
becomes the sum of  integrals along $\gamma$. Applying the method of steepest 
decent to these integrals,  we arrive to the asymptotic
representation
\begin{gather*}\label{eq:saddle-point:1}
  \Psi_n(x,t)=\sqrt{\frac{2}{S_{pp}(p_n,\tau)}}\,
e^{\frac{i(\tau+ S(p_n,\tau))}\varepsilon+\frac{i\pi}4} 
\left(\sin (p_nx)+\sum_{k=1}^{L-1}\varepsilon^k\psi_{n,k}(x,\tau)+
O(\varepsilon^{L})\right),\\
\tau=\varepsilon t,\qquad \varepsilon\to 0,
\end{gather*}
where the coefficients  $\psi_{n,k}$  are bounded uniformly in 
$0\le x\le 1-\tau$ and $T_1<\tau<T_2$, and  
the error estimate is uniform in $0\le x\le 1-\varepsilon t$ and 
$T_1<\varepsilon t<T_2$. One competes the proof of Theorem~\ref{th:norm-wave:up} 
using
\begin{Le}\label{le:S-at-pn} For $\tau<\tau_n$,
\begin{equation}\label{eq:legeandre:up}
 \tau+S(p_n(\tau),\tau)=\int_{\tau}^{\tau_n}E_n(\tau)\,d\tau+2\tau_n-3,\quad
\frac1{S_{pp}(p_n(\tau),\tau)}=\frac12\frac{d\ln p_n}{d\tau}(\tau), 
\end{equation}
where $E_n(\tau)=p_n^2(\tau)-1$. 
\end{Le}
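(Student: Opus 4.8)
The plan is to verify the two identities in \eqref{eq:legeandre:up} by direct computation, using the explicit formula \eqref{eq:S} for $S$, the characterization \eqref{eq:pn-up} of the saddle point $p_n$, and the antiderivative formula \eqref{eq:int:l0} for $\int_0^p l_0$. For the first identity, I would start from
\[
\tau + S(p_n,\tau) = \tau + p_n^2(1-\tau) - 2\pi n p_n + \int_0^{p_n} l_0(p)\,dp,
\]
and use \eqref{eq:int:l0}, which gives $\int_0^{p_n} l_0(p)\,dp = p_n l_0(p_n) - 2iQ_0(p_n) - 2$. Since $0<p_n<1$, formula \eqref{eq:l0-arsin} yields $l_0(p_n)=2\arcsin(p_n)$ and $Q_0(p_n) = i\sqrt{1-p_n^2}$, so $-2iQ_0(p_n) = 2\sqrt{1-p_n^2}$. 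Then the saddle-point equation \eqref{eq:pn-up}, in the form $p_n l_0(p_n) = 2p_n \arcsin(p_n) = 2\pi n p_n - 2(1-\tau)p_n^2$, is exactly what is needed to cancel the $-2\pi n p_n$ term. Collecting everything I expect to land on $\tau + S(p_n,\tau) = \tau - (1-\tau)p_n^2 + 2\sqrt{1-p_n^2} - 2$; the remaining work is to recognize the right-hand side of the claimed identity, namely $\int_\tau^{\tau_n} E_n\,d\tau + 2\tau_n - 3$ with $E_n(\tau)=p_n^2(\tau)-1$, as the same function of $\tau$.

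For that last recognition step I would differentiate both sides in $\tau$ and check equality of derivatives, then fix the constant at $\tau = \tau_n$. Differentiating $g(\tau):=\tau - (1-\tau)p_n^2(\tau) + 2\sqrt{1-p_n^2(\tau)} - 2$ gives $g'(\tau) = 1 + p_n^2 + \left(-2(1-\tau)p_n - 2p_n/\sqrt{1-p_n^2}\right)p_n'$; but differentiating the defining relation \eqref{eq:pn-up} gives $-p_n + \left((1-\tau) + 1/\sqrt{1-p_n^2}\right)p_n' = 0$, so the bracketed coefficient of $p_n'$ above equals $2p_n \cdot (-p_n)/p_n' \cdot (-1)$... more cleanly, the $p_n'$ terms combine to $-2p_n \cdot p_n / p_n' \cdot p_n' $ — in any case they reduce to $-2p_n^2$ by the very relation that defines $p_n'$, leaving $g'(\tau) = 1 + p_n^2 - 2p_n^2 = 1 - p_n^2 = -E_n(\tau)$. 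Meanwhile $\frac{d}{d\tau}\left(\int_\tau^{\tau_n} E_n\,d\tau + 2\tau_n - 3\right) = -E_n(\tau)$ as well. At $\tau = \tau_n$ one has $p_n(\tau_n) = 1$ (by \eqref{eq:pn-up} and \eqref{eq:tau_N}), so $g(\tau_n) = \tau_n - 0 + 0 - 2 = \tau_n - 2$, while the other side at $\tau=\tau_n$ equals $2\tau_n - 3 - (something)$; actually one should be slightly careful, since $\tau_n$ may lie outside the interval where $p_n\in(0,1)$ — so instead of evaluating at the endpoint $\tau_n$ I would match constants at an interior point, or take the limit $\tau\uparrow\tau_n$, where $p_n\uparrow 1$ and $g(\tau_n^-) = \tau_n - 2$, matching $2\tau_n - 3$ precisely when... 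I would recheck: the claim forces $g(\tau) = \int_\tau^{\tau_n}E_n\,d\tau + 2\tau_n - 3$, and at $\tau\to\tau_n$ this is $2\tau_n - 3$; so consistency requires $\tau_n - 2 = 2\tau_n - 3$, i.e. $\tau_n = 1$, which is false for $n\ge 1$. This discrepancy tells me the constant-matching must be done correctly at a generic point rather than the endpoint, and it is precisely where I must be most careful — I suspect the resolution is that $g$ and the claimed expression agree up to a constant that is absorbed, or that I have mis-simplified $-2iQ_0(p_n)$; tracking the branch of $Q_0$ from \eqref{eq:Q_0} is the delicate point.

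For the second identity, the computation is shorter: from \eqref{eq:S-prime}, $S_p(p,\tau) = 2p(1-\tau) + l_0(p) - 2\pi n$, so $S_{pp}(p,\tau) = 2(1-\tau) + l_0'(p)$, and on $(-1,1)$ we have $l_0(p) = 2\arcsin p$, hence $l_0'(p) = 2/\sqrt{1-p^2}$. Thus $S_{pp}(p_n,\tau) = 2(1-\tau) + 2/\sqrt{1-p_n^2}$. On the other hand, differentiating \eqref{eq:pn-up} with respect to $\tau$ gives $-p_n + \left((1-\tau) + 1/\sqrt{1-p_n^2}\right)p_n'(\tau) = 0$, so $\left((1-\tau) + 1/\sqrt{1-p_n^2}\right) = p_n / p_n' = p_n \cdot (d\ln p_n/d\tau)^{-1} \cdot p_n^{-1} \cdot p_n = $ — more directly, $(1-\tau)+1/\sqrt{1-p_n^2} = p_n'/p_n \cdot$... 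I would write $p_n'(\tau)/p_n(\tau) = d\ln p_n/d\tau$, so $(1-\tau) + 1/\sqrt{1-p_n^2} = 1/(d\ln p_n/d\tau)$ after dividing the differentiated relation by $p_n$. Therefore $S_{pp}(p_n,\tau) = 2/(d\ln p_n/d\tau)$, which is exactly $1/S_{pp}(p_n,\tau) = \tfrac12 d\ln p_n/d\tau$. The main obstacle throughout is bookkeeping of the constant term and branch conventions in the first identity; once the derivative computation pins down the $\tau$-dependence, the only real subtlety is fixing the additive constant consistently, which requires care about where $p_n$ is actually defined and which branch of $Q_0$ (equivalently, which value $-2iQ_0(p_n) = 2\sqrt{1-p_n^2} > 0$) enters.
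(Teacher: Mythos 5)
Your treatment of the second identity is correct and essentially the same as the paper's (the paper differentiates $S_p(p_n(\tau),\tau)=0$ in $\tau$, using $S_{\tau p}=-2p$, which is your computation in disguise). For the first identity your route is workable but you leave it unfinished because of an arithmetic slip, not a branch problem. Your reduction is right: \eqref{eq:int:l0}, the branch $Q_0(p_n)=i\sqrt{1-p_n^2}$ from \eqref{eq:Q_0}, and \eqref{eq:pn-up} give $\tau+S(p_n,\tau)=\tau-(1-\tau)p_n^2+2\sqrt{1-p_n^2}-2=:g(\tau)$, and your derivative check $g'(\tau)=-E_n(\tau)$ is correct. The error is in the constant matching: as $\tau\uparrow\tau_n$ one has $p_n(\tau)\to 1$, so the term $(1-\tau)p_n^2$ tends to $1-\tau_n$, \emph{not} to $0$; hence $g(\tau_n^-)=\tau_n-(1-\tau_n)+0-2=2\tau_n-3$, which matches the right-hand side exactly. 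The ``discrepancy'' $\tau_n-2\ne 2\tau_n-3$ that made you suspect a mis-chosen branch of $Q_0$ comes only from dropping this term; with it restored there is nothing left to fix, and taking the limit $\tau\uparrow\tau_n$ (where $p_n\uparrow1$ by \eqref{eq:pn-up} and \eqref{eq:tau_N}) is a perfectly legitimate way to pin the constant. As it stands, though, you have not completed the proof: you flag the mismatch and speculate about its resolution instead of resolving it, so the first identity is not established in your write-up.

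For comparison, the paper avoids the explicit antiderivative altogether: since $S_p(p_n(\tau),\tau)=0$, one gets $\frac{d}{d\tau}S(p_n(\tau),\tau)=S_\tau(p_n(\tau),\tau)=-p_n^2(\tau)$, hence $\frac{d}{d\tau}\bigl(\tau+S(p_n(\tau),\tau)\bigr)=-E_n(\tau)$ immediately, and the constant is fixed at $\tau=\tau_n$ using $p_n(\tau_n)=1$ together with \eqref{eq:int:l0:1}, giving $\tau_n+S(1,\tau_n)=2\tau_n-3$. That argument is shorter and sidesteps the cancellation bookkeeping where your slip occurred, but your direct computation, once the endpoint evaluation is corrected, is equally valid.
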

\begin{proof} 
As $S_{p}(p_n,\tau)=0$, we get
\begin{equation*}
  \frac{dS(p_n(\tau),\tau)}{d\tau}
=S_p(p_n(\tau),\tau)\,p_n'(\tau)+
S_\tau(p_n(\tau),\tau)=S_\tau(p_n(\tau),\tau)
=-p_n^2(\tau).
\end{equation*}
Therefore,
\begin{equation}\label{eq:S-en-pn-1}
\tau+S(p_n(\tau),\tau)=\tau_n+S(p_n(\tau_n),\tau_n)+
\int_\tau^{\tau_n}E_n(\tau)\,d\tau.  
\end{equation}
Note that $p_n(\tau_n)=1$. This and~\eqref{eq:int:l0:1} imply that
\begin{equation}\label{eq:S-en-pn-2}
  \tau_n+S(p_n(\tau_n),\tau_n)=1-2\pi n+\int_0^{1} l_0(p)\,dp=
-1-2\pi(n-1/2)=2\tau_n-3.
\end{equation}
Formulas~\eqref{eq:S-en-pn-1} and~\eqref{eq:S-en-pn-2} imply
the first relation in~\eqref{eq:legeandre:up}. 
Using the definition of $p_n$, we get 
\begin{equation*}
  0=\frac{d}{d\tau}S_p(p_n(\tau),\tau)=
S_{pp}(p_n(\tau),\tau)\,p_n'(\tau)+S_{\tau p}(p_n(\tau),\tau).
\end{equation*}
This leads to  the second relation in~\eqref{eq:legeandre:up}.
\end{proof}
\subsection{Outside the potential well}\label{ss:outside}
\subsubsection{}{\it The proof of  Theorem~\ref{th:norm-wave:down}} 
is parallel to the proof of Theorem~\ref{th:norm-wave:up}. We only 
outline it. Below,   $\xi=\varepsilon (x-(1-\tau))$, 
$\tau=\varepsilon t$,  $\xi\ge 0$ and  $\tau<1$.

\vspace{2mm}\noindent
Instead of~\eqref{eq:Psi-n:up-2}, we obtain 
\begin{gather}
\label{eq:psi-n:down-2}
\Psi_n(x,t)=\frac{(-1)^{n+1}e^{it-i\xi/2-i\varepsilon(1-\tau)/4}}{\sqrt{\varepsilon\pi}}\,
\operatornamewithlimits\int_{e^{i\theta}\R} \tilde A(p)\, p\,
e^{\frac{i}\varepsilon \tilde S(p,\tau,\xi)}\,dp,
\\
\label{eq:tildeS,tildeA}
\tilde S(p,\tau,\xi)=S(p,\tau)+Q_0(p)\xi,\qquad
\tilde A(p)=A(p)\,e^{
-\frac{i}\varepsilon\,\int_{p-\varepsilon/2}^p(L_0(q)-l_0(p))\,dq}.
\end{gather}
We apply the  method of steepest descents
to the integral in~\eqref{eq:psi-n:down-2}. One has
\begin{Le} If $\xi>0$, then in $\C_0$ there is  only one saddle point 
$\tilde p_n(\tau,\xi)$ of the function $p\mapsto \tilde S(p,\tau,\xi)$. 
It is simple, satisfies~\eqref{eq:pn:down} and is located 
in the first quadrant $\QV_1$. If $\tau<\tau_n$ and $\xi=0$, then  
$\tilde p_n(\tau,\xi)=p_n(\tau)$. 
\end{Le}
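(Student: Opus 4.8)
The plan is to pass to the variable $w=\arcsin p=l_0(p)/2$, in which, by Lemma~\ref{le:conf-prop}, $\C_0$ is mapped conformally onto the strip $\{|\re w|<\pi/2\}$ and $\QV_1$ onto the half-strip $\Pi/2=\{0<\re w<\pi/2,\ \im w>0\}$. Differentiating~\eqref{eq:S} and using~\eqref{eq:int:l0}, which yields $p\,l_0'(p)=2i\,Q_0'(p)$, hence $\xi Q_0'(p)=-\tfrac{i\xi}2\,p\,l_0'(p)$, one gets
\[
\tilde S_p(p,\tau,\xi)=2(1-\tau)p+l_0(p)-2\pi n-\tfrac{i\xi}2\,p\,l_0'(p)=G(w)-2\pi n,\qquad
G(w)=2(1-\tau)\sin w+2w-i\xi\tan w,
\]
since $l_0=2\arcsin p$ and $p\,l_0'(p)=2\tan w$ on $(-1,1)$, and then everywhere by analytic continuation; $G$ is analytic on the strip $\{|\re w|<\pi/2\}$, as $\tan$ has no pole there. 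Thus $\tilde S_p=0$ is exactly the equation $G(w)=2\pi n$, i.e. equation~\eqref{eq:pn:down}. When $\xi=0$ and $\tau<\tau_n$ the unique saddle in $\C_0$ is $p_n(\tau)$ by Lemma~\ref{le:saddle-point}; this settles the case $\xi=0$ and the identity $\tilde p_n(\tau,0)=p_n(\tau)$.

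Now fix $\xi>0$ and $\tau<1$. First I would show $G(w)\ne 2\pi n$ for $w$ in the strip with $\re w\le 0$ or $\im w\le 0$, which by Lemma~\ref{le:conf-prop} forces every saddle of $\tilde S$ in $\C_0$ to lie in $\QV_1$. The $\arcsin$-images of $\QV_2,\QV_3,\QV_4$ are (by the symmetries $l_0(-p)=-l_0(p)$, $l_0(\bar p)=\overline{l_0(p)}$) the sub-half-strips $\{-\pi/2<\re w<0,\ \im w>0\}$, $\{-\pi/2<\re w<0,\ \im w<0\}$, $\{0<\re w<\pi/2,\ \im w<0\}$; writing $w=\pm u\pm iv$ there with $u\in(0,\pi/2)$, $v>0$ and using $\re\sin(\pm u\pm iv)=\pm\sin u\cosh v$, $\im\sin(\pm u\pm iv)=\pm\cos u\sinh v$, $\re\tan(\pm u\pm iv)=\pm\frac{\sin 2u}{\cos 2u+\cosh 2v}$, $\im\tan(\pm u\pm iv)=\pm\frac{\sinh 2v}{\cos 2u+\cosh 2v}$, one checks that on the image of $\QV_2$ every term of $\im G$ is positive, on the image of $\QV_3$ every term of $\re(G-2\pi n)$ is negative, and on the image of $\QV_4$ every term of $\im G$ is negative. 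On the $\arcsin$-images of $i\R_\pm$ and of $(-1,0)$, $(0,1)$, $\im G$ keeps a fixed non-zero sign, except at $w=0$ where $G(0)=0\ne 2\pi n$. Hence all saddles of $\tilde S$ in $\C_0$ lie in $\QV_1$.

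It remains to count solutions of $G(w)=2\pi n$ in $\Pi/2$. A direct computation gives $G'(w)=2(1-\tau)\cos w+2-i\xi\sec^2 w$ and, for $w=u+iv\in\Pi/2$,
\[
\re G'(w)=2(1-\tau)\cos u\cosh v+2+\xi\,\im(\sec^2 w)>0,
\]
because $\im(\cos^2 w)=-\tfrac12\sin 2u\sinh 2v<0$, whence $\im(\sec^2 w)>0$. Since $\Pi/2$ is convex, $\re G'>0$ implies by the Noshiro--Warschawski criterion that $G$ is univalent on $\Pi/2$, so there is at most one solution there, and $G'\ne 0$ at it, which gives simplicity through $\tilde S_{pp}(p)=G'(\arcsin p)/\sqrt{1-p^2}$. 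For existence I would use $\re G'>0$ once more: along a vertical segment $\{u_0+iv:v>0\}$ with $u_0\in(0,\pi/2)$ one has $\partial_v\,\im G=\re G'>0$, and $\im G$ runs from $-\xi\tan u_0<0$ at $v=0$ to $+\infty$, hence vanishes at a unique $v_*(u_0)>0$; set $g(u_0)=\re G(u_0+iv_*(u_0))\in\R$. As $u_0\to 0^+$ one has $v_*(u_0)\to 0$ and $g(u_0)\to G(0)=0$, while as $u_0\to(\pi/2)^-$ the pole of $\tan$ at $\pi/2$ (local model $G(w)\approx i\xi/(w-\pi/2)$) forces $g(u_0)\to+\infty$; since $g$ is continuous and $2\pi n>0$, $g$ attains the value $2\pi n$, producing the solution $w_n\in\Pi/2$. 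Then $\tilde p_n=\sin w_n\in\QV_1$ is, by the uniqueness just established, the function introduced before Theorem~\ref{th:norm-wave:down}.

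The step I expect to be the main obstacle is the existence part, specifically the rigorous proof that $g(u_0)\to+\infty$ as $u_0\to(\pi/2)^-$: one must control the rate $v_*(u_0)\asymp(\pi/2-u_0)^{1/3}$ and extract the leading behaviour $g(u_0)\sim\xi^{2/3}\bigl(2/(\pi/2-u_0)\bigr)^{1/3}$ from the expansion of $G$ near its pole while bounding the lower order terms. The sign bookkeeping that rules out saddles in $\QV_2,\QV_3,\QV_4$ and on the axes is elementary but needs to be organized carefully into cases.
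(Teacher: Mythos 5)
Your proposal is correct, and in fact the paper itself contains no proof to compare it with: the lemma is stated inside the subsection where the proof of Theorem~\ref{th:norm-wave:down} is only outlined, so your argument genuinely fills a gap rather than reproducing one. Your reduction is the systematic version of what the paper implicitly relies on (Lemma~\ref{le:conf-prop} plus the analogue of Lemma~\ref{le:saddle-point}): the identity $p\,l_0'(p)=2iQ_0'(p)$ obtained from~\eqref{eq:int:l0} is right, so $\tilde S_p=G(w)-2\pi n$ with $w=\arcsin p$ and $G(w)=2(1-\tau)\sin w+2w-i\xi\tan w$, which is exactly~\eqref{eq:pn:down}; note that the branch $\sqrt{1-p^2}=\cos w$ has positive real part on the image half-strip, matching the branch fixed in Theorem~\ref{th:norm-wave:down}. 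The sign bookkeeping excluding the images of $\QV_2,\QV_3,\QV_4$ and of the coordinate half-axes checks out (it uses $\tau<1$, which is assumed throughout that subsection), the inequality $\re G'>0$ on the convex half-strip does give univalence (Noshiro--Warschawski) and hence uniqueness and, via $\tilde S_{pp}=G'(w)/\cos w$, simplicity. For existence, the step you flag is indeed only a matter of care, not of substance: since $\partial_v\,\im G=\re G'\ge 2$, one gets $v_*(u_0)\le\tfrac{\xi}{2}\tan u_0$, so $g(u_0)\to 0$ as $u_0\to 0^+$; and near the pole, with $s=\pi/2-u_0$, the balance $2v_*(v_*^2+s^2)\approx\xi s$ gives $v_*\asymp(\xi s/2)^{1/3}\gg s$ and $g(u_0)\approx 2(1-\tau)+\pi+\xi/v_*\to+\infty$, exactly the rates you state, so the intermediate value theorem applies. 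An alternative for existence would be the argument principle on the boundary of the half-strip, but it requires the same local analysis at the pole of $\tan$, so your route is as economical as any.
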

\noindent
The paths of steepest decent  continuously depend  on $\xi$. This allows to use 
for them the notations introduced  in the case when  $\xi=0$ and $\tau<\tau_n$.
With these notations, the statement of Lemma~\ref{le:steep-des-cur} remains true.
 
\vspace{2mm}\noindent 
Applying the method of steepest decent and using the formulas
\begin{gather*}
 \tau+\tilde S(\tilde p_n(\tau,\xi),\tau,\xi)=\int_{\tau}^{\tau_n}E_n(\tau)\,d\tau
+\int_0^\xi Q_0(\tilde p_n(\tau,\xi'))\,d\xi'+(2\tau_n-3),\\
\frac1{\tilde S_{pp}(\tilde p_n(\tau,\xi),\tau,\xi)}=
\frac12\frac{\partial\ln \tilde p_n(\tau,\xi)}{\partial\tau},
\end{gather*}
analogous to formulas~\eqref{eq:legeandre:up}, one obtains
representation~(\ref{as:Psi:down:1}).
\subsubsection{Proof of Lemma~\ref{le:large-xi-est}}\label{sss:Lemma}
If $\xi$ is sufficiently large, the integration path in~\eqref{eq:psi-n:down-2} 
can be deformed to $i\R$. Indeed,  Theorem~\ref{th:L0inK0} implies that, for 
any fixed $a>0$, in $K_0(a)$,  the factor $\tilde A$ stays bounded  as 
$p\to\infty$.  On the other hand, in view  of~\eqref{eq:int:l0-pi}, there 
is a constant $C>0$ such that $\im S(p,\tau)\ge -C|p|$ as $p$ tends to infinity 
inside the sector $0\le \arg p\le \pi/2$. As $l_0$ is odd, one has an estimate of 
the same form  in the sector $-\pi\le \arg p\le -\pi/2$. These observations 
imply the needed.
   
As along the imaginary axis $l_0(p)\in i\R$, see~\eqref{eq:l0-sym}, for 
sufficiently large $\xi$,  we get
\begin{equation*}
  |\Psi_n(x,t)|\le \frac{1}{\sqrt{\varepsilon\pi}}\,\sup_{p\in i\R}|\tilde A|\,
\int_{-\infty}^{\infty}|t|e^{\frac1\varepsilon(2\pi n t-\sqrt{1+t^2}\xi)}dt.
\end{equation*}
To complete the proof of the lemma, we estimate   
the last integral by means of the Laplace method, see~\cite{Wong}. 
We omit elementary details. 
\section{Destruction of the standard  adiabatic behavior}
\label{sec:st-behavior-destruction}
Here, we prove Theorem~\ref{th:tau-smaller-tau-n}.  
When $\tau=\varepsilon t$ increases,   $\tau< \tau_n$, 
the saddle point $p_n$ in~(\ref{eq:Psi-n:up-2}) moves to the point $p=1$, 
a branch point of the action $S$ (a branch point of 
$l_0$ in~\eqref{eq:S}). After a natural change of variables,
$S$ becomes an analytic function having two saddle points approaching 
one to another as $\tau\to\tau_n$. This effect determines the asymptotic
behavior  of $\Psi_n$ for $\tau\sim\tau_n$. 
\subsection{The factor $A$}
To control the factor $A$ in~(\ref{eq:Psi-n:up-2}),
we often use quite a rough 
\begin{Le}\label{le:A-rough} 
For sufficiently small $\varepsilon$,
  \begin{equation}
    \label{est:A-rough}
 \frac1\varepsilon\int_0^p(L_0(p)-l_0(p))\,dp=O(\varepsilon^{1/2}),\quad p\in\C_0.
  \end{equation}
\end{Le}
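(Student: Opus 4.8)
The plan is to estimate the integrand $L_0(p)-l_0(p)$ along a convenient contour from $0$ to $p$ and use the bounds from Theorems~\ref{th:L0inK0} and~\ref{th:as:L0}. Away from the branch points $p=\pm1$, Theorem~\ref{th:L0inK0} gives $L_0(p)-l_0(p)=O(\varepsilon^2(|p|+1)/(p^2-1)^{3/2})$, which is $O(\varepsilon^2)$ uniformly outside a fixed neighborhood of $p=\pm1$ and decays like $\varepsilon^2/|p|^2$ at infinity, so it is integrable. Near $p=1$ (and by the symmetry $L_0(-p)=-L_0(p)$, $l_0(-p)=-l_0(p)$, also near $p=-1$), Theorem~\ref{th:as:L0} together with estimate~\eqref{eq:zeta-est} gives $L_0(p)-l_0(p)=\sqrt{2\varepsilon}\,\zeta((p-1)/\varepsilon)+O(\varepsilon^{3/2}+|p-1|^{3/2})$, hence $L_0(p)-l_0(p)=O(\varepsilon^{1/2}+|p-1|^{1/2})$ in a fixed $\delta$-neighborhood of $p=1$.

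Concretely, I would fix $a>0$ and, for a given $p\in\C_0$, choose the integration path from $0$ to $p$ to run inside $K_0(a)$ (staying in $\C_0$); this is possible because $\C_0$ and $K_0(a)$ are path-connected and one can always route the contour into the region where the above estimates apply, picking up at worst an $O(1)$-length detour. I would then split the contour into the part lying in the $\delta$-neighborhoods of $\pm1$ and the part staying a fixed distance away. On the near part, the length of the contour piece inside the $\delta$-ball is $O(\delta)$ and the integrand is $O(\varepsilon^{1/2}+|p\mp1|^{1/2})=O(\varepsilon^{1/2}+\delta^{1/2})$, so that contribution, divided by $\varepsilon$, is $O((\delta\varepsilon^{1/2}+\delta^{3/2})/\varepsilon)$; to get the clean $O(\varepsilon^{1/2})$ I would instead scale: near $p=1$ substitute $q=1+\varepsilon s$, so $\frac1\varepsilon\int (\sqrt{2\varepsilon}\,\zeta(s-1/2\cdot\text{const})+O(\varepsilon^{3/2}+\varepsilon^{3/2}|s|^{3/2}))\,\varepsilon\,ds$ over a bounded range of $s$ gives $\varepsilon^{1/2}\int\zeta+O(\varepsilon^{3/2})=O(\varepsilon^{1/2})$, using that $\zeta$ is integrable over a bounded region (it is analytic in $\C\setminus[1/2,\infty)$ and the contour avoids the cut). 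On the far part the integrand is $O(\varepsilon^2(|p|+1)/(p^2-1)^{3/2})$, whose contour integral converges (the tail $\int^\infty \varepsilon^2/|p|^2\,d|p|$ is $O(\varepsilon^2)$, and the bounded middle piece contributes $O(\varepsilon^2)$ over $O(1)$ length), so dividing by $\varepsilon$ gives $O(\varepsilon)$. Summing the two contributions yields the claimed $O(\varepsilon^{1/2})$.

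One point requiring a little care: the endpoint $p$ may itself lie in the $\delta$-neighborhood of a branch point, and then the local (scaled) estimate applies only on the part of the contour with $|q-1|\le\delta$; but since on that part $\frac1\varepsilon\int_{|q-1|\le\delta} O(\varepsilon^{1/2}+|q-1|^{1/2})\,|dq|$ is dominated by the worst case $|q-1|\sim\delta$ giving $O((\delta\varepsilon^{1/2}+\delta^{3/2})/\varepsilon)$, one sees that for the estimate to be genuinely $O(\varepsilon^{1/2})$ uniformly in $p$ one must not integrate the crude bound over a $\delta$-ball of fixed size near the endpoint; instead, one splits further at $|q-1|=\varepsilon^{2/3}$, using the scaled estimate for $|q-1|\le\varepsilon^{2/3}$ (giving $O(\varepsilon^{1/2})+O(\varepsilon^{1/3}\cdot\varepsilon^{2/3}/\varepsilon)=O(\varepsilon^{1/3})$... ) — so in fact the sharp bookkeeping is that near $p=1$ the genuine bound on $\frac1\varepsilon\int_0^p(L_0-l_0)$ comes from recognizing $\frac1\varepsilon\int_0^{p}\sqrt{2\varepsilon}\,\zeta((q-1)/\varepsilon)\,dq$ as an antiderivative of a bounded-integral function, i.e.\ it equals $\frac{1}{\sqrt{2\varepsilon}}\int_0^{(p-1)/\varepsilon}\sqrt{2\varepsilon}\,\zeta(s)\cdot\varepsilon\,ds/\varepsilon$ — hmm, the cleanest route is: $\frac1\varepsilon\int^{p}\sqrt{2\varepsilon}\,\zeta((q-1)/\varepsilon)\,dq=\frac{1}{\sqrt{2\varepsilon}}\,\Xi((p-1)/\varepsilon)$ where $\Xi'=\sqrt2\,\zeta$, and $\Xi(t)=O(1+|t|^{3/2})$ by~\eqref{eq:zeta-as} (since $\zeta(t)=O(\sqrt{|t|})$), so $\frac1{\sqrt\varepsilon}\Xi((p-1)/\varepsilon)=O(\varepsilon^{-1/2}+\varepsilon^{-1/2}|p-1|^{3/2}\varepsilon^{-3/2})$, which is \emph{not} $O(\varepsilon^{1/2})$ for $|p-1|\sim1$.

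The hard part, therefore, is the uniformity in $p$ right where the endpoint approaches a branch point: one cannot naively integrate the singular local bound up to $p$. The resolution I would use is to return to the integral representation~\eqref{for:R}–\eqref{eq:A} for $A$ directly, rather than to the $L_0-l_0$ difference: write $\frac1\varepsilon\int_0^p(L_0-l_0)\,dp$ via~\eqref{for:R}, interchange orders of integration, and evaluate the inner $p$-integral explicitly (it produces $\tan$-type kernels as in the proof of Lemma~\ref{le:P}), reducing everything to a single contour integral over $\gamma$ of $l_0(\zeta)$ against an explicit bounded kernel; the kernel's exponential decay in $\im(\cdot)$ and the $O(\ln(2+|\zeta|))$ growth of $l_0$ then give $O(\varepsilon^{1/2})$ after accounting for the square-root behavior~\eqref{eq:l0near1} of $l_0$ near $\zeta=1$, which is exactly the source of the half-power of $\varepsilon$. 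I expect the bulk of the work to be this reduction and the careful contour choice making it uniform for all $p\in\C_0$, including those near $\pm1$.
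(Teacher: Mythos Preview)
Your overall strategy---split the integration contour into a ``far'' piece (away from $\pm1$) and a ``near'' piece (close to a branch point), and use Theorem~\ref{th:L0inK0} on the former and Theorem~\ref{th:as:L0} on the latter---is the right one, and it is essentially what the paper does. But there are two genuine gaps.

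First, the near/far split should be at $|q-1|=\varepsilon$, not at a fixed $\delta$. With that choice the bookkeeping is clean: on $\varepsilon\le|q-1|\le\delta$, Theorem~\ref{th:L0inK0} gives $L_0-l_0=O(\varepsilon^2|q-1|^{-3/2})$, so
\[
\frac1\varepsilon\int_{\varepsilon}^{\delta}O\!\left(\frac{\varepsilon^2}{r^{3/2}}\right)dr
=O(\varepsilon)\cdot O(\varepsilon^{-1/2})=O(\varepsilon^{1/2}),
\]
while on $|q-1|\le\varepsilon$ the bound $L_0-l_0=O(\varepsilon^{1/2})$ (from~\eqref{eq:zeta-est} and~\eqref{eq:l0near1}) integrated over length $\varepsilon$ and divided by $\varepsilon$ again gives $O(\varepsilon^{1/2})$. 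Your attempt to use the $\zeta$-expansion all the way out to $|q-1|=\delta$ forces you to track the cancellation between $\sqrt{2\varepsilon}\,\zeta((q-1)/\varepsilon)$ and $l_0(q)-\pi$ by hand, which is exactly why your $\Xi$-calculation blew up; Theorem~\ref{th:L0inK0} already encodes that cancellation. No detour via the raw integral representation~\eqref{for:R} is needed.

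Second, and more seriously, you cannot always route the contour inside $K_0(a)$: if the endpoint $p$ lies close to the cut $[1,\infty)$ (e.g.\ $p=2+i\eta$ with $\eta$ small), then $p\notin K_0(a)$ for any fixed $a>0$, and Theorems~\ref{th:L0inK0} and~\ref{th:as:L0} simply do not apply on the last stretch of the contour. The paper handles this case by writing $L_0=L_1+P$ on $K_1(a)$ (Section~\ref{sec:L1}): the piece $L_1-l_0$ is controlled by~\eqref{as:L1-in-D} and~\eqref{eq:L1near1} exactly as $L_0-l_0$ was on $K_0(a)$, and the periodic piece $P$ is controlled by its Fourier expansion~\eqref{for:P}--\eqref{for:Pk}, which gives $\frac1\varepsilon\int_1^p P\,dq=O(\varepsilon^{1/2})$ directly. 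Without this decomposition your argument does not cover all of $\C_0$ uniformly.
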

\begin{proof} 
In view of~(\ref{eq:l0-sym}) and~(\ref{eq:L-prop}), it suffices to 
prove~(\ref{est:A-rough}) only for $p\in {\mathcal Q}_1$. Assume that
$p\in {\mathcal Q}_1$ and fix $a>0$. If $p\in K_0(a)$ and $|p-1|\ge \varepsilon$, 
the statement follows from Theorem~\ref{th:L0inK0}. If $p\in K_0(a)$ and 
$|p-1|\le \varepsilon$, we prove~(\ref{est:A-rough}) using 
also~(\ref{eq:L0near1}) and~(\ref{eq:l0near1}). 

Assume that $p\in K_1(a)$. Then, the analysis is based on the formula 
$L_0=L_1+P$ and representations~(\ref{for:P})~--~(\ref{for:Pk}) describing 
the function $P$. If  $p\in K_1(a)$ and $|p-1|\ge\varepsilon$, we come 
to~(\ref{est:A-rough}) using also~(\ref{as:L1-in-D}) to control $L_1$, and if 
$p\in K_1(a)$ and $|p-1|\le\varepsilon$, then, in 
addition, we use~(\ref{eq:L1near1}).   
\end{proof}
\subsection{Reducing the problem to a ``local one''}\label{ss:loc}
Pick  $0<b<1$. Denote by $V(b)$ the $b$-neighborhood of $p=1$. 
Let  $\tau_n-\delta\le \tau\le \tau_n$ for a $\delta>0$.
Bellow, we assume that $\delta$ is sufficiently small. 
Then, in particular, the saddle point $p_n(\tau)$ is in $V(b)$.
 
Consider $\gamma_0$ and $\gamma_2$, two paths of the steepest 
descents beginning at $p_n$. Let $\gamma=\gamma_0\cup\gamma_2$, and let 
$\gamma(b)$ be the connected component of $\gamma\cap V(b)$ containing 
$p_n$. There is $C>0$ such that, as $\varepsilon\to 0$,
\begin{equation}
\label{eq:Psi-n:up-3}
\Psi_n(x,t)=\frac{e^{i\frac\tau\varepsilon}}{\sqrt{\varepsilon\pi}}
\int_{\gamma(b)} A(p)\sin(px) e^{\frac{i}\varepsilon S(p,\tau)}\,dp
+O(e^{-C/\varepsilon}).
\end{equation}
One proves this representation using an argument standard for the method 
of steepest descents. Let us outline it. First, fix $\tau$. Note that the 
derivative of  $\im  S(p,\tau)$  along $\gamma$ equals $|S_p(p,\tau)|$, 
and that, along $\gamma$, $S_p'(p,\tau)\to \infty$ as $p\to\infty$, 
and $S_p'(p,\tau)\ne 0$ if $p\ne p_n(\tau)$. 
Using these observations, one checks that 
$\int_{\gamma\setminus\gamma(b)} A\,\sin(px) 
e^{\frac{i}\varepsilon S}\,dp=O(e^{-C/\varepsilon})$.
As $S$, $p_n$ and the paths of steepest decent continuously depend on 
$\tau$, this estimate is uniform both in $x$ and $\tau$.  
\subsection{Local analysis of the action $S$}
Let us study  $S$ near the point $p=1$. We begin with

\begin{Le}\label{le:S-in-q} For $p\in\C_0$ and $|p-1|<1$, consider  the action 
$S$ as a function of the variable $z=z(p)=\sqrt{1-p}$ fixed by the condition 
$z(p)>0$ for $p<1$. One has
\begin{gather}\label{eq:S-in-q}
  S(p(z),\tau)=
S(1,\tau)-2(\tau_n-\tau)\,z^2+\frac{4\sqrt{2}}{3}\,z^3+(1-\tau)z^4+z^5f(z),
\\
\label{eq:Sat1}
  S(1,\tau)=-3+2\tau_n-\tau,
\end{gather}
where $f$ is independent of $\tau$ and
analytic in $z$ in the $1$-neighborhood of zero.
\end{Le}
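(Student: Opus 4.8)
The plan is to expand $S(p,\tau)$ directly in the local variable $z=\sqrt{1-p}$ using the known structure of its three constituent pieces in~\eqref{eq:S}: the quadratic polynomial $p^2(1-\tau)-2\pi np$, and the integral $\int_0^p l_0(p)\,dp$, for which we have the closed form~\eqref{eq:int:l0}. First I would substitute $p=1-z^2$, so that $dp=-2z\,dz$, and record that the quadratic part becomes $(1-z^2)^2(1-\tau)-2\pi n(1-z^2)$, which is manifestly a polynomial in $z^2$; I would organize it as $(1-\tau)-2\pi n+\bigl(2\pi n-2(1-\tau)\bigr)z^2+(1-\tau)z^4$. The key point is that there is no $z^3$ term coming from this piece, so the cubic term in the final expansion must come entirely from the integral of $l_0$.

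Next I would treat $\int_0^p l_0(p)\,dp$ via~\eqref{eq:int:l0}, which gives $pl_0(p)-2iQ_0(p)-2$. Both $l_0$ and $Q_0$ are, by the bullet points in Section~\ref{sec:l0}, analytic functions of $z$ near $z=0$: from~\eqref{eq:l0near1}, $l_0=\pi-2^{3/2}z+O(z^3)$, and since $Q_0(p)^2=p^2-1=-z^2(2-z^2)$ one checks $Q_0(p)=-i z\sqrt{2-z^2}=-i\sqrt{2}\,z+O(z^3)$ with the branch forced by $Q_0(0)=i$ together with $Q_0(p)\in i\R_+$ for $-1<p<1$ — actually I would be careful here: for $p$ slightly less than $1$, $z>0$ and $Q_0(p)\in i\R_+$, which pins down $Q_0(1-z^2)=i\sqrt{2}\,z+O(z^3)$. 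Then $-2iQ_0(p)=2\sqrt{2}\,z+O(z^3)$ and $p l_0(p)=(1-z^2)(\pi-2^{3/2}z+\cdots)=\pi-2^{3/2}z-\pi z^2+\cdots$. Summing, $\int_0^p l_0\,dp = (\pi-2)+(2\sqrt2-2\sqrt2)z - \pi z^2 + \dots$; the linear terms in $z$ cancel (as they must, since $\frac{d}{dz}\int_0^p l_0\,dp = -2z\,l_0(p)$ vanishes to order $z$), leaving something whose Taylor expansion in $z$ I would compute out to order $z^4$, with remainder $z^5 f(z)$, $f$ analytic and $\tau$-independent because $l_0$ and $Q_0$ are $\tau$-independent.

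Assembling the two pieces gives $S(p(z),\tau)=S(1,\tau)+c_2(\tau)z^2+c_3 z^3+c_4(\tau)z^4+z^5 f(z)$, and I would verify $S(1,\tau)=(1-\tau)-2\pi n+(\pi-2)=-3+2\pi(\tfrac12-n)+(2-\tau)-\dots$; combining with $\tau_n=1-\pi(n-1/2)$ from~\eqref{eq:tau_N}, i.e. $2\tau_n-3=-1-2\pi(n-1/2)$, this should collapse to $S(1,\tau)=-3+2\tau_n-\tau$, matching~\eqref{eq:Sat1}. For the $z^2$ coefficient: from the quadratic part I have $2\pi n-2(1-\tau)$ and from the integral I have a contribution $-\pi + (\text{higher order bits of } pl_0 \text{ and } -2iQ_0 \text{ at order }z^2)$; collecting everything and again using $\pi(2n-1)=2(1-\tau_n)$ should produce exactly $-2(\tau_n-\tau)$. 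The $z^3$ coefficient, coming only from the $z^3$ terms of $p l_0(p)$ and $-2iQ_0(p)$ (the $O(z^3)$ in $l_0=\pi-2^{3/2}z+O(z^3)$ is one source, as is $-z^2\cdot(-2^{3/2}z)$ from the product $(1-z^2)l_0$, and the $z^3$ term of $-i\sqrt2\,z\sqrt{2-z^2}$-type expansion), I would carefully add up; the claim is it equals $\tfrac{4\sqrt2}{3}$. The main obstacle — really the only one — is bookkeeping these numerical coefficients correctly: getting the sign and normalization of the branch of $Q_0$ right near $z=0$, and tracking enough terms of the $O(z^3)$ remainder in~\eqref{eq:l0near1} (which I would get from $l_0=2\arcsin p$ on $[-1,1]$, expanding $\arcsin(1-z^2)$) to be sure the $z^3$ and $z^4$ coefficients are exact. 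Since everything is an explicit finite Taylor computation with $\tau$ entering only linearly through the elementary polynomial part, analyticity of $f$ and its $\tau$-independence are immediate once the expansions of $l_0(p(z))$ and $Q_0(p(z))$ are established to be $\tau$-free analytic functions of $z$ near $0$.
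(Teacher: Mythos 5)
Your proposal is correct, and the arithmetic you defer does close: with $l_0(p(z))=\pi-2^{3/2}z-\tfrac{\sqrt2}{6}z^3+O(z^5)$ and $Q_0(p(z))=i\sqrt2\,z-\tfrac{i\sqrt2}{4}z^3+O(z^5)$, the cubic coefficient of $p\,l_0(p)-2iQ_0(p)-2$ is $\bigl(2\sqrt2-\tfrac{\sqrt2}{6}\bigr)-\tfrac{\sqrt2}{2}=\tfrac{4\sqrt2}{3}$, and its $z^4$ coefficient vanishes because both $l_0(p(z))-\pi$ and $Q_0(p(z))$ are odd in $z$, so the only quartic term is $(1-\tau)z^4$ from the polynomial part, exactly as the lemma states; your branch determination $Q_0(1-z^2)=i\sqrt2\,z+O(z^3)$, your value of $S(1,\tau)$, and your reduction of the $z^2$ coefficient via $\pi(2n-1)=2(1-\tau_n)$ are all right. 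The route differs from the paper's only in how the integral term is expanded. The paper first groups $S(p,\tau)-S(1,\tau)$ as in \eqref{eq:S:p-1}, so the linear-in-$(p-1)$ contributions combine into $2(\tau_n-\tau)(p-1)$ at the outset, and then integrates the local expansion \eqref{eq:l0near1} of $l_0-\pi$; this produces $\tfrac{4\sqrt2}{3}(1-p)^{3/2}$ from the single leading term $-2^{3/2}\sqrt{1-p}$, the $O((1-p)^{3/2})$ error of \eqref{eq:l0near1} integrates automatically into the $z^5f(z)$ remainder, and $S(1,\tau)$ is read off from \eqref{eq:int:l0:1}. You instead invoke the closed form \eqref{eq:int:l0} and Taylor-expand both $l_0$ and $Q_0$ at $z=0$, which costs one extra order of bookkeeping (you need the $z^3$ terms of both functions and the parity cancellation at order $z^4$) but requires no new ingredient; the $\tau$-independence and analyticity of $f$ follow just as you say, since $\tau$ enters only through the quartic polynomial part.
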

\begin{proof} 
Using formulas~(\ref{eq:S}),~(\ref{eq:tau_N}), we obtain
\begin{equation}\label{eq:S:p-1}
S(p,\tau)- S(1,\tau)=
2(\tau_n-\tau)\,(p-1)+(1-\tau)\,(p-1)^2
+\int_1^p(l_0(p)-\pi)\,dp.
\end{equation}
This and~(\ref{eq:l0near1}) imply the representation
\begin{equation}\label{eq:S:p-1-2}
S(p,\tau)- S(1,\tau)=2(\tau-\tau_n)\,(1-p)+(1-\tau)\,(1-p)^2+
\frac{4\sqrt{2}}{3}(1-p)^{\frac32}+O((1-p)^{\frac52}).
\end{equation}
This implies~(\ref{eq:S-in-q}). The analyticity of $\tilde f$ follows  
from the analyticity of $l_0$ in $z$. Finally, by means of~(\ref{eq:int:l0:1}), 
we get $S(1,\tau)=-1+\pi-2\pi n-\tau=-3+2\tau_n-\tau$. 
\end{proof}
Denote by $S_z$ and $S_{zz}$ the first and second derivatives 
of the function $z\to S(p(z),\tau)$ with respect to $z$.
By means of Lemma~\ref{le:S-in-q}, one checks
\begin{Le}\label{le:S-in-q:1} 
If $b>0$ is sufficiently small, then there exists $\delta=\delta(b)$
such that for all $|\tau-\tau_n|\le \delta$, \
$S_z$ has only two zeros in $z$ in the  $\sqrt{b}$-neighborhood $z=0$. 
They are simple if $\tau\ne \tau_n$, coincide 
if $\tau=\tau_n$, and are  located at the points
$$z=0\quad\text{and}\quad z=z_n(\tau):=z(p_n(\tau)).$$
The function $\tau\to z_n(\tau)$ is analytic in $\tau$, and
\begin{equation}\label{eq:zn-as}
z_n(\tau)=\frac1{\sqrt{2}}(\tau_n-\tau)+O((\tau-\tau_n)^2),\quad 
\tau\to\tau_n.
\end{equation}
\end{Le}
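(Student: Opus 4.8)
The plan is to read off the statement directly from the local normal form~\eqref{eq:S-in-q} obtained in Lemma~\ref{le:S-in-q}. First I would differentiate the expansion
$S(p(z),\tau)=S(1,\tau)-2(\tau_n-\tau)z^2+\tfrac{4\sqrt2}{3}z^3+(1-\tau)z^4+z^5f(z)$
with respect to $z$, which gives
$S_z(z,\tau)=-4(\tau_n-\tau)z+4\sqrt2\,z^2+g(z,\tau)z^3$,
where $g$ collects the contribution of the quartic and higher terms and is analytic and bounded, together with its $z$-derivatives, uniformly for $|\tau-\tau_n|\le\delta$ and $|z|\le\sqrt b$ once $b$ is small. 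Thus $z=0$ is always a zero, and the remaining zeros solve
$-4(\tau_n-\tau)+4\sqrt2\,z+g(z,\tau)z^2=0$.

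Next I would locate the nonzero zero by a fixed point / Rouché argument. Writing the last equation as $z=\tfrac1{\sqrt2}(\tau_n-\tau)-\tfrac1{4\sqrt2}g(z,\tau)z^2$, the right-hand side maps a small disk of radius $\asymp|\tau_n-\tau|+\sqrt b$ into itself and is a contraction when $b$ (hence the disk) is small, so there is exactly one solution $z_n(\tau)$ in the $\sqrt b$-neighborhood; analytic dependence on $\tau$ follows from the analytic implicit function theorem applied to $F(z,\tau):=-4(\tau_n-\tau)+4\sqrt2\,z+g(z,\tau)z^2$, since $F_z=4\sqrt2+O(\sqrt b)\ne0$. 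Iterating the fixed point equation once gives $z_n(\tau)=\tfrac1{\sqrt2}(\tau_n-\tau)+O((\tau-\tau_n)^2)$, which is~\eqref{eq:zn-as}. That $z_n(\tau)=z(p_n(\tau))$ is immediate: $p_n(\tau)$ is the saddle point of $S$ in $p$ from Lemma~\ref{le:saddle-point}, it lies in $V(b)$ for $\delta$ small, and $z\mapsto S_z$ vanishes exactly where $p\mapsto S_p$ vanishes (the change of variables $z=\sqrt{1-p}$ is a local biholomorphism away from $p=1$, and at $z=0$ the extra zero is the spurious one coming from $dp/dz=-2z$).

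Finally, to count zeros with multiplicity and establish simplicity and coincidence I would factor $S_z(z,\tau)=z\bigl(4\sqrt2\,z-4(\tau_n-\tau)+g(z,\tau)z^2\bigr)$ and note that the bracket vanishes only at $z_n(\tau)$ in the relevant neighborhood, with $S_{zz}(0,\tau)=-4(\tau_n-\tau)$ and $S_{zz}(z_n(\tau),\tau)=4(\tau_n-\tau)+O((\tau_n-\tau)^2)$; hence both zeros are simple when $\tau\ne\tau_n$, and when $\tau=\tau_n$ they merge at $z=0$ into a single double zero (indeed $S_z(z,\tau_n)=4\sqrt2\,z^2+O(z^3)$). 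The main obstacle, such as it is, is purely bookkeeping: one must verify that the same $b$ works uniformly for all $|\tau-\tau_n|\le\delta(b)$, which is handled by choosing $\delta$ after $b$ so that $|\tau_n-\tau|\ll\sqrt b$, guaranteeing that $z_n(\tau)$ stays well inside the $\sqrt b$-disk and that the contraction and Rouché estimates hold with constants independent of $\tau$.
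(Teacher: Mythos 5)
Your argument is correct and is exactly what the paper intends: the paper gives no written proof, saying only that the lemma "is checked by means of" the expansion \eqref{eq:S-in-q}, and your route — differentiate the normal form, factor out $z$, locate the second root by Rouch\'e/implicit function theorem, and read off simplicity, coalescence at $\tau=\tau_n$, and the expansion \eqref{eq:zn-as} from one fixed-point iteration — is that check carried out in detail. The identification $z_n(\tau)=z(p_n(\tau))$ via the uniqueness of the saddle point in Lemma~\ref{le:saddle-point} and the spurious factor $dp/dz=-2z$ at $z=0$ is also handled correctly.
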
 
Lemmas~\ref{le:S-in-q} and~\ref{le:S-in-q:1} show that after the change of 
variable $p\to z(p)$ the asymptotic analysis of $\Psi_n$ is reduced  to 
the standard asymptotic analysis of an integral with two coalescing saddle 
points, a well understood classical problem see, e.g.,~\cite{Wong}.
\subsection{Change of variables}
The key to the asymptotic analysis of integrals with two nearby saddle points
is a change of variables that transforms the action into a third order polynomial. 
This transformation is described by the Chester, Friedmann and Ursell  
theorem~\cite{Ch-Fr-Ur,Wong}. We formulate it in a convenient for us form. 
For $z$ close to $0$, we set
\begin{equation}
  \label{eq:G-S}
G(z,\tau)=\frac{S(p(z),\tau)-S(1,\tau)}2=-(\tau_n-\tau)\,z^2+
\frac{2\sqrt{2}}{3}\,z^3+O(z^4),
\end{equation}
and, for  $\tau$ close to $\tau_n$, using the formula
\begin{equation}\label{eq:lambda}
\lambda(\tau)=\left(-6G(z_n(\tau),\tau)\right)^{1/3},
\end{equation}
we define a function  $\lambda$ 
analytic in $\tau$ and such that $\lambda(\tau)>0$ when $\tau<\tau_n$ and 
\begin{equation}\label{eq:lambda-as}
\lambda(\tau)=\tau_n-\tau+o(\tau_n-\tau),\qquad \tau\to\tau_n.
\end{equation}
Define a mulivalued analytic function $z\to u(z,\tau)$ by the equation
\begin{equation}\label{eq:G-u}
G(z,\tau)=-\lambda(\tau)u^2+\frac{2\sqrt{2}}{3}\,u^3.
\end{equation}
One has
\begin{Th} \label{th:CFU}
If $c>0$ is sufficiently small, then, for sufficiently small $\delta(c)>0$,
there is just one branch of the function $u$ that is analytic in 
$(z,\tau)\in\{|z|<c\}\times\{|\tau-\tau_n|<\delta(c)\}$. For this branch,
\begin{equation}\label{saddle-points-connection}
u(0,\tau)=0\quad\text{and}\quad 
u(z_n(\tau),\tau)=\lambda(\tau)/\sqrt{2}.
\end{equation}
The correspondence $z\leftrightarrow u$ is $(1,1)$.
\end{Th}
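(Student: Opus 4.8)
The plan is to verify that the hypotheses of the classical Chester--Friedmann--Ursell lemma (as in~\cite{Ch-Fr-Ur,Wong}) are met by the cubic-type map we need, and then to pin down the normalization~\eqref{saddle-points-connection} using the data already assembled in Lemmas~\ref{le:S-in-q} and~\ref{le:S-in-q:1}. First I would record the structural features of the two cubics. From~\eqref{eq:G-S}, $G(\cdot,\tau)$ has a critical point at $z=0$ (a double zero of $G$ itself, by~\eqref{eq:G-S}) and, by Lemma~\ref{le:S-in-q:1}, a second simple critical point at $z=z_n(\tau)$ for $\tau\ne\tau_n$; the two merge at $\tau=\tau_n$. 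On the right, the model cubic $g(u)=-\lambda u^2+\tfrac{2\sqrt2}3 u^3$ has critical points at $u=0$ and $u=\lambda/\sqrt2$, which merge exactly when $\lambda=0$, i.e. (via~\eqref{eq:lambda-as}) when $\tau=\tau_n$. Thus the critical-point count matches for all $\tau$ near $\tau_n$, and the degeneracy occurs at the same parameter value --- this is the combinatorial input CFU requires.

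Next I would match the ``heights''. The CFU construction produces an analytic change of variable $z\mapsto u$ with $u(0,\tau)=0$ provided the critical values are equalized, i.e. provided $G(z_n(\tau),\tau)=g(u(z_n(\tau),\tau))$ with $u(z_n(\tau),\tau)$ the nonzero critical point $\lambda/\sqrt2$ of $g$. Evaluating $g$ at $u=\lambda/\sqrt2$ gives $g(\lambda/\sqrt2)=-\lambda^3/2+\tfrac{2\sqrt2}{3}\lambda^3/(2\sqrt2)=-\lambda^3/2+\lambda^3/3=-\lambda^3/6$. So the required matching condition is $G(z_n(\tau),\tau)=-\lambda(\tau)^3/6$, which is precisely the definition~\eqref{eq:lambda} of $\lambda$. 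Hence $\lambda$ has been chosen exactly so that the two cubics are ``compatible,'' and~\eqref{saddle-points-connection} is forced: $u(0,\tau)=0$ and $u(z_n(\tau),\tau)=\lambda(\tau)/\sqrt2$ follow because the map sends critical points to critical points preserving critical values, and the analytic branch is the one respecting the orientation $z>0\leftrightarrow u>0$ near $\tau<\tau_n$ (using $\lambda(\tau)>0$ there, from~\eqref{eq:lambda-as}).

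For the existence and analyticity of the branch itself, I would invoke the implicit function theorem in the standard CFU way. Write the desired relation as $G(z,\tau)=g(u)$ with $u=u(z,\tau)$; near the non-degenerate parameter values ($\tau\ne\tau_n$) one can solve locally away from the critical points by the ordinary implicit function theorem, and at the critical points one uses that both sides, after subtracting the common critical value, have a simple zero as functions of the appropriate local coordinate, so $u$ extends analytically across them; the delicate merging regime $\tau\to\tau_n$ is exactly the content of the Chester--Friedmann--Ursell theorem, which guarantees a single branch analytic jointly in $(z,\tau)$ on a bidisc $\{|z|<c\}\times\{|\tau-\tau_n|<\delta(c)\}$ once the height-matching~\eqref{eq:lambda} holds and the leading coefficients (here the common $\tfrac{2\sqrt2}{3}$ in~\eqref{eq:G-S} and~\eqref{eq:G-u}) agree. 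That the correspondence is $(1,1)$ then follows because $dz/du$ is nonvanishing on the bidisc, a consequence of the joint analyticity together with the matching of critical points and of the cubic leading terms.

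I expect the main obstacle to be purely bookkeeping: checking that the leading coefficient $\tfrac{2\sqrt2}{3}$ is common to both cubics (so no rescaling of $u$ is needed), that $\lambda(\tau)$ as defined by~\eqref{eq:lambda} is genuinely analytic --- which requires extracting an analytic cube root, legitimate because $-6G(z_n(\tau),\tau)$ vanishes to exactly first order in $(\tau-\tau_n)$ with positive derivative, giving~\eqref{eq:lambda-as} --- and that the sign conventions ($z>0$, $u>0$, $\lambda>0$ for $\tau<\tau_n$) single out one branch rather than three. None of this is conceptually hard; the real analytic heavy lifting is delegated to~\cite{Ch-Fr-Ur}. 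So the proof is essentially: quote CFU, verify its hypotheses via Lemmas~\ref{le:S-in-q}--\ref{le:S-in-q:1}, and compute $g(\lambda/\sqrt2)=-\lambda^3/6$ to see that~\eqref{eq:lambda} is exactly the normalization that yields~\eqref{saddle-points-connection}.
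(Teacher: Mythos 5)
Your proposal follows essentially the same route as the paper: the paper does not prove this statement at all, but formulates the Chester--Friedmann--Ursell theorem ``in a convenient form'' and delegates it to~\cite{Ch-Fr-Ur,Wong}, exactly as you do, and your normalization computation $g(\lambda/\sqrt2)=-\lambda^3/6$ correctly identifies~\eqref{eq:lambda} as the height-matching condition that forces~\eqref{saddle-points-connection} (the assignment of saddles cannot be swapped, since $G(0,\tau)=0\ne-\lambda^3/6$ for $\tau\ne\tau_n$).

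One factual slip in your ``bookkeeping'' paragraph should be fixed: you justify the analyticity of $\lambda$ by saying that $-6G(z_n(\tau),\tau)$ vanishes to exactly \emph{first} order in $\tau-\tau_n$. That is wrong, and if it were true the cube root in~\eqref{eq:lambda} would have a branch point at $\tau_n$ and~\eqref{eq:lambda-as} would fail. In fact, combining~\eqref{eq:G-S} with~\eqref{eq:zn-as} gives $G(z_n(\tau),\tau)=-\tfrac16(\tau_n-\tau)^3\left(1+O(\tau_n-\tau)\right)$, so $-6G(z_n(\tau),\tau)$ vanishes to exactly \emph{third} order with unit leading coefficient; this is precisely what makes the cube root (with the branch fixed by $\lambda>0$ for $\tau<\tau_n$) analytic and yields $\lambda(\tau)=\tau_n-\tau+o(\tau_n-\tau)$. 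With that correction the argument is sound and coincides with the paper's intent.
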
 
For $|\tau-\tau_n|<\delta(c)$, we define $u\mapsto z(u,\tau)$, the 
function inverse to $z\to u(z,\tau)$. Let $u_n(\tau)=u(z_n(\tau),\tau)=
\lambda(\tau)/\sqrt{2}$. 
We shall use
\begin{Cor} For $|\tau-\tau_n|<\delta(c)$, 
  \begin{equation}\label{eq:derivatives}
  \left.\frac{z}{u}\,  \frac{\partial z}{\partial u}\right|_{u=0}= 
\frac{\lambda(\tau)}{\tau_n-\tau}\,,\qquad
  \left.\frac{z}{u}\,\frac{\partial z}{\partial u}\right|_{u=u_n(\tau)}=
{\sqrt{\frac1{\lambda(\tau)}\,\frac{d\ln p_n}{d\tau}(\tau)}}\,,
  \end{equation}
where the square root positive for $\tau<\tau_n$.
\end{Cor}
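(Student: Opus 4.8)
The plan is to derive both formulas in~\eqref{eq:derivatives} from the defining relation~\eqref{eq:G-u} together with the asymptotic normalizations $G(z,\tau)=-(\tau_n-\tau)z^2+\tfrac{2\sqrt2}{3}z^3+O(z^4)$ and $\lambda(\tau)=\tau_n-\tau+o(\tau_n-\tau)$, and from the correspondence of saddle points in~\eqref{saddle-points-connection}. The basic idea is that $G$, viewed as a function of $z$ or of $u$, is the same quantity, so differentiating the identity $G(z(u,\tau),\tau)=-\lambda u^2+\tfrac{2\sqrt2}{3}u^3$ in $u$ gives $G_z(z,\tau)\,\partial z/\partial u = -2\lambda u+2\sqrt2\,u^2$. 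Since the right-hand side vanishes to first order at $u=0$ and at $u=u_n(\tau)=\lambda/\sqrt2$ (these are exactly the two simple zeros of $G_z$ in $z$, by Lemma~\ref{le:S-in-q:1}), the quantity $\frac{z}{u}\frac{\partial z}{\partial u}$ is finite at both points, and I would compute it by a short l'Hôpital / Taylor argument.

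First I would handle $u=0$. Near $u=0$ one has $z=z(u,\tau)$ analytic with $z(0,\tau)=0$, so write $z = c_1(\tau)u + O(u^2)$; then $\frac{z}{u}\frac{\partial z}{\partial u}\big|_{u=0} = c_1(\tau)^2$. On the other hand, expanding both sides of~\eqref{eq:G-u} to order $2$ in $u$ and using $G(z,\tau)=-(\tau_n-\tau)z^2+O(z^3)$ gives $-(\tau_n-\tau)c_1^2 u^2 + O(u^3) = -\lambda u^2 + O(u^3)$, whence $c_1(\tau)^2 = \lambda(\tau)/(\tau_n-\tau)$. This is the first formula; the sign of $c_1$ is pinned down because $z$, $u$ and $\lambda$ are all positive for $\tau<\tau_n$.

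Next, $u=u_n(\tau)$. Here the cleanest route is to differentiate~\eqref{eq:G-u} \emph{twice} in $u$ and evaluate at the common saddle point. Using $G_z(z_n,\tau)=0$ one gets $G_{zz}(z_n,\tau)\,(\partial z/\partial u)^2\big|_{u=u_n} = (-2\lambda u^2+2\sqrt2 u^3)''\big|_{u=u_n} = -2\lambda + 4\sqrt2\,u_n = -2\lambda + 4\sqrt2\cdot\lambda/\sqrt2 = 2\lambda$. Now $G_{zz}(z_n,\tau) = \tfrac12 S_{zz}(p(z_n),\tau)$, and since $z\mapsto p$ is analytic near $z_n$ with $\tfrac{dp}{dz}=-2z$, a further chain-rule computation relates $S_{zz}$ at $z_n$ to $S_{pp}(p_n,\tau)$, which by the second formula in~\eqref{eq:legeandre:up} equals $2\big(\tfrac{d\ln p_n}{d\tau}\big)^{-1}$. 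Collecting these, $\big(\partial z/\partial u\big)^2\big|_{u=u_n} = 2\lambda / G_{zz}(z_n,\tau)$, and since $z_n = u_n\cdot(\text{something})$ with $u_n=\lambda/\sqrt2$, the factor $z/u$ at $u=u_n$ is itself $z_n/u_n$; multiplying and simplifying yields $\big(\tfrac{z}{u}\tfrac{\partial z}{\partial u}\big)^2\big|_{u=u_n} = \tfrac{1}{\lambda}\tfrac{d\ln p_n}{d\tau}$, which is the square of the claimed expression. Again positivity of $z_n,u_n$ and of $\tfrac{d\ln p_n}{d\tau}$ for $\tau<\tau_n$ fixes the branch.

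The main obstacle I anticipate is purely bookkeeping: carefully tracking the change of variables $p\leftrightarrow z\leftrightarrow u$ through two derivatives, in particular translating $S_{zz}$ at $z_n$ into $S_{pp}$ at $p_n$ (where the $\sqrt{1-p}$ substitution and the vanishing of $S_p$ at $p_n$ conspire so that lower-order terms drop out), and making sure the signs of all the square roots are consistent with ``positive for $\tau<\tau_n$''. There is no analytic difficulty — everything is a local Taylor expansion — but the constants $2$, $\sqrt2$, $4\sqrt2$ must be threaded correctly, and the cleanest presentation is probably to do the $u=0$ case by a first-order expansion and the $u=u_n$ case by a second-order expansion, rather than trying to treat both uniformly.
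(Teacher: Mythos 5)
Your proposal is correct and takes essentially the same route as the paper: at $u=0$ compare the quadratic terms of~\eqref{eq:G-u} and~\eqref{eq:G-S}, and at $u=u_n$ differentiate~\eqref{eq:G-u} twice, use $S_z=0$ at the saddle and $S_{zz}|_{z_n}=4z_n^2S_{pp}|_{p_n}$ (from $p=1-z^2$), then invoke the second identity in~\eqref{eq:legeandre:up}. The only slight divergence is the branch of the square root: the paper fixes it by continuity as $\tau\to\tau_n$, where the first formula together with~\eqref{eq:lambda-as} gives the value $1$, whereas your appeal to positivity of $z_n$, $u_n$ and $\frac{d\ln p_n}{d\tau}$ still tacitly needs $\partial z/\partial u>0$ at $u_n$ (e.g.\ via reality and monotonicity of $u\mapsto z$ on the real axis), a minor point worth one line.
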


\begin{proof}
As $z(0,\tau)=0$, we get 
$\left.\frac{z}{u}\,\frac{\partial z}{\partial u}\right|_{u=0}=
\left.\left(\frac{\partial z}{\partial u}\right)^2\right|_{u=0}$.
Furthermore, relations~(\ref{eq:G-u}) 
and~(\ref{eq:G-S}) imply that
$\left.\left(\frac{\partial z}{\partial u}\right)^2\right|_{u=0}=
\lambda(\tau)/(\tau_n-\tau)$. This proves the
first formula in~(\ref{eq:derivatives}). 
Prove the second one. Using~(\ref{eq:G-u}), we obtain 
$\frac12\left(\left(\frac{\partial z}{\partial u}\right)^2S_{zz}+ 
\frac{\partial^2 z}{\partial u^2} S_z\right)=2(-\lambda+2\sqrt{2}u).$
Therefore, $\left.\left(\frac{\partial z}{\partial u}\right)^2\right|_{u=u_n}
\left.S_{zz}\right|_{z=z_n}=4(-\lambda+2\sqrt{2}u_n)=4\sqrt{2} u_n$.
On the other hand, as $p=1-z^2$, \ $S_{zz}=4z^2S_{pp}-2S_p$. This implies 
that $\left.S_{zz}\right|_{z=z_n}=4z_n^2\left.S_{pp}\right|_{p=p_n}$. Using these 
two observations and the second formula in~(\ref{eq:legeandre:up}), we obtain
$$\left.\frac{z_n^2}{u_n^2}\;
\left(\frac{\partial z}{\partial u}\right)^2\right|_{u=u_n} 
=\frac{z_n^2}{u_n}\;\frac{4\sqrt{2}}{\left.S_{zz}\right|_{z=z_n}}=
\frac{\sqrt{2}}{u_n\left.S_{pp}\right|_{p=p_n}}=
\frac{2}{\lambda\left.S_{pp}\right|_{p=p_n}}=
\frac1{\lambda}\,\frac{d\ln p_n}{d\tau}(\tau).$$
This implies the second formula in~(\ref{eq:derivatives})
up to the sign of its right hand side.  The first formula 
in~\eqref{eq:derivatives} and~\eqref{eq:lambda-as} imply that 
$\left.\frac{z}{u}\,\frac{\partial z}{\partial u}\right|_{u=0,\,\tau=0}=1$.
This implies that the sign in the second formula in~(\ref{eq:derivatives})
is correct and  completes the proof.
\end{proof}
\subsection{Integration path}
Let $\beta(b,\tau)=u(z(\gamma(b)),\tau)$, $\gamma(b)$ being the curve defined
in section~\ref{ss:loc}, and let 
\begin{equation}\label{mathcal-S}
{\mathcal S}(u,\lambda)=
-\lambda u^2+\frac{2\sqrt{2}}3\,u^3.
\end{equation} 
Relation~\eqref{eq:G-u} imply
\begin{Le}\label{le:B-path} For the function $u\to {\mathcal S}(u,\lambda(\tau))$,
\point $u_n(\tau)$ is a saddle point, and $\beta(b,\tau)$ is a 
segment of $B(\tau)$, the path of steepest descent containing $u_n(\tau)$ 
and such that $\im {\mathcal S}(u,\lambda(\tau))$ increases as $u$ moves 
away from  $u_n(\tau)$ along $B(\tau)$;
\point there is a positive constant $C(b)$ such that, for all 
$\tau$ we consider, $\im {\mathcal S}(u,\lambda(\tau))>C(b)$ at the ends of 
$\beta(b,\tau)$;
\point the orientation of $\beta$ induced by the orientation of $\gamma(b)$ 
is such that, at $u=u_n$, the $\beta(b,\tau)$ is oriented downwards;
\point $B(\tau)$ is a smooth curve having the asymptotes $-i\R_+$ 
and $e^{\frac{i\pi}6}\R_+$.
\end{Le}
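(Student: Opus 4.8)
The plan is to split the four bullets into two groups: (i)--(iii) I would get essentially for free by transporting, through the analytic change of variables $p\mapsto z=\sqrt{1-p}\mapsto u$ of Theorem~\ref{th:CFU}, the steepest--descent facts about $S$ near $p=1$ that are already available (Lemma~\ref{le:steep-des-cur}, Lemma~\ref{le:S-in-q} and the Corollary to Theorem~\ref{th:CFU}); bullet (iv) is the only one needing genuine computation, a direct analysis of the explicit cubic ${\mathcal S}(\cdot,\lambda(\tau))$ of~\eqref{mathcal-S}. Throughout I take $b$ small enough that the $b$--neighbourhood $V(b)$ of $p=1$ lies in the domain of the change of variables.

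For bullets (i)--(iii): by~\eqref{eq:G-S} and~\eqref{eq:G-u}, $2{\mathcal S}(u(z,\tau),\lambda(\tau))=S(p,\tau)-S(1,\tau)$ along any curve, with $z=\sqrt{1-p}$. Now $\gamma(b)\subset\gamma=\gamma_0\cup\gamma_2$ is a connected arc through $p_n$ on which, being a piece of a steepest--descent path of $S$, $\re S$ is constant while $\im S$ is strictly monotone on either side of $p_n$; since the map $p\mapsto z\mapsto u$ is biholomorphic near $p=1$ and sends $p_n\mapsto z_n\mapsto u_n$ (Theorem~\ref{th:CFU}), its image $\beta(b,\tau)$ is a connected arc through $u_n$ on which $\re{\mathcal S}(\cdot,\lambda(\tau))$ is constant and $\im{\mathcal S}$ strictly monotone on either side of $u_n$. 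Hence $\beta(b,\tau)$ is a segment of $B(\tau)$; and $u_n$ is a saddle of ${\mathcal S}(\cdot,\lambda(\tau))$ at once from ${\mathcal S}_u=2u(\sqrt2\,u-\lambda)$ with $u_n=\lambda/\sqrt2$. This is (i). For (ii), at the endpoints of $\gamma(b)$ on $\partial V(b)$ the quantity $\im S$ exceeds $\im S(p_n,\tau)$ by a strictly positive amount which, by Lemma~\ref{le:S-in-q}, is $\asymp b^{3/2}$; as it depends continuously on $\tau$ in the compact range we consider, it is bounded below by some $2C(b)>0$, and since ${\mathcal S}(u_n,\lambda(\tau))=-\lambda(\tau)^3/6$ is real one gets $\im{\mathcal S}(u_n,\lambda(\tau))=0$, whence $\im{\mathcal S}>C(b)$ at the ends of $\beta(b,\tau)$. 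For (iii), the contour $e^{i\theta}\R$ in~\eqref{eq:Psi-n:up-2} runs from the third quadrant to the first, so after deformation it goes from infinity in $\C_-$ along $\gamma_2$, through $p_n$, out along $\gamma_0$ into $\C_+$; as $p$ leaves $p_n$ along $\gamma_0$, $z=\sqrt{1-p}$ leaves $z_n>0$ with $dz/dp=-1/(2z_n)<0$, i.e.\ into the lower half--plane, and since $z_n,u_n>0$, formula~\eqref{eq:derivatives} gives $\partial u/\partial z|_{z_n}>0$, so $u$ leaves $u_n>0$ downward; thus $\beta(b,\tau)$ is oriented downward at $u_n$.

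For bullet (iv) I would argue directly on the cubic. Take first $\tau<\tau_n$, so $\lambda=\lambda(\tau)>0$; the substitution $u=\lambda v$ turns ${\mathcal S}(u,\lambda)$ into $\lambda^3\sigma(v)$, $\sigma(v)=-v^2+\frac{2\sqrt2}{3}v^3$, a fixed cubic with saddles $v=0,\,1/\sqrt2$ and critical value $\sigma(1/\sqrt2)=-\frac16\in\R$. In the $v$--plane, $B(\tau)$ lies in $\{\re\sigma=-\frac16\}$ with $\im\sigma$ increasing away from $v=1/\sqrt2$; since $\im\sigma(1/\sqrt2)=0$, on $B(\tau)$ off the saddle $\im\sigma>0$. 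As $\sigma''(1/\sqrt2)=2>0$, the curve is smooth through $1/\sqrt2$ and its two arcs leave that point in the opposite directions $\pm e^{i\pi/4}$. The $e^{i\pi/4}$--arc enters the open first quadrant and cannot escape it, because on $\R_+$ one has $\im\sigma=0$ and on $i\R_+$ one has $\re\sigma(it)=t^2\neq-\frac16$; so it stays there and (as $\im\sigma$ is unbounded on it) goes to infinity, where, since $\sigma(v)=\frac{2\sqrt2}{3}v^3-v^2$ forces $\cos(3\arg v)\to0$ and $\sin(3\arg v)>0$, it approaches the unique such direction in $[0,\pi/2]$, namely $\arg v=\pi/6$. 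By the mirror argument the $-e^{i\pi/4}$--arc stays in the open fourth quadrant (on $\R_+$, $\im\sigma=0$; on $-i\R_+$, $\re\sigma(-it)=t^2\neq-\frac16$) and approaches $\arg v=-\pi/2$, i.e.\ $-i\R_+$. Undoing the positive dilation $u=\lambda v$ yields the asymptotes $e^{i\pi/6}\R_+$ and $-i\R_+$ of $B(\tau)$; smoothness of $B(\tau)$ is clear, being a level curve of the harmonic $\re{\mathcal S}$ at its regular values plus the non--degeneracy at $u_n$. When $\tau=\tau_n$, $\lambda=0$, ${\mathcal S}(u,0)=\frac{2\sqrt2}{3}u^3$, and $B(\tau_n)$ degenerates to the union of the rays $e^{i\pi/6}\R_+$ and $-i\R_+$ through the now--degenerate saddle $u=0$, with the same asymptotes.

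The main obstacle is the last identification in (iv): the cubic ${\mathcal S}(\cdot,\lambda)$ has \emph{three} steepest--descent directions at infinity, $e^{i\pi/6}\R_+$, $e^{5i\pi/6}\R_+$ and $-i\R_+$, and one must rule out the middle one. The confinement argument above---each arc of $B(\tau)$ trapped in a closed quadrant by the elementary facts $\re\sigma(\pm it)=t^2$, $\im\sigma|_{\R}=0$ and $\im\sigma>0$ on $B(\tau)$---is what does this; I expect everything else to be routine bookkeeping through the change of variables of Theorem~\ref{th:CFU}.
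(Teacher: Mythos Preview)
Your proposal is correct and follows essentially the same route as the paper. For bullets (i)--(iii) the paper, like you, simply transports the known properties of $p_n$ and $\gamma(b)$ through the biholomorphic change of variables of Theorem~\ref{th:CFU}; for (iii) the paper's computation is literally the chain $\partial u/\partial p|_{p_n}=-\frac{1}{2z_n}\,\partial u/\partial z|_{z_n}<0$ that you wrote out. For bullet (iv) the paper says only ``elementary and is omitted''; your confinement argument on the normalized cubic $\sigma(v)=-v^2+\tfrac{2\sqrt2}{3}v^3$ (using $\im\sigma|_{\R}=0$ and $\re\sigma(\pm it)=t^2$ to trap the two arcs in the first and fourth quadrants, then reading off the asymptotic directions from $\cos(3\arg v)\to0$, $\sin(3\arg v)>0$) is exactly the kind of direct computation the paper has in mind, and it is carried out correctly.
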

\begin{proof} The first two properties follow from the analogous 
properties of $p_n$ and $\gamma(b)$. To prove the third one,  
remind that, at $p_n$, $\gamma(b)$ is oriented upwards.
In view of the second formula in~(\ref{eq:derivatives}), $u_z|_{z_n}>0$. As
$\left.\frac{\partial u}{\partial p}\right|_{p_n}=
-\left.\frac1{2z_n}\,\frac{\partial u}{\partial z}\right|_{z_n}<0$,
at the point $u=u_n$, the path $\beta(b,\tau)$ is oriented downwards.
The proof of the forth  property is elementary and is omitted.
\end{proof}
\subsection{The proof of Theorem~\ref{th:tau-smaller-tau-n}}
Define $F$ by~\eqref{eq:F}. One has
\begin{Pro} \label{le:leading-term-tau-less-than-tau-n}
As $\varepsilon\to 0$, 
  \begin{gather}\label{eq:I_b-as-tau-smaller-tau-n}
    \Psi_n(x,t)=a_0(x,\tau)F(e^{i\pi/6}Z_n)+
O(\varepsilon^{2/3}(1+|Z_n|^{\frac12})),\quad
Z_n=\lambda(\tau)/(4\varepsilon)^{1/3},\\
\label{eq:a0-temp}
    a_0(x,\tau)=\left(4\varepsilon\right)^{1/6}
e^{\frac{i}\varepsilon (S(1,\tau)+\tau)+\frac{i\pi}4}\,
    \sin(p_n(\tau)x)\,\,\left.\frac{z}{u}\,
\frac{\partial z}{\partial u}\right|_{u_n(\tau)},\quad \tau=\varepsilon t. 
  \end{gather}
\end{Pro}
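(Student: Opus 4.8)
The plan is to start from the localized integral representation~\eqref{eq:Psi-n:up-3}, perform the change of variable $p\mapsto z=z(p)=\sqrt{1-p}$, and then the Chester--Friedmann--Ursell change $z\mapsto u$ from Theorem~\ref{th:CFU}, reducing $\Psi_n$ to a canonical integral with cubic phase ${\mathcal S}(u,\lambda)=-\lambda u^2+\frac{2\sqrt2}3 u^3$ over the path $\beta(b,\tau)$ described in Lemma~\ref{le:B-path}. Explicitly, using $S(p(z),\tau)-S(1,\tau)=2G(z,\tau)=2{\mathcal S}(u,\lambda(\tau))$ from~\eqref{eq:G-S}--\eqref{eq:G-u}, one rewrites
\begin{equation*}
\Psi_n(x,t)=\frac{e^{i(S(1,\tau)+\tau)/\varepsilon}}{\sqrt{\varepsilon\pi}}
\int_{\beta(b,\tau)} A(p(z(u)))\,\sin(p(z(u))x)\,
\frac{\partial p}{\partial z}\frac{\partial z}{\partial u}\,
e^{\frac{2i}\varepsilon{\mathcal S}(u,\lambda(\tau))}\,du+O(e^{-C/\varepsilon}).
\end{equation*}
Write the slowly varying amplitude $g(u,\tau,x)=A\,\sin(p\,x)\,\partial_z p\,\partial_u z$. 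The standard two-coalescing-saddle argument (see~\cite{Wong,Ch-Fr-Ur}) replaces $g$ by $g(0)+u\,[g(u)-g(0)]/u$ and integrates the first, constant, piece against $e^{\frac{2i}\varepsilon{\mathcal S}}$ exactly, producing an Airy-type function; the remaining piece, having a factor $u$, contributes a lower-order term after one integration by parts that converts the saddle coalescence into a $\partial_\lambda$ of the same Airy integral, hence the gain of $\varepsilon^{1/3}$ relative to the leading term.

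The heart of the matter is to identify the canonical integral $\int_{\beta} e^{\frac{2i}\varepsilon(-\lambda u^2+\frac{2\sqrt2}3u^3)}\,du$ with the function $F$ of~\eqref{eq:F}. I would rescale $u=\varepsilon^{1/3}w$ (so that the cubic term becomes $\varepsilon$-free in the right combination) and set $Z_n=\lambda/(4\varepsilon)^{1/3}$ as in the statement; after absorbing the constant $2\sqrt2/3$ by a further linear rescaling of $w$, the exponent becomes proportional to $w^3/3 - Z_n^2 w$ (or a rotated version thereof, since $\beta$ has asymptotic directions $-i\R_+$ and $e^{i\pi/6}\R_+$ by Lemma~\ref{le:B-path}). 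The contour $\beta$, running from one valley to the other through the saddle $u_n$, is exactly a steepest-descent representative of the Airy contour for the combination $z\,{\rm Ai}(z^2)-{\rm Ai}'(z^2)$ appearing in~\eqref{eq:F}: indeed $\int e^{-t^3/3+\zeta t}\,dt$ and $\int t\,e^{-t^3/3+\zeta t}\,dt$ over the appropriate rotated ray give ${\rm Ai}$ and $-{\rm Ai}'$ up to constants, and the particular linear combination $z\,{\rm Ai}-{\rm Ai}'$ with the prefactor $e^{-2z^3/3}$ is precisely what our cubic-plus-quadratic phase yields once the quadratic term $-\lambda u^2$ is completed (the $e^{-2z^3/3}$ factor is the value of $e^{2i{\mathcal S}/\varepsilon}$ at the saddle, i.e.\ the ``action difference'' between the two saddles). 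Matching constants — the rotation $e^{i\pi/6}$, the phase $e^{-i\pi/12}$, the $\sqrt\pi$ — is a bookkeeping exercise that I would pin down by checking the two known asymptotics of $F$ stated after~\eqref{eq:F} against the two limits $Z_n\to\pm\infty$ of the steepest-descent evaluation: for $Z_n\to+\infty$ the two saddles are well separated and only $u_n$ contributes (recovering, via Lemma~\ref{le:S-at-pn} and the Corollary on derivatives, the standard adiabatic prefactor of Theorem~\ref{th:norm-wave:up}), while for $Z_n\to-\infty$ the saddles are complex and one gets the $(-z)^{-5/2}$ decay.

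Finally, the amplitude at the saddle is computed: by~\eqref{saddle-points-connection} and the second formula in~\eqref{eq:derivatives}, $g(u_n,\tau,x)=\sin(p_n(\tau)x)\cdot\partial_z p|_{z_n}\cdot (z_n/u_n)\,\partial_u z|_{u_n}\cdot\tfrac{u_n}{z_n}$, and after pulling out the powers of $\varepsilon$ (one factor $\varepsilon^{1/3}$ from $du=\varepsilon^{1/3}dw$, combined with the $(4\varepsilon)^{1/6}$ normalization hidden in $F$ versus the bare integral) this collapses exactly to the claimed $a_0(x,\tau)=(4\varepsilon)^{1/6}e^{i(S(1,\tau)+\tau)/\varepsilon+i\pi/4}\sin(p_n(\tau)x)\,(z/u)(\partial z/\partial u)|_{u_n}$. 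The error term: the leftover $u$-weighted integral is $O(\varepsilon^{2/3})$ times an Airy-type function, which near its oscillatory side is $O(|Z_n|^{1/2})$, giving $O(\varepsilon^{2/3}(1+|Z_n|^{1/2}))$; the contributions from replacing $A$ by $1+O(\varepsilon^{1/2})$ (Lemma~\ref{le:A-rough}) and from the truncation to $\beta(b,\tau)$ (exponentially small, by Lemma~\ref{le:B-path}(ii)) are absorbed into the same bound. The main obstacle, as noted, is the constant-chasing in the identification with $F$ and verifying the orientation of $\beta$ (Lemma~\ref{le:B-path}(iii)) feeds the correct overall sign; everything else is the routine two-saddle machinery. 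The uniformity in $x$ and $t$ follows since all the changes of variable, the saddle location $u_n(\tau)$, and the path $\beta(b,\tau)$ depend continuously (indeed analytically) on $\tau$, and $\sin(p_n x)$ is bounded uniformly for $0\le x\le 1-\tau$.
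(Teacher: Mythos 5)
Your overall route --- localizing to $\gamma(b)$, passing to $z=\sqrt{1-p}$ and then to the Chester--Friedman--Ursell variable $u$, evaluating a canonical cubic integral over $\beta(b,\tau)$, matching it with $F$, and computing the amplitude at the saddle via \eqref{eq:derivatives} --- is the same as the paper's. But the central step, your splitting of the amplitude, is wrong as described. Your $g(u)=A\,\sin(px)\,\partial_z p\,\partial_u z$ contains the Jacobian $\partial_z p=-2z$, which vanishes at $u=0$ because $z(0,\tau)=0$ by \eqref{saddle-points-connection}; hence your ``constant piece'' $g(0)$ is identically zero, so the term you propose to integrate exactly to ``produce an Airy-type function'' vanishes, while the piece carrying the explicit factor $u$, which you declare lower order and remove by integration by parts, is in fact the entire leading term. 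This is not constant-chasing: $F$ is built from the combination $z\,{\rm Ai}(z^2)-{\rm Ai}'(z^2)$, i.e.\ precisely from the $u$-weighted canonical integral $\int_\beta u\,e^{\frac{2i}{\varepsilon}{\mathcal S}(u,\lambda)}\,du$ (after shifting to the inflection point the weight $u$ becomes the affine factor $(s-v)$ in \eqref{eq:F(z)}); the unweighted integral gives a pure Airy function and can never produce that combination. Moreover, freezing the whole amplitude (including the factor $u$) at $u_n$, as your final paragraph suggests, is not legitimate uniformly in the coalescence region $|Z_n|\lesssim 1$, where $u$ varies over a range $\sim\varepsilon^{1/3}$ comparable to $u_n$ itself.

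The correct procedure, which is what the paper does, is to factor the weight out exactly, $z\,dz=u\cdot\frac{z}{u}\frac{\partial z}{\partial u}\,du$, expand only the smooth part $\varphi=\sin(px)\,\frac{z}{u}\frac{\partial z}{\partial u}$ about the saddle as $\varphi=\varphi_0+(u-u_n)\varphi_1$, and use ${\mathcal S}_u=2\sqrt2\,u(u-u_n)$ to integrate the remainder by parts: the term $\varphi_0\int_\beta u\,e^{\frac{2i}{\varepsilon}{\mathcal S}}\,du$ yields $a_0\,F(e^{i\pi/6}Z_n)$, the integrated-by-parts terms give $O(\varepsilon^{4/3}/(1+|Z_n|^{1/2}))$ before the $\varepsilon^{-1/2}$ prefactor, and the dominant error $O(\varepsilon^{2/3}(1+|Z_n|^{1/2}))$ comes from replacing $A$ by $1$ via Lemma~\ref{le:A-rough} --- so your attribution of the $|Z_n|^{1/2}$ growth to a ``leftover $u$-weighted integral'' is also misplaced. (A small further slip: the factor $e^{-2z^3/3}$ in \eqref{eq:F} is the phase at the inflection point $u=2^{-3/2}\lambda$, not at the saddle, whose phase is $e^{-4iZ_n^3/3}$.)
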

\begin{proof}
Let $b>0$ and $\delta>0$ be sufficiently small, and let $\tau_n-\delta\le\tau
\le \tau_n$. In the integral in~\eqref{eq:Psi-n:up-3} we change the variable 
$p\mapsto u=u(z(p),\tau)$. The leading term in~\eqref{eq:Psi-n:up-3} takes 
the form 
\begin{equation}
   \label{eq:Ib}
   \Psi_n^{(0)}(x,\tau)=
-\frac{2e^{\frac{i}\varepsilon (S(1,\tau)+\tau)}}{\sqrt{\pi\varepsilon}} 
\int_{\beta(b,\tau)} A(p)
\varphi(u,\tau,x) e^{\frac{2i}\varepsilon{\mathcal S}(u,\lambda(\tau)) }\,u\,du,
\end{equation}
where
\begin{equation}
  \label{eq:varphi}
\varphi(u,\tau,x)=
\sin(px)\,\frac{z}{u}\frac{\partial z}{\partial u},\quad
p=1-z^2,\quad z=z(u,\tau).
\end{equation}
Represent $\varphi$  in the form
\begin{equation}
  \label{eq:varphi01}
  \varphi(u,\tau,x)=\varphi_0(\tau,x)+(u-u_n(\tau))\varphi_1(u,\tau,x),
\quad \varphi_0(\tau,x)=\varphi(u_n(\tau),\tau,x).
\end{equation}
%
%
Note that ${\mathcal S}_u'(u,\tau)=2\sqrt{2}u(u-u_n(\tau))$.
Using integration by parts, we  get 
\begin{gather*}
\Psi_n^{(0)}(x,\tau)=k\,(J_1+J_2+J_3+J_4),\quad 
k=-\frac{2e^{\frac{i}\varepsilon (S(1,\tau)+\tau)}}{\sqrt{\pi\varepsilon}},\\
J_4= \int_{\beta(b,\tau)} (A(p)-1)\varphi(u,\tau,x) 
e^{\frac{2i}\varepsilon{\mathcal S} }\,u\,du,\qquad
J_1= \varphi_0(\tau,x)\int_{\beta}
e^{\frac{2i}\varepsilon{\mathcal S} }\,u\,du,\\
J_2=\frac{\varepsilon}{4\sqrt{2}i}\left.\varphi_1(u,\tau,x)\, 
e^{\frac{2i}\varepsilon{\mathcal S}}
\right|_{\beta}, \qquad
J_3=\frac{i\varepsilon}{4\sqrt{2}}\int_{\beta}(\varphi_1)_u'(u,\tau,x)\,
e^{\frac{2i}\varepsilon{\mathcal S}}\,du,
\end{gather*}
where ${\mathcal S}={\mathcal S}(u,\lambda)$, \ $\lambda=\lambda(\tau)$.
In view of the second statement of Lemma~\ref{le:B-path}, 
the term $J_2$ is exponentially small in $\varepsilon$, 
and, up to an exponentially small error, we can replace 
in the formula for $J_1$ the integration path by  $B(\tau)$.
We get
\begin{equation}\label{eq:J1J2}
J_1+J_2=\varphi_0(\tau,x)\int_{B(\tau)}
e^{\frac{2i}\varepsilon {\mathcal S}(u,\lambda) }\,u\,du+O(e^{-C/\varepsilon}).
\end{equation}
In the integral in~(\ref{eq:J1J2}), we change the variable
$u\mapsto v=e^{-\frac{5\pi i}6} 2^{\frac56}\varepsilon^{-\frac13}(u-2^{-\frac32}\lambda)$. 
This gives
\begin{gather}
  \label{eq:J1J2:1}
J_1+J_2=-2^{-\frac23}\pi e^{\frac{i\pi}6} \,\varepsilon^{\frac23}\varphi_0(\tau,x)
e^{-\frac{2i}3 Z_n^3}f(e^{i\pi/6}Z_n)+O(e^{-C/\varepsilon}),\\
\label{eq:F(z)}
f(s)=\frac1{2\pi i}\int_{\gamma} e^{-\frac{v^3}3+s^2v}\,(s-v)\,dv,
\end{gather}
where $Z_n$ is as in~\eqref{eq:I_b-as-tau-smaller-tau-n},
and $\gamma$ is a smooth curve  going from \ $e^{-2i\pi/3}\,\infty$ \ 
to  \ $e^{2i\pi/3}\,\infty$. Recall that 
$\Ai(s)=\frac1{2\pi i}\int_{\gamma} e^{-\frac{v^3}3+s v}\,dv$, see~\cite{Olver}.
Therefore, $f(s)=s\Ai(s^2)-\Ai'(s^2)$. This implies that the 
leading term in~\eqref{eq:J1J2:1} being multiplied by the factor 
$k$ is equal to the leading term in~\eqref{eq:I_b-as-tau-smaller-tau-n}. 
To complete the proof, it suffices to check the estimates 
\begin{equation}\label{eq:J3}
J_3= O(\varepsilon^{4/3}/(1+|Z_n|^{1/2})),\quad 
J_4=O(\varepsilon^{7/6}(1+|Z_n|^{1/2})).
\end{equation} 
Begin with $J_3$.  If $|\lambda/\varepsilon^{1/3}|\le 1$,
we change the variable $u$ to $w=u/\varepsilon^{1/3}$ and get 
$J_3=\varepsilon^{4/3}\int_{\beta/\varepsilon^{1/3}}O(1)
e^{2i{\mathcal S}(w,\lambda/\varepsilon^{1/3})}\,dw=
O(\varepsilon^{4/3})$. If $|\lambda/\varepsilon^{1/3}|\ge 1$, 
we change the variable $u$ to  $w=u/\lambda$. Then,  
$J_3=\varepsilon\lambda \int_{\beta/\lambda}O(1)
e^{\frac{2i\lambda^3}\varepsilon{\mathcal S}(w,1)}\,dw$,
the integral is taken along a segment of a path of the steepest descent of 
the function $w\mapsto {\mathcal S}(w,1)$, and this segment contains 
the saddle point $w=1/\sqrt{2}$. Direct 
application of the method of steepest descents leads to the 
estimate $J_3=O(\varepsilon^{3/2}/\lambda^{1/2})=
O(\varepsilon^{4/3}/|Z_n|^{1/2})$. Our estimates for $J_3$ 
imply the first estimate  in~(\ref{eq:J3}).
Turn to  $J_4$. Using Lemma~\ref{le:A-rough}, we get  
$J_4=\int_{\beta}O(\varepsilon^{1/2})
e^{\frac{2i}\varepsilon {\mathcal S}(u,\lambda)}u\,du$ and
estimate $J_4$ reasoning as when estimating $J_3$.
\end{proof}
Let us complete the proof of Theorem~\ref{th:tau-smaller-tau-n}. 
First, check that $Z_n(\varepsilon t)$ admits the representation 
from this theorem. Using~(\ref{eq:Sat1}) and the first 
relation in~(\ref{eq:legeandre:up}), we check that  $S(p_n(\tau),\tau)-
S(1,\tau)=\int_{\tau}^{\tau_n}E_n(\tau)\,d\tau$. This 
and~\eqref{eq:lambda} imply the representation $\lambda(\tau)=
\left(3\int_{\tau_n}^{\tau}E_n(\tau)\,d\tau\right)^{\frac13}$.
Substituting it into the formula for $Z_n$ 
from~\eqref{eq:I_b-as-tau-smaller-tau-n}, we get the  needed 
representation for $Z_n$. Next, we transform the expression for $a_0$ 
from~(\ref{eq:I_b-as-tau-smaller-tau-n}).
Using~(\ref{eq:Sat1}), the second relation 
in~(\ref{eq:derivatives}),~\eqref{eq:psi-n} and the expression for 
$Z_n$ from~\eqref{eq:I_b-as-tau-smaller-tau-n}, we prove that
$a_0=c_n\,\sqrt{(\ln p_n)'(\tau)/Z_n} \,\,\, \psi_{n}(x,\tau)$.
This formula  and~(\ref{eq:I_b-as-tau-smaller-tau-n}), 
imply~(\ref{psi:n:tau-less-tau-n}). \qed
\section{Aftermath: the case of $\varepsilon t\ge\tau_n$}
\label{sec:tau-greater-tau-n}
Here, first, we  single out the terms that become the leading
terms of the asymptotics of $\Psi_n$ for $t$ and $x$ 
satisfying~\eqref{eq:tau-ge-tau-n}. Then, we describe the asymptotic 
behavior of each of these terms and prove Theorem~\ref{th:final}. 
\subsection{Singling out the leading terms}
For $\tau\ge\tau_n$, the main contribution to the integral 
in~\eqref{le:psi-n:up} comes from a small neighborhood of the 
point $p=1$, the branch point of the action $S$. Let us formulate the precise
statement. Pick $0<\alpha<1/6$ and  set $\rho(\varepsilon)=
\varepsilon^{\frac23-\alpha}$. For $a,b\in \C$,  denote by $[a,b]$ 
the segment of straight line connecting $a$ and $b$. We prove
\begin{Pro}\label{pro:wedge:main-contrib}
For $t$ and $x$ satisfying~\eqref{eq:tau-ge-tau-n},
as $\varepsilon\to 0$,
\begin{equation}\label{eq:TFRE}
\Psi_n(x,\varepsilon t)=
e^{i\phi(\varepsilon)}({\mathcal T}(x,\varepsilon t)+
{\mathcal R}(x,\varepsilon t)+{\mathcal G}(x,\varepsilon t))+
E(x,\varepsilon t),\\
\end{equation}
where  
\begin{equation}\label{eq:T-def}
{\mathcal T}(x,\tau)=\frac{e^{it}}{\pi^{\frac12}\varepsilon^{\frac12}}
\left(\int_{1-i\rho(\varepsilon)}^1+\int_1^{1+\rho(\varepsilon)}\right)
\sin(px)\,e^{\frac{i}\varepsilon S(p,\tau)}\,dp,
\end{equation}
\begin{equation}
\label{eq:R-def}
{\mathcal R}(x,\tau)=
\frac{ie^{it}}{\pi^{\frac12}\varepsilon^{\frac32}}\,
\int_1^{1+\rho(\varepsilon)}\sin(px)\,
e^{\frac{i}\varepsilon S(p,\tau)}\,\int_1^pP(q)\,dq\,dp,
\end{equation}
\begin{equation}
\label{eq:F-def}
\begin{split}  
{\mathcal G}(x,\tau)=
\frac{ie^{it}}{\pi^{\frac12}\varepsilon^{\frac32}}&
\left(\int_{1-i\rho(\varepsilon)}^1\right.
\sin(px)\,e^{\frac{i}\varepsilon S(p,\tau)}\,
\int_1^p(L_0(q)-l_0(q))\,dq\,\,dp+\\
&\left.\int_{1}^{1+\rho(\varepsilon)}\sin(px)\,
e^{\frac{i}\varepsilon S(p,\tau)}\,
\int_1^p(L_1(q)-l_0(q))\,dq\,dp\right),
\end{split}
\end{equation}
$E(x,\tau)=O(\varepsilon^{\frac76})$, and 
$\phi(\varepsilon)=\frac1\varepsilon\,\int_0^1(L_0-l_0)\,dp=
O(\varepsilon^{\frac12})$.
\end{Pro}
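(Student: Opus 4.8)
We only sketch the plan; a complete proof will be given elsewhere. The starting point is the integral representation of Lemma~\ref{le:psi-n:up},
\[
\Psi_n(x,t)=\frac{e^{it}}{\sqrt{\varepsilon\pi}}\int_{e^{i\theta}\R}A(p)\sin(px)\,e^{\frac i\varepsilon S(p,\tau)}\,dp,\qquad\tau=\varepsilon t .
\]
For $\tau\ge\tau_n$ the saddle point $p_n$ of Lemma~\ref{le:saddle-point} has reached the branch point $p=1$ of $S$, and, arguing as in the proof of that lemma, $S_p(\cdot,\tau)$ no longer vanishes on $(-1,1)$; the second stationary point produced by the substitution $z=\sqrt{1-p}$ of Section~\ref{sec:st-behavior-destruction} moves, for $\tau>\tau_n$, to the other sheet and is not met by the deformed contour. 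Hence the whole contribution to $\Psi_n$ comes from an arbitrarily small neighbourhood of $p=1$. Accordingly I would: (i) deform $e^{i\theta}\R$ to a contour that near $p=1$ runs along $[1-i\rho(\varepsilon),1]\cup[1,1+\rho(\varepsilon)]$ and is otherwise negligible; (ii) pull the constant $e^{i\phi(\varepsilon)}$ out of the remaining integral; (iii) expand the slowly varying exponential factors on the two segments, reading off ${\mathcal T}$, ${\mathcal R}$ and ${\mathcal G}$; (iv) estimate the leftover terms.

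For step (i): using Corollary~\ref{cor:L0inK1}, the bound $\im S(p,\tau)\ge-C|p|$ from Section~\ref{sss:Lemma}, and the boundedness of $A$ in $K_0(a)$ (Theorem~\ref{th:L0inK0}), one first deforms $e^{i\theta}\R$ onto a contour close to the real axis, with exponentially small tails at infinity. On $(-\infty,1-\rho(\varepsilon)]$ the phase is non‑stationary (by~\eqref{eq:S-prime}, $\re S$ is strictly monotone on $(-1,1)$ for $\tau<1$), so repeated integration by parts reduces this part to a boundary term at $p=1-\rho(\varepsilon)$, which will cancel against the matching boundary term produced on the segment $[1-i\rho(\varepsilon),1]$, plus $O(\varepsilon^{\infty})$; the neighbourhood of the second branch point $p=-1$ is handled by the substitution $p\mapsto-p$, which, since $A(-p)=A(p)$ and $S(-p,\tau)=S(p,\tau)+4\pi np$ by the oddness of $l_0$, turns it into an integral with no stationary point near $p=1$, hence $O(\varepsilon^{\infty})$. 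The near‑$p=1$ piece is then deformed onto $[1-i\rho(\varepsilon),1]$ and, from the $\C_+$ side (where $L_0=L_1+P$), onto $[1,1+\rho(\varepsilon)]$. Here Lemma~\ref{le:S-in-q} is decisive: along $[1,1+\rho(\varepsilon)]$ the cubic term of $S$ contributes a factor $\exp\!\bigl(-\tfrac{4\sqrt2}{3\varepsilon}(p-1)^{3/2}\bigr)$, and along $p=1-i\sigma$ a factor $\exp\!\bigl(-\tfrac{4}{3\varepsilon}\sigma^{3/2}\bigr)$; since $\rho(\varepsilon)^{3/2}/\varepsilon=\varepsilon^{-3\alpha/2}\to\infty$, everything beyond distance $\rho(\varepsilon)$ from $p=1$ is $O(e^{-C/\varepsilon})$, and the effective range of integration on each segment is $|p-1|\lesssim\varepsilon^{2/3}$.

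For steps (ii)–(iv): since $\int_0^1l_0\,dp=\pi-2$ by~\eqref{eq:int:l0:1} and $\int_0^1(L_0-l_0)\,dp=\varepsilon\phi(\varepsilon)$, one has on the deformed contour the identity $A(p)e^{\frac i\varepsilon S(p,\tau)}=e^{i\phi(\varepsilon)}e^{\frac i\varepsilon S(p,\tau)}e^{\frac i\varepsilon\int_1^p(L_0-l_0)\,dq}$, with $\phi(\varepsilon)=O(\varepsilon^{1/2})$ by Lemma~\ref{le:A-rough}. On $[1,1+\rho(\varepsilon)]$ write further $\int_1^p(L_0-l_0)=\int_1^p(L_1-l_0)+\int_1^pP$ and expand $e^{\frac i\varepsilon\int_1^pP}=1+\tfrac i\varepsilon\int_1^pP+\cdots$ and $e^{\frac i\varepsilon\int_1^p(L_1-l_0)}=1+\tfrac i\varepsilon\int_1^p(L_1-l_0)+\cdots$; on $[1-i\rho(\varepsilon),1]$ expand $e^{\frac i\varepsilon\int_1^p(L_0-l_0)}=1+\tfrac i\varepsilon\int_1^p(L_0-l_0)+\cdots$. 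The product of the two ``$1$''s over both segments reproduces ${\mathcal T}$ (the prefactors matching, $\tfrac{e^{it}}{\sqrt{\varepsilon\pi}}$ being that of~\eqref{eq:T-def}); the single term $\tfrac i\varepsilon\int_1^pP$ over $[1,1+\rho(\varepsilon)]$ reproduces ${\mathcal R}$ of~\eqref{eq:R-def} (prefactor $\tfrac{e^{it}}{\sqrt{\varepsilon\pi}}\cdot\tfrac i\varepsilon$); and the two single terms $\tfrac i\varepsilon\int_1^p(L_0-l_0)$ over $[1-i\rho(\varepsilon),1]$ and $\tfrac i\varepsilon\int_1^p(L_1-l_0)$ over $[1,1+\rho(\varepsilon)]$ reproduce ${\mathcal G}$ of~\eqref{eq:F-def}. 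All remaining terms go into $E$. On the effective range $|p-1|\lesssim\varepsilon^{2/3}$ one has $\int_1^pP\,dq=O(\varepsilon^{3/2})$ by~\eqref{for:P}--\eqref{for:Pk}, $\int_1^p(L_1-l_0)\,dq=O(\varepsilon^{3/2})$ by~\eqref{eq:L1near1} and~\eqref{eq:l0near1}, and $\int_1^p(L_0-l_0)\,dq=O(\varepsilon^{3/2})$ by~\eqref{eq:L0near1}, \eqref{eq:l0near1} and the Remark after Theorem~\ref{th:as:L0}; hence each quadratic term is $\le\tfrac{C}{\sqrt\varepsilon}\cdot\varepsilon^{-2}\cdot(\varepsilon^{3/2})^{2}\cdot\varepsilon^{2/3}=O(\varepsilon^{7/6})$, the higher ones are smaller, and adding the $O(e^{-C/\varepsilon})$ and $O(\varepsilon^{\infty})$ contributions from step (i) yields $E=O(\varepsilon^{7/6})$.

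The hard parts are step (i) — making the contour surgery rigorous while consistently tracking the cut of $l_0$ along $[1,\infty)$ (so that $L_0$ versus $L_1$ is used correctly) and disposing of the neighbourhood of $p=-1$ and the tails at infinity — and, inside step (iv), the bounds $\int_1^p(L_0-l_0)\,dq=O(\varepsilon^{3/2})$ and $\int_1^p(L_1-l_0)\,dq=O(\varepsilon^{3/2})$: the naive estimates $|L_0-l_0|,|L_1-l_0|=O(\varepsilon^{1/2}+|p-1|^{1/2})$ are too weak (they would give only $O(\varepsilon)\supsetneq O(\varepsilon^{3/2})$ and spoil the quadratic error), so one must exploit the cancellation of the leading $\sqrt{p-1}$ terms in~\eqref{eq:L0near1}--\eqref{eq:l0near1} and~\eqref{eq:L1near1} (equivalently, the asymptotics~\eqref{eq:zeta-as} of $\zeta$), treating separately the region $|q-1|\lesssim\varepsilon$ where the $(q-1)^{-3/2}$ behaviour of the remainder would otherwise be non‑integrable. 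The exponent in $\rho(\varepsilon)=\varepsilon^{2/3-\alpha}$, $0<\alpha<1/6$, is calibrated exactly for this: $\alpha>0$ makes $\rho(\varepsilon)$ large enough for all tails to be $O(e^{-C/\varepsilon})$, while $\alpha<1/6$ keeps it small enough that the bookkeeping above closes at order $\varepsilon^{7/6}$.
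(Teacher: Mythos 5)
Your steps (ii)--(iv) are essentially the paper's argument: factor out $e^{i\phi(\varepsilon)}$, expand $e^{\frac i\varepsilon\int_1^p(L_0-l_0)\,dq}$ (and, on the right segment, split $L_0=L_1+P$) to first order, keep the linear terms as ${\mathcal R}$ and ${\mathcal G}$, and bound the quadratic remainder by $\varepsilon^{-1/2}\cdot O(\varepsilon)\cdot O(\varepsilon^{2/3})=O(\varepsilon^{7/6})$ using the $e^{-C|p-1|^{3/2}/\varepsilon}$ decay. Note also that the bound you list as a remaining ``hard part'', $\int_1^p(L_0-l_0)\,dq=O(\varepsilon^{3/2})$, requires no new work: it follows at once from Lemma~\ref{le:A-rough}, since \eqref{est:A-rough} holds uniformly for $p\in\C_0$ and you may subtract its values at $p$ and at $p=1$; the analogous bounds for $L_1-l_0$ and for $\int_1^pP\,dq$ are obtained inside the proof of that lemma from \eqref{as:L1-in-D}, \eqref{eq:L1near1} and \eqref{for:P}--\eqref{for:Pk}.

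The genuine gap is in your step (i). You deform to a contour along the real axis, dispose of $(-\infty,1-\rho(\varepsilon)]$ by repeated integration by parts, and assert that the resulting boundary term at $p=1-\rho(\varepsilon)$ ``cancels against the matching boundary term produced on the segment $[1-i\rho(\varepsilon),1]$''. No such matching term exists: $[1-i\rho,1]$ is kept intact (it is part of ${\mathcal T}$ and ${\mathcal G}$, not integrated by parts), and the connecting arc between $1-\rho$ and $1-i\rho$ is never accounted for. Moreover the discarded boundary term is not small: it is of size $\varepsilon^{1/2}/|S_p(1-\rho,\tau)|$, i.e.\ about $\varepsilon^{1/3+\alpha/2}$ for $\tau$ near $\tau_n$ (where $|S_p|\sim\sqrt{\rho}$ by \eqref{eq:S-prime} and \eqref{eq:l0near1}) and about $\varepsilon^{1/2}$ for $\tau-\tau_n\sim1$ --- in both cases far larger than the claimed error $O(\varepsilon^{7/6})$ and comparable to the main terms themselves, and further integrations by parts only worsen it. On the real segment the phase is real, so no decay is available and only an exact cancellation with the omitted connecting piece could save the argument; you do not provide it. The paper avoids this entirely: using Lemma~\ref{le:path-deform} together with \eqref{est:A-rough} it deforms the whole contour in \eqref{eq:Psi-n:up-2} to $\Gamma=\{1-i\R_+\}\cup\{1\}\cup\{1+\R_++i0\}$, on which $\im S$ increases monotonically away from $p=1$ and satisfies $\im S\ge C|p-1|^{3/2}$ (Lemma~\ref{le:action}); truncating $\Gamma$ at $|p-1|=\rho(\varepsilon)$ then costs only $O(e^{-C\varepsilon^{-3\alpha/2}})$, and no oscillatory real segment, no neighbourhood of $p=-1$, and no junction terms ever appear. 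You should replace your real-axis surgery by this deformation; the rest of your plan then closes as you describe.
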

\noindent
To prove the  proposition, we  use the following two lemmas. 
\begin{Le} \label{le:path-deform} For $p\in\C_0$, we let   
$p=1+e^{i\phi}s$, $s\ge0$, $\phi\in\R$. 
If  $0<\phi<\pi$, then
\begin{equation*}
\im S(p,\tau)=(1-\tau)s^2\sin(2\phi)+2s(\ln(2s)-1)\cos\phi
+2 s\,\sin\phi (\tau_n-\tau-\phi)+O(1)
\end{equation*}
as $s\to \infty$. If $-\pi<\phi<0$, then, as $s\to\infty$,
\begin{equation*}
\im S(p,\tau)=(1-\tau)s^2\sin(2\phi)-
2s(\ln(2s)-1)\cos\phi+ 2s\sin\phi\,(\tau_n-\tau+\phi)+
O(1).
\end{equation*} 
These representations are uniform in $\phi$.
\end{Le}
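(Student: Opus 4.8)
The plan is to strip the action down to its only non-polynomial piece and feed it the large-$p$ expansion of $l_0$ recorded in~\eqref{eq:l0:as}. For $\phi\in(0,\pi)\cup(-\pi,0)$ the point $p=1+e^{i\phi}s$ lies in $\C_0$, so by~\eqref{eq:S:p-1} and~\eqref{eq:Sat1}
\[
S(p,\tau)=(2\tau_n-\tau-3)+2(\tau_n-\tau)(p-1)+(1-\tau)(p-1)^2+\int_1^p(l_0(q)-\pi)\,dq ,
\]
and, using~\eqref{eq:int:l0} together with $l_0(1)=\pi$ and $Q_0(1)=0$, this last integral is contour-independent in $\C_0$ and equals $p(l_0(p)-\pi)-2iQ_0(p)$. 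Taking imaginary parts, the first three summands contribute only $2(\tau_n-\tau)s\sin\phi$ and $(1-\tau)s^2\sin(2\phi)$, so everything reduces to the imaginary part of $p(l_0(p)-\pi)-2iQ_0(p)$ as $s\to\infty$.

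For $0<\phi<\pi$ we have $p\in\C_+$, so~\eqref{eq:l0:as} applies with the branch $\ln i=i\pi/2$; combined with $Q_0(p)=p-\frac1{2p}+O(|p|^{-3})$ it gives $p(l_0(p)-\pi)-2iQ_0(p)=2ip(\ln(2p)-1)+O(|p|^{-1})$, which is~\eqref{eq:int:l0-pi}. Since the estimate in~\eqref{eq:l0:as} is uniform up to $\R+i0$ and the closed form makes the integration path irrelevant, this is uniform in $\phi$. Writing $2p=2se^{i\phi}(1+e^{-i\phi}/s)$ gives $\ln(2p)=\ln(2s)+i\phi+O(1/s)$; expanding $2i(1+se^{i\phi})(\ln(2s)-1+i\phi+O(1/s))$ and extracting the imaginary part, the $O(s)$ contribution equals $2s\,\re(e^{i\phi}(\ln(2s)-1+i\phi))=2s(\ln(2s)-1)\cos\phi-2s\phi\sin\phi$, while the remaining terms stay within the asserted $O(1)$. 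Adding the three contributions and regrouping the $s\sin\phi$ terms as $2s\sin\phi(\tau_n-\tau-\phi)$ yields the first displayed identity.

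For $-\pi<\phi<0$ I would not redo the computation but invoke the reflection symmetry $S(p,\tau)=\overline{S(\bar p,\tau)}$. It follows from~\eqref{eq:l0-sym} (that is, $l_0(\bar p)=\overline{l_0(p)}$) and from $Q_0(\bar p)=-\overline{Q_0(p)}$ --- the latter because $Q_0(\bar p)^2=\overline{p^2-1}$ forces $Q_0(\bar p)=\pm\overline{Q_0(p)}$ and the value $Q_0(0)=i$ selects the minus sign --- once these are inserted into the closed form of $\int_0^p l_0$ from~\eqref{eq:int:l0}. Hence $\im S(1+e^{i\phi}s,\tau)=-\im S(1+e^{-i\phi}s,\tau)$, and since $-\phi\in(0,\pi)$ the case already treated applies to the right-hand side; substituting $\phi\mapsto-\phi$ in the first formula and negating it reproduces the second displayed identity, with the expected sign changes on the $\cos\phi$ term and inside $\tau_n-\tau+\phi$.

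The whole argument is algebraic bookkeeping, and I do not expect a genuine obstacle: replacing $\int_1^p(l_0-\pi)\,dq$ by its closed form $p(l_0(p)-\pi)-2iQ_0(p)$ eliminates any dependence on the integration contour and reduces the lemma to~\eqref{eq:l0:as} and elementary estimates. The only points requiring care are keeping the logarithm on the branch dictated by~\eqref{eq:l0:as} when forming the $O(s)$ term, and checking that the lower-order leftovers do not exceed the claimed remainder uniformly in $\phi$ and in $\tau\in[\tau_n,1]$.
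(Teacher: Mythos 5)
Your route is the paper's own: decompose $S$ via \eqref{eq:S:p-1}, insert the large-$|p|$ asymptotics \eqref{eq:int:l0-pi} (which you rederive from the closed form \eqref{eq:int:l0}; that is fine and makes the contour question moot), and get the case $-\pi<\phi<0$ from the Schwarz symmetry $\overline{S(\bar p,\tau)}=S(p,\tau)$ — your verification of that symmetry through $l_0(\bar p)=\overline{l_0(p)}$ and $Q_0(\bar p)=-\overline{Q_0(p)}$ is correct and is exactly the relation the paper invokes. So in substance you reproduce the paper's (two-sentence) proof with the details filled in.

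One bookkeeping claim, however, does not hold as written: in expanding $2ip(\ln(2p)-1)$ with $p=1+se^{i\phi}$, the summand coming from the ``$1$'' in $p$ is $2i(\ln(2p)-1)$, whose imaginary part is $2\ln(2s)+O(1)$ — it is not absorbed by the asserted $O(1)$. Equivalently, the exact coefficient of $\ln|2p|-1$ in $\im\bigl(2ip(\ln(2p)-1)\bigr)$ is $2\re p=2+2s\cos\phi$, not $2s\cos\phi$; for instance at $\phi=\pi/2$ one finds $\im S=2s(\tau_n-\tau-\pi/2)+2\ln(2s)+O(1)$. So what your computation actually proves is the displayed formulas with remainder $O(\ln s)$ (with $+2\ln(2s)$ for $0<\phi<\pi$ and, by the symmetry, $-2\ln(2s)$ for $-\pi<\phi<0$), not $O(1)$. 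To be fair, this imprecision is already in the lemma as stated, and the paper's one-line proof does not address it either; it is also immaterial for the only place the lemma is used (justifying the contour deformation in Proposition~\ref{pro:wedge:main-contrib}, where only the terms growing at least linearly in $s$ matter). Still, you should either carry the $\pm2\ln(2s)$ term explicitly or weaken the remainder to $O(\ln s)$, rather than assert that the leftover terms are $O(1)$.
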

\begin{proof} The first statement follows from  
representations~(\ref{eq:S:p-1}) and~(\ref{eq:int:l0-pi}).
The second  follows from the first one and the relation 
$\overline{S(\bar p,\tau)}=S(p,\tau)$ valid for $p\in\C_0$.
\end{proof}
In the next lemma, we discuss $S(\cdot , \tau)$ on 
$\Gamma=\{1-i\R_+\}\cup\{1\}\cup\{1+\R_++i0\}\subset\C_0$. 
\begin{Le}\label{le:action} Let $\tau\ge \tau_n$.
\point Along $\Gamma$, \ $\im S(p,\tau)$ monotonously increases as $p$
moves away from $p=1$, and $\partial \im S(p,\tau)/\partial |p-1|>0$ when $p\ne1$;
\point Fix $b>0$. There is $C>0$ such that,
along $\Gamma$, for $|p-1|\le b$, one has  $\im S(p,\tau)>C|p-1|^{\frac32}$.
\end{Le}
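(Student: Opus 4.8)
The plan is to work directly from the formula $S_p(p,\tau)=2p(1-\tau)+l_0(p)-2\pi n$ of~\eqref{eq:S-prime}. Parametrize the two rays making up $\Gamma$ by $p=1-is$ (the downward one) and $p=1+s+i0$ (the one to the right), where $s=|p-1|\ge0$; since $S(1,\tau)$ is real by~\eqref{eq:Sat1}, we have $\im S(p,\tau)=\im\bigl(S(p,\tau)-S(1,\tau)\bigr)$ along $\Gamma$, and in particular $\im S(p,\tau)=0$ at $p=1$. For the first assertion I would differentiate along each ray. On the downward ray, $\partial\im S/\partial s=-\re S_p(1-is,\tau)$; using the symmetry $l_0(\bar p)=\overline{l_0(p)}$ from~\eqref{eq:l0-sym} to replace $\re l_0(1-is)$ by $\re l_0(1+is)$, invoking Lemma~\ref{le:conf-prop} to get $0<\re l_0(1+is)<\pi$ for $s>0$ (as $1+is\in\QV_1$), and rewriting $\pi-2\pi n=2(\tau_n-1)$ via~\eqref{eq:tau_N}, one gets $\re S_p(1-is,\tau)=2(\tau_n-\tau)+\bigl(\re l_0(1+is)-\pi\bigr)$, which is strictly negative because $\tau\ge\tau_n$. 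On the horizontal ray, $\partial\im S/\partial s=\im l_0(1+s+i0)$, and since by Lemma~\ref{le:conf-prop} the map $l_0$ sends $[1,+\infty)$ bijectively onto $\pi+i\R_+$, this is a strictly increasing function of $s$ vanishing at $s=0$, hence positive for $s>0$. Together with $\im S(1,\tau)=0$, this yields both the monotonicity and the strict inequality $\partial\im S/\partial|p-1|>0$ away from $p=1$.

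For the second assertion I would first handle small $|p-1|$ via the local expansion~\eqref{eq:S:p-1-2}, whose error term is $\tau$-independent. Substituting $1-p=is$ on the downward ray, where $(1-p)^{3/2}=s^{3/2}e^{3i\pi/4}$, and $1-p=-s-i0$ on the horizontal ray, where $(1-p)^{3/2}=is^{3/2}$, and using $\tau\ge\tau_n$ to drop the non-negative contribution of $2(\tau-\tau_n)\im(1-p)$, one finds $\im S(p,\tau)=\tfrac43 s^{3/2}+O(s^{5/2})$ and $\im S(p,\tau)=\tfrac{4\sqrt2}3 s^{3/2}+O(s^{5/2})$ respectively; so there are $b_0\in(0,\min(b,1)]$ and $c_1>0$, independent of $\tau$, with $\im S(p,\tau)\ge c_1|p-1|^{3/2}$ on $\Gamma$ for $|p-1|\le b_0$. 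For $b_0\le|p-1|\le b$ I would invoke the first assertion: $\im S(p,\tau)$ is then at least its value at the point of $\Gamma$ with $|p-1|=b_0$, which is positive and continuous in $\tau$ on the compact interval $[\tau_n,1]$, hence bounded below by some $c_0>0$; since $|p-1|^{3/2}\le b^{3/2}$ there, this gives $\im S(p,\tau)\ge(c_0/b^{3/2})|p-1|^{3/2}$. Taking $C=\min\{c_1,c_0/b^{3/2}\}$ finishes the proof.

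The computations here are short. The only points needing care are the bookkeeping of the branch of $(1-p)^{3/2}$ on the two boundary rays, so that its imaginary part comes out with the correct (positive) sign in each case, and the observation that the error in~\eqref{eq:S:p-1-2} is uniform in $\tau\in[\tau_n,1]$ — which holds because only the first two, exact, terms of that expansion depend on $\tau$. I do not expect any real obstacle beyond this.
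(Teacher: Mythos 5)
Your argument is correct and is essentially the paper's own proof: the monotonicity comes from the sign facts $\im l_0(p+i0)>0$ on $(1,\infty)$ and $\re l_0<\pi$ in $\C_0$ supplied by Lemma~\ref{le:conf-prop} (the paper states this in integrated form, you in differentiated form via~\eqref{eq:S-prime}, which is the same computation), and the lower bound $\im S>C|p-1|^{3/2}$ follows from the expansion~\eqref{eq:S:p-1-2} near $p=1$ combined with the first assertion, exactly as the paper indicates. Your extra care about the branch of $(1-p)^{3/2}$ on the two rays and the $\tau$-uniformity of the error term just fills in details the paper leaves implicit.
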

\begin{proof}
The first statement immediately follows from the following properties of $S$:
\\ 1) If $p>1$, then $\im S(p+i0,\tau)=\int_1^{p}\im l_0(p+i0)\,dp$, and, 
according to Lemma~\ref{le:conf-prop}, $\im l_0(p+i0)>0$ for $p>1$. 
\\ 2) If  $p\in 1-i\R_+$, then 
$\im S=-2(\tau-\tau_n) \im(p-1)-\int_{\im p}^0\re(l_0(1+is)-\pi)\,ds$, 
and, in view of Lemma~\ref{le:conf-prop}, \ $\re l_0(p)<\pi$ inside $\C_0$.
\\ The second statement follows the first one and 
representation~(\ref{eq:S:p-1-2}).
\end{proof} 
Now, we turn to the proof of Proposition~\ref{pro:wedge:main-contrib}.
\begin{proof} 
Lemma~\ref{le:path-deform} and estimate~\eqref{est:A-rough} imply that, 
for sufficiently small $\varepsilon$ and $\tau\ge\tau_n$, one can deform 
the integration path in~(\ref{eq:Psi-n:up-2}) to $\Gamma$.
Let $\Gamma_+=\{1\}\cup(1+\R_++i0)$, and let $\Gamma_-=\{1\}\cup(1-i\R_+)$.
Lemma~\ref{le:path-deform} also shows that, as $p\to\infty$, 
\begin{gather}
\im S(p,\tau)=2p\ln p+O(p),\quad p\in\Gamma_+,\\
\im S(p,\tau)=2(\tau-\tau_n+\pi/2)\,|\im p|+O(1),\quad p\in\Gamma_-.
\end{gather}
Using this and the first point of Lemma~\ref{le:action}, one proves that,
for any fixed $b>0$,  there is a $C>0$, such that, for sufficiently small 
$\varepsilon$, modulo $O(e^{-C/\varepsilon})$,  \ $\Psi_n$ equals  the 
right hand side  of~(\ref{eq:Psi-n:up-2}) with the integration path replaced 
by $\Gamma\cap\{|p-1|\le b\}$. Finally, choosing $b$ sufficiently small 
and using the second point of Lemma~\ref{le:action}, we prove that, 
modulo $O(e^{-C/\varepsilon^{\frac{3\alpha}2}})$, 
$\Psi_n$ equals to the right hand side of~(\ref{eq:Psi-n:up-2}) 
with the integration path replaced by $\Gamma\cap\{|p-1|\le \rho(\varepsilon)\}$.

Now, discuss the factor $A$ in the integrand in~(\ref{eq:Psi-n:up-2}).
In view of~(\ref{est:A-rough}),
  \begin{equation}
    \label{eq:A+}
    A(p)=e^{i\phi(\varepsilon)}\left(1+
    \frac{i}\varepsilon\int_1^p(L_0-l_0)\,dp+O(\varepsilon)\right),
\quad p\in\Gamma.
  \end{equation}
We use this representation for $p\in\Gamma_-$. For $p\in\Gamma_+$, 
we represent $L_0$ in the form $L_0=L_1+P$, see Section~\ref{sec:L1}, and get 
  \begin{equation}
    \label{eq:A-}
    A(p)=e^{i\phi(\varepsilon)}\left(1+
    \frac{i}\varepsilon\int_1^p(L_1-l_0)\,dp+
\frac{i}\varepsilon\int_1^pP(p)\,dp+O(\varepsilon)\right),
\quad p\in\Gamma_+.
  \end{equation} 
Estimate~\eqref{est:A-rough} also 
implies that $\phi(\varepsilon)=O(\varepsilon^{1/2})$.
Substituting~\eqref{eq:A-} and~\eqref{eq:A+} in~\eqref{eq:Psi-n:up-2} 
with the integration path replaced with 
$\Gamma\cap\{|p-1|\le \rho(\varepsilon)\}$ and with the corresponding error
estimate, we arrive at~\eqref{eq:TFRE} with   
\begin{equation}
  \label{eq:E}
  E=\int_{\Gamma\cap\{|p-1|\le \rho(\varepsilon)\}}
O(\varepsilon^{\frac12})e^{\frac{i}\varepsilon S(p,\tau)}\,dp
+O(e^{-C/\varepsilon^{\frac{3\alpha}2}}).
\end{equation}
Using the second point of Lemma~\ref{le:action}, we get
$|E|\le C\,\varepsilon^{\frac12}\,\int_0^\infty 
e^{-\frac{C}{\varepsilon}|p-1|^{\frac32}}\,dp\le 
C\,\varepsilon^{\frac76}$. 
This completes the proof of the proposition.
\end{proof} 
\subsection{The term ${\mathcal T}$}
Here, we prove that, in the case of~\eqref{eq:tau-ge-tau-n},  
as $\varepsilon\to 0$, \ 

\begin{equation}\label{T-T_0}
{\mathcal T}(x,\tau)={\mathcal T}_0(x,\tau)+
O\left(\varepsilon/(1+|z_n(\tau)|^{3})\right)
\end{equation}
with  $z_n$ and ${\mathcal T}_0$ given by~\eqref{as:T}.  
\begin{proof}
{\bf 1.} \ In the integral in~\eqref{eq:T-def}, we pass to the variable 
$z=z(p)$ as in Lemma~\ref{le:S-in-q}. The integration path turns into
$c(\varepsilon^{1/3-\alpha/2})$, where, for $r>0$,
$$ c(r)=[-ir,0]\cup[0,e^{i\pi/4}r]\subset -i\R_+\cup\{0\}\cup e^{i\pi/4}\R_+.$$
As a curve, $c(r)$ is oriented downwards. Since $0<\alpha<1/6$,
on the integration path, the expression 
$z^4/\varepsilon=O(\varepsilon^{1/3-2\alpha})$ is small, 
and, in view of~\eqref{eq:S-in-q} and~\eqref{mathcal-S},
$$\sin(p(z)x)\,e^{\frac{i}\varepsilon S(p(z),\tau)}=
e^{\frac{i}\varepsilon (S(1,\tau)+2 {\mathcal S}(z,\tau_n-\tau))}(\sin x+O(z^2)
+O(z^4/\varepsilon))$$ 
Therefore,
\begin{gather}\label{T:I}  
{\mathcal T}(x,\tau)=-\frac{2c_ne^{-i\pi/4}}{(\pi\varepsilon)^{1/2}}
(\sin x\cdot I_0+I_1+I_2/\varepsilon),\\
\nonumber
I_0=\int_{c} e^{\frac{2i}\varepsilon {\mathcal S}}\,z\,dz, \quad
I_1=\int_{c} e^{\frac{2i}\varepsilon {\mathcal S}}\,O(z^3)\,dz,\quad
I_2=\int_{c}e^{\frac{2i}\varepsilon{\mathcal S}}\,O(z^5)\,dz,
\end{gather}
where ${\mathcal S}= {\mathcal S}(z,\tau_n-\tau)$ and 
$c=c(\varepsilon^{1/3-\alpha/2})$.
\\
{\bf 2.} Along $c(\infty)$, one has 
$\left|e^{\frac{2i}\varepsilon{\mathcal  S}(z,\tau_n-\tau)}\right|\le e^{-C|z|^3/\varepsilon}$.
Therefore,
$I_0=\operatornamewithlimits\int_{c(\infty)}
e^{\frac{2i}\varepsilon{\mathcal  S}}\,z\,dz
+O(e^{-C\varepsilon^{-3\alpha/2}})$.
The last integral is computed as the analogous  integral from the proof of 
Proposition~\ref{le:leading-term-tau-less-than-tau-n}. This gives
\begin{equation}\label{eq:T:I0}
  I_0=-2^{-2/3}\sqrt{\pi} e^{i\pi/4} \varepsilon^{2/3} F(e^{i\pi/6}z_n(\tau))
+O(e^{-C\varepsilon^{-3\alpha/2}}).
\end{equation}
{\bf 3.} \ To estimate the integral $I_1$, we 
change the variable $z$ to $w=z\varepsilon^{-1/3}$ and obtain
$I_1=\varepsilon^{4/3} 
\operatornamewithlimits
\int_{c(\varepsilon^{-\alpha/2})}e^{2i{\mathcal S}(w,4^{1/3}z_n(\tau))} O(w^3)\,dw$.
Therefore,  if $-1\le z_n(\tau)\le 0$, then 
$I_1=O(\varepsilon^{4/3})$.
Let $z_n(\tau)\le -1$. Changing the variable $w$ to $u=-w/z_n(\tau)$,  we 
get
$ I_1=\varepsilon^{4/3} z_n^4(\tau)\int_{c(b)}
e^{-2iz_n^3{\mathcal S}(u,-4^{1/3})}O(u^3)\,du$ with
$b=-\varepsilon^{-\alpha/2}/z_n(\tau)$, 
and integrating twice by parts, we obtain 
$I_1= O(\varepsilon^{4/3} /z_n^{2})$.
\\
{\bf 4.} Estimating the integral $I_2$ similarly, one checks that
$I_2=O(\varepsilon^2)$  if $|z_n|\le 1$, and that $I_2=O(\varepsilon^2/|z_n|^{3})$
otherwise.
\\
{\bf 5.} \ 
Representation~\eqref{T:I}, formula~(\ref{eq:T:I0}) for $I_0$ and 
our estimates for $I_1$ and $I_2$ lead to~\eqref{T-T_0}.
\end{proof}
\subsection{The term ${\mathcal R}$}
Here, we prove that, under conditions~\eqref{eq:tau-ge-tau-n}, 
\begin{equation}\label{R-R_0}
{\mathcal R}(x,\tau)={\mathcal R}_0(x,\tau)+O(\varepsilon^{4/3}),
\quad \varepsilon\to 0,
\end{equation}
with ${\mathcal R}_0$ given by~\eqref {as:R}.
\begin{proof} 
{\bf 1.} \ Put
\begin{equation}
  \label{eq:Q:def}
  Q(p)=\frac1{\varepsilon^{\frac32}}\,\int_1^pP(q)\,dq.
\end{equation}
Representation~(\ref{for:P}) implies that, 
for $p\in\R$, as $\varepsilon\to 0$,
\begin{equation}
  \label{eq:Q:as}
  Q(p)=
\frac{e^{\frac{-i\pi}4}}{\pi}f\left(\frac{p-1}\varepsilon\right)+O(\varepsilon),\quad
f(p)=\sum_{k=0}^\infty f_k e^{2\pi i k p},
\end{equation}
where $f_k$ are given in~(\ref{eq:fk}).\\ 
{\bf 2.} \ Using the second point of 
Lemma~\ref{le:action},~\eqref{eq:Q:as} and  the representation
$\sin(px)=\sin x+O(p-1)$, one proves that
\begin{equation}
  \label{eq:R:1}
  {\mathcal R}(x,\tau)=
\frac{e^{it+\frac{i\pi}4}\sin x}{\pi^{\frac32}}\,\int_{1}^{1+\rho(\varepsilon)}
e^{\frac{i}\varepsilon S(p,\tau)}\,
f\left({\dsize\frac{p-1}\varepsilon}\right)\,dp+
O(\varepsilon^{\frac43}).
\end{equation} 
{\bf 3.} \ Using~(\ref{eq:S:p-1-2}), we rewrite~\eqref{eq:R:1} in the form
\begin{equation*}
  {\mathcal R}(x,\tau)=
\frac{c_n\sin x}{\pi^{\frac32}}\,\int_0^{\rho}
e^{\frac{i}\varepsilon \left(2(\tau_n-\tau)\,t+i\frac{4\sqrt{2}}{3}t^{\frac32}+
(1-\tau)\,t^2+O(t^{\frac52})\right)}\,f(t/\varepsilon)\,dt+O(\varepsilon^{\frac43}).
\end{equation*} 
{\bf 4.} \ As, in the definition of $\rho$, $0<\alpha<1/6$, 
the last integral can be transformed to the form 
\begin{equation*}
\int_0^{\rho}
e^{\frac{i}\varepsilon 
\left(2(\tau_n-\tau)\,t+i\frac{4\sqrt{2}}{3}t^{\frac32}\right)}\,\left(1+
i(1-\tau)\,t^2/\varepsilon+O(t^{\frac52}/\varepsilon)+
O(t^4/\varepsilon^2)\right)\,f(t/\varepsilon)\,dt.
\end{equation*} 
As, for $c>0$ and $l>0$, both the integrals
$\int_0^{\infty} e^{-\frac{c}{\varepsilon}t^{\frac32}}
t^{\frac52}/\varepsilon\,dt$ and
$\int_0^{\infty} e^{-\frac{c}{\varepsilon}t^{\frac32}}\,
t^{4}/\varepsilon^2\,dt$ are $O(\varepsilon^{\frac43})$,
and 
$\int_\rho^{\infty} e^{-\frac{c}{\varepsilon}t^{\frac32}}
t^l\,dt=O(e^{-C\varepsilon^{-3\alpha/2}})$,
we get 
\begin{equation*}
  \label{eq:R:3}
  {\mathcal R}(x,\tau)=
\frac{c_n\sin x}{\pi^{\frac32}}\,\int_0^{\infty}
e^{\frac{2i}\varepsilon \left((\tau_n-\tau)\,t+
i\frac{2\sqrt{2}}{3}t^{\frac32}\right)}\,(1+
i(1-\tau)\,t^2/\varepsilon)\,f(t/\varepsilon)\,dt+O(\varepsilon^{\frac43}).
\end{equation*} 
{\bf 5.} \ Consider the integral
$I_1=\int_0^{\infty}
e^{\frac{2i}\varepsilon \left((\tau_n-\tau)\,t+
i\frac{2\sqrt{2}}{3}t^{\frac32}\right)}\,f(t/\varepsilon)\,dt$.
Representing the function $f$ by its Fourier series and using the definition of 
$\tau_N$, see~(\ref{eq:tau_N}), we get
%
$ I_1=\sum_{k=0}^\infty f_k\int_0^{\infty}
e^{\frac{2i}\varepsilon \left((\tau_{n-k}-\tau)\,t+
i\frac{2\sqrt{2}}{3}t^{\frac32}\right)}\,dt$.
Changing  the variable $t\mapsto u=2^{\frac56}\varepsilon^{-\frac13}t^{\frac12}$,
and using the definition of the function $a$, see~(\ref{eq:a(lambda)}),
we get finally
\begin{equation*}
  I_1=\left(\frac\varepsilon2\right)^{\frac23}\,\sum_{k=0}^\infty f_k\,
a\left(\frac{\tau_{n-k}-\tau}{(4\varepsilon)^{\frac13}}\right).
\end{equation*}
{\bf 6.} \ Similarly, one proves that
\begin{equation*}
I_2:=\frac1\varepsilon\int_0^{\infty}
e^{\frac{2i}\varepsilon \left((\tau_n-\tau)\,t+
i\frac{2\sqrt{2}}{3}t^{\frac32}\right)}\,f(t/\varepsilon)\,t^2\,dt=
-\frac\varepsilon{16}\,\sum_{k=0}^\infty f_k\,
a''\left(\frac{\tau_{n-k}-\tau}{(4\varepsilon)^{\frac13}}\right). 
\end{equation*}
{\bf 7.} \ The last three steps lead to the formula~\eqref{R-R_0}.
\end{proof}
\subsection{The term ${\mathcal G}$}

Here, we assume that the condition~\eqref{eq:tau-ge-tau-n} is satisfied, and
we check that, as $\varepsilon\to 0$,  
\begin{equation}
  \label{as:G_G0}
\begin{split}
  {\mathcal G}(x,\tau)&=O(\varepsilon^{\frac23}),\quad \text{and}\\
  {\mathcal G}(x,\tau)&={\mathcal G}_0(x,\tau)+
O\left(\varepsilon^{2/3}/z^{5/2}\right) \quad \text{if} \quad 
z=(\tau-\tau_n)/\varepsilon^{1/3}\to\infty,
\end{split}
\end{equation}
${\mathcal G}_0$ being described in~\eqref{as:F:tau-ge-tau-n:2}.
\begin{proof} 
According to~(\ref{eq:F-def}), we can write
${\mathcal G}={\mathcal G}_-+{\mathcal G}_+$, where 
\begin{equation}\label{eq:Fpm}
{\mathcal G}_-=\int_{1-i\rho(\varepsilon)}^1g_-(p,\tau) 
e^{\frac{iS(p,\tau)}\varepsilon}\,dp,\quad
{\mathcal G}_+=\int_1^{1+\rho(\varepsilon)}g_+(p,\tau) 
e^{\frac{iS(p,\tau)}\varepsilon}\,dp,
\end{equation}
$g_-$ and $g_+$ are functions analytic in 
$\C_0=\C\setminus\{p\in\R:\,|p|\ge 1\}$ and 
$\C_1=\C\setminus\{p\in\R:\,p\le 1\}$  respectively and
satisfy the estimates 
\begin{equation}\label{eq:fpm}
  g_-=O(1), \ p\in [1,1-i\rho(\varepsilon)],\quad g_+=O(1), \ 
1\le \re p\le 2,\ -1\le \im p\le 0,
\end{equation}
that follow from estimate~(\ref{est:A-rough}) for the integral
of $L_0-l_0$ and its analogue for the integral of $L_1-l_0$.

The estimate ${\mathcal G}=O(\varepsilon^{\frac23})$ follows directly 
from~\eqref{eq:Fpm},~\eqref{eq:fpm} and 
the second point of Lemma~\ref{le:action}.  
The key to the representation in the second line of~\eqref{as:G_G0}
is 
\begin{Le} Pick $0<\beta<1$. Set $\rho_0=
\frac{\varepsilon}{\tau-\tau_n}\,z^{\beta}$. If $\varepsilon\to0$ and 
$z\to\infty$, then
\begin{equation}
  \label{eq:Gpm:1}
{\mathcal G}_\pm=\mp e^{\frac{iS(1,\tau)}\varepsilon}
\int_{1-i\rho_0}^1g_\pm e^{-\frac{2i(\tau-\tau_n)(p-1)}\varepsilon}\,dp+
O\left(\varepsilon^{\frac23}z^{-\frac52}\right).
\end{equation}
\end{Le}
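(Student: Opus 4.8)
The plan is to show that, for both signs, ${\mathcal G}_\pm$ concentrates on the sub‑segment $|p-1|\le\rho_0$, where the action is essentially linear, and that everything discarded is $O(\varepsilon^{2/3}z^{-5/2})$ or smaller. I use Lemma~\ref{le:S-in-q} in the form $S(p,\tau)=S(1,\tau)-2(\tau-\tau_n)(p-1)+r(p,\tau)$ with $r(p,\tau)=O(|p-1|^{3/2})$ near $p=1$ (the leading remainder being $\tfrac{4\sqrt2}3(1-p)^{3/2}$), together with the fact that on $[1-i\rho_0,1]$ one has $|p-1|\le\rho_0$ and $\rho_0^{3/2}/\varepsilon=z^{3(\beta-1)/2}\to0$, so $|e^{\frac i\varepsilon r}-1|\le C|p-1|^{3/2}/\varepsilon$ there. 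The one estimate that produces the error term in the lemma is
\[\tfrac1\varepsilon\int_0^\infty s^{3/2}e^{-2(\tau-\tau_n)s/\varepsilon}\,ds=C\,\varepsilon^{3/2}(\tau-\tau_n)^{-5/2}=C\,\varepsilon^{2/3}z^{-5/2},\]
using $(\tau-\tau_n)^{5/2}=\varepsilon^{5/6}z^{5/2}$.

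For ${\mathcal G}_-$ I argue directly on $p=1-is$, $s\in[0,\rho(\varepsilon)]$. On the tail $s\in[\rho_0,\rho(\varepsilon)]$, the expansion of Lemma~\ref{le:S-in-q} (cf.\ the first point of Lemma~\ref{le:action}) gives $\im S(1-is,\tau)\ge2(\tau-\tau_n)s$ for small $s$ (the cubic correction being $+\tfrac43 s^{3/2}$), hence the tail is $\le C\int_{\rho_0}^\infty e^{-2(\tau-\tau_n)s/\varepsilon}\,ds=\tfrac{C\varepsilon}{2(\tau-\tau_n)}e^{-2z^\beta}$; since $\varepsilon/(\tau-\tau_n)=\varepsilon^{2/3}z^{-1}$ and $e^{-2z^\beta}$ decays faster than any power of $z$, this is $o(\varepsilon^{2/3}z^{-5/2})$. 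On $[1-i\rho_0,1]$ I replace $e^{\frac i\varepsilon S}$ by $e^{\frac i\varepsilon S(1,\tau)}e^{-\frac{2i}\varepsilon(\tau-\tau_n)(p-1)}$: since $\im r\ge0$ there one has $|e^{\frac i\varepsilon r}-1|\le|r|/\varepsilon\le C|p-1|^{3/2}/\varepsilon$, and with $g_-=O(1)$ and $|e^{-\frac{2i}\varepsilon(\tau-\tau_n)(p-1)}|=e^{-2(\tau-\tau_n)s/\varepsilon}$ the replacement error is $\le\tfrac C\varepsilon\int_0^\infty s^{3/2}e^{-2(\tau-\tau_n)s/\varepsilon}\,ds=O(\varepsilon^{2/3}z^{-5/2})$. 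This is ${\mathcal G}_-$ with the lower (that is, $+$) sign.

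For ${\mathcal G}_+$ the integrand $g_+e^{\frac i\varepsilon S}$ is analytic in $\C_1$ (on $\Gamma_+$ the symbol $l_0$ is read as a boundary value, so that the relevant branch $S_1$ of the action and $L_1-l_0$ are analytic in $\C_1$). By Cauchy's theorem I deform $[1,1+\rho(\varepsilon)]$ inside $\C_1$ onto the segment $[1-i\rho_0,1]$ together with the return contour $C=[1-i\rho_0,\,1+\rho(\varepsilon)-i\rho_0]\cup[1+\rho(\varepsilon)-i\rho_0,\,1+\rho(\varepsilon)]$, obtaining ${\mathcal G}_+=-\int_{1-i\rho_0}^1g_+e^{\frac i\varepsilon S_1}\,dp+\int_Cg_+e^{\frac i\varepsilon S_1}\,dp$ (the integrand, though not analytic at $p=1$, is bounded there, so Cauchy's theorem still applies). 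On $[1-i\rho_0,1]$ the same linearization, now with $|e^{\frac i\varepsilon r}-1|\le C|r|/\varepsilon$ legitimate because $|r|/\varepsilon\le C\rho_0^{3/2}/\varepsilon\to0$, produces $-e^{\frac i\varepsilon S(1,\tau)}\int_{1-i\rho_0}^1g_+e^{-\frac{2i}\varepsilon(\tau-\tau_n)(p-1)}\,dp$ up to $O(\varepsilon^{2/3}z^{-5/2})$, which is ${\mathcal G}_+$ with the upper ($-$) sign. For $\int_C$: on the horizontal part $p=1+t-i\rho_0$ the linear term of $S_1$ contributes the $t$‑independent amount $2(\tau-\tau_n)\rho_0=2\varepsilon z^\beta$ to $\im S_1$, while by Lemma~\ref{le:path-deform} the $(1-p)^{3/2}$‑part is $\ge-C\rho_0^{3/2}$ for $t\lesssim\rho_0$ and positive of order $t^{3/2}$ for $t\gg\rho_0$; splitting at $t\sim\varepsilon^{2/3}z^{\beta/2}$ and using $\int_0^\infty e^{-ct^{3/2}/\varepsilon}\,dt=O(\varepsilon^{2/3})$ bounds the horizontal part by $O(\varepsilon^{2/3}z^{\beta/2}e^{-z^\beta})=O(\varepsilon^{2/3}z^{-5/2})$; on the short vertical part $\re p=1+\rho(\varepsilon)$, Lemma~\ref{le:path-deform}/\ref{le:action} gives $\im S_1\ge C\rho(\varepsilon)^{3/2}=C\varepsilon^{1-3\alpha/2}$, so this part is exponentially small in $\varepsilon^{-3\alpha/2}$.

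The genuine difficulty is this last step: keeping track of the branch of the action along $C$ — it is $S_1$, obtained from $S$ by continuation across $\C_+$, and differs from the branch used for ${\mathcal G}_-$ — and establishing the lower bounds for $\im S_1$ along $C$, which is precisely where Lemmas~\ref{le:path-deform} and~\ref{le:action} enter and where one must be careful with the orientation of the descent directions at the branch point $p=1$. By contrast, the localization to $|p-1|\le\rho_0$ and the linearization are routine once the displayed $\varepsilon^{2/3}z^{-5/2}$ estimate is at hand.
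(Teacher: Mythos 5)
Your proof is correct and follows essentially the paper's own route: deform within $\C_1$ using the continued branch $S_1$ for ${\mathcal G}_+$, localize to $|p-1|\lesssim\rho_0$ (the tail costing $O(\varepsilon^{2/3}z^{-1}e^{-2z^{\beta}})$), and linearize the action with the error $\tfrac1\varepsilon\int_0^\infty s^{3/2}e^{-2(\tau-\tau_n)s/\varepsilon}\,ds=O(\varepsilon^{2/3}z^{-5/2})$, which is exactly the paper's estimate. The only (inessential) difference is the intermediate contour for ${\mathcal G}_+$: you use a rectangular Cauchy detour at depth $\rho_0$ and linearize on the vertical side, whereas the paper first rotates $[1,1+\rho]$ to the ray $1+e^{-i\pi/4}\R_+$, truncates at $\rho_0$, linearizes there, and only then rotates the linearized integral onto $[1,1-i\rho_0]$; your citations of Lemmas~\ref{le:path-deform} and~\ref{le:action} on the return contour should really be to the local expansion~\eqref{eq:S:p-1-2} continued below the cut, but the bounds you state are the right ones.
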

\begin{proof}
Let us explain the way to obtain~\eqref{eq:Gpm:1} for ${\mathcal G}_+$. 
For the  analytic continuation of $S$ from $C_+$ to $\C_1$ across $1+\R_+$, 
we keep the ``old'' notation $S$.  It suffices to justify the equalities
\begin{align*}
{\mathcal G}_+(x,\tau)
&=\int_1^{1+e^{-i\pi/4}\rho} g_+ e^{\frac{iS(p,\tau)}\varepsilon}\,dp+
O\left(e^{-C\varepsilon^{-3\alpha/2}}\right)\\
&=\int_1^{1+e^{-i\pi/4}\rho_0} g_+ e^{\frac{iS(p,\tau)}\varepsilon}\,dp+
O\left(\varepsilon^{\frac23}e^{-Cz^{\beta}}+e^{-C\varepsilon^{-3\alpha/2}}\right)\\
&=\int_1^{1+e^{-i\pi/4}\rho_0} g_+
e^{\frac{i}\varepsilon\left(S(1,\tau)-2(\tau-\tau_n)(p-1)\right)} \,dp+
O\left(\varepsilon^{\frac23}z^{-\frac52}\right)\\
&=\int_1^{1-i\rho_0} g_+e^{\frac{i}\varepsilon\left(S(1,\tau)-2(\tau-\tau_n)(p-1)\right)}\,dp+
O\left(\varepsilon^{\frac23}z^{-\frac52}\right).
\end{align*}
One proves them using~\eqref{eq:S:p-1-2},~\eqref{eq:fpm} and 
the definitions of $\rho$ and $\rho_0$.  
Similarly one obtains the announced representation for ${\mathcal G}_-$:
first, one ``replaces'' in~\eqref{eq:Fpm}  $\rho$ with $\rho_0$, and 
then, one ``replaces'' $S(p,\tau)$ with $S(1,\tau)-2i(\tau-\tau_n)(p-1)$.
We omit further details.
\end{proof}

Using~(\ref{eq:Gpm:1}) and the definition of $g_-$, 
we obtain
\begin{gather}
  \label{eq:F-}
  {\mathcal G}_-=\frac{c_ne^{i\pi/4}}{\pi^{1/2}}\;{\mathcal G}_-^0+
O\left(\varepsilon^{2/3}/z^{5/2}\right),\\
\label{eq:F-0}
{\mathcal G}_-^0=\int_{1-i\rho_0}^1 e^{-\frac{2i(\tau-\tau_n)(p-1)}\varepsilon}\,\sin(px)\,
\,\frac1{\varepsilon^{\frac32}}\int_1^p(L_0(q)-l_0(q))\,dq\,dp.
\end{gather} 
Let us study  ${\mathcal G}_-^0$ as $\varepsilon\to 0$ and $z\to\infty$. 
As $x$ is bounded, in view of~(\ref{est:A-rough}), we have 
\begin{equation*}
  {\mathcal G}_-^0=\sin x\,\int_{1-i\rho_0}^1 e^{-\frac{2i(\tau-\tau_n)(p-1)}\varepsilon}\,
\,\frac1{\varepsilon^{3/2}}\int_1^p(L_0(q)-l_0(q))\,dq\,dp+
O\left(\varepsilon^{4/3}/z^{2}\right).
\end{equation*}
In view of~(\ref{eq:l0near1}) and~(\ref{eq:L0near1}), we get
\begin{equation*}
  L_0(p)-l_0(p)=\sqrt{2\varepsilon}\,f((p-1)/\varepsilon)+
O(\varepsilon^{3/2}+|p-1|^{3/2}),\quad
f(s)=\zeta(s)+2\sqrt{-s},
\end{equation*}
where the branch of $s\mapsto\sqrt{-s}$ is analytic in $\C$ 
cut along  $\R_+$ and is positive when $s<0$. 
This observation implies that
\begin{equation*}
\begin{split}
  {\mathcal G}_-^0=\varepsilon\sqrt{2}\sin x\,\int_{-i\rho_0/\varepsilon}^0& 
e^{-2i(\tau-\tau_n)t}\,\,\int_0^tf(s)\,ds\,dt\\
&+\varepsilon^2
\int_{-i\rho_0/\varepsilon}^0 e^{-2i(\tau-\tau_n)t}\,O(|t|+|t|^{5/2})\,dt
+O\left(\varepsilon^{4/3}/z^{2}\right).
\end{split}
\end{equation*}
Integrating by parts in the first integral and changing the variable
$t\mapsto (\tau-\tau_n)t$ in the second one, we easily get
\begin{equation*}
\begin{split}
  {\mathcal G}_-^0=\frac{\varepsilon\sqrt{2}\sin x}{2i(\tau-\tau_n)}\,&
\left(e^{-\frac{2\rho_0(\tau-\tau_n)}\varepsilon}\,
\,\int_0^{-i\rho_0/\varepsilon}f(s)\,ds+\int_{-i\rho_0/\varepsilon}^0e^{-2i(\tau-\tau_n)t}\,
\,f(t)dt\right)\\
&+O\left(\varepsilon^{4/3}/z^{2}+
\varepsilon^{5/6}/z^{7/2}\right).
\end{split}
\end{equation*}
In view of~(\ref{eq:zeta-as}), $\int_0^{-i\rho_0/\varepsilon}f(s)\,ds$
is bounded, and, as $\frac{\rho_0(\tau-\tau_n)}\varepsilon
 =z^{\beta}$,
the first term in the brackets is $O(e^{-2z^{\beta}})$. 
Estimate~(\ref{eq:zeta-as}) implies that 
$\int_{-i\infty}^{-i\rho_0/\varepsilon}e^{-2i(\tau-\tau_n)t}g(t)dt=
O(\sqrt{\tau-\tau_n}z^{-3\beta/2}e^{-2z^{\beta}})=
O(e^{-2z^{\beta}})$.
Therefore,
\begin{equation*}
  {\mathcal G}_-^0=\frac{\varepsilon\sqrt{2}\sin x}{2i(\tau-\tau_n)}\,
\operatornamewithlimits\int_{-i\infty}^0e^{-2i(\tau-\tau_n)t}\,f(t)dt+
O\left(\varepsilon^{\frac23}e^{-2z^{2\beta}}+
\varepsilon^{\frac43}z^{-2}+
\varepsilon^{\frac56}z^{-\frac72}\right).
\end{equation*}
Substituting this representation into~(\ref{eq:F-}), we get finally
\begin{equation}\label{eq:last}
 {\mathcal G}_-=\frac{c_ne^{i\pi/4}\varepsilon \sin x}
{\sqrt{2\pi}(\tau-\tau_n)}\,\int_{0}^\infty e^{-2(\tau-\tau_n)t}\,
\left(\zeta(-it)+2e^{i\pi/4}\sqrt{t}\,\right)\,dt
+O\left(\varepsilon^{2/3}/z^{5/2}\right),
\end{equation}
where $\sqrt{t}\ge 0$. The term ${\mathcal G}_+$ is analyzed similarly. It is described 
by~\eqref{eq:last} where  $\zeta(-it)$  is replaced with 
$i\zeta(it)$. In view of~(\ref{eq:zeta}), on the integration path 
$\zeta(-it)=\overline{\zeta(it)}$, and, as 
${\mathcal G}={\mathcal G}_-+{\mathcal G}_+$, we come to~\eqref{as:G_G0}.
\end{proof}
\subsection{Proof of Theorem~\ref{th:final}}
The theorem follows from Proposition~\ref{pro:wedge:main-contrib}
and representations~\eqref{T-T_0},~\eqref{R-R_0} and~\eqref{as:G_G0}.
\bibliographystyle{abbrv}

\end{document}